\newtheorem{theorem}{Theorem}
\newtheorem{axiom}[theorem]{Axiom}
\newtheorem{conjecture}[theorem]{Conjecture}
\newtheorem{corollary}[theorem]{Corollary}
\newtheorem{definition}[theorem]{Definition}
\newtheorem{example}[theorem]{Example}
\newtheorem{exercise}[theorem]{Exercise}
\newtheorem{lemma}[theorem]{Lemma}
\newtheorem{proposition}[theorem]{Proposition}
\newtheorem{remark}[theorem]{Remark}
\newenvironment{proof}[1][Proof]{\noindent\textbf{#1.} }{\ \rule{0.5em}{0.5em}}
\def\NAT@def@citea{\def\@citea{\NAT@separator}}
\chardef\@x10\chardef\@xv60
\def\tcitime{
\def\@time{%
  \@minute\time\@hour\@minute\divide\@hour\@xv
  \ifnum\@hour<\@x 0\fi\the\@hour:%
  \multiply\@hour\@xv\advance\@minute-\@hour
  \ifnum\@minute<\@x 0\fi\the\@minute
  }}%
\def\x@hyperref#1#2#3{%
   \catcode`\~ = 12
   \catcode`\$ = 12
   \catcode`\_ = 12
   \catcode`\# = 12
   \catcode`\& = 12
   \y@hyperref{#1}{#2}{#3}%
}
\def\y@hyperref#1#2#3#4{%
   #2\ref{#4}#3
   \catcode`\~ = 13
   \catcode`\$ = 3
   \catcode`\_ = 8
   \catcode`\# = 6
   \catcode`\& = 4
}
\def\QCTOpt[#1]#2{%
  \def\QCTOptB{#1}
  \def\QCTOptA{#2}
}
\def\QCTNOpt#1{%
  \def\QCTOptA{#1}
  \let\QCTOptB\empty
}
\def\Qct{%
  \@ifnextchar[{%
    \QCTOpt}{\QCTNOpt}
}
\def\QCBOpt[#1]#2{%
  \def\QCBOptB{#1}%
  \def\QCBOptA{#2}%
}
\def\QCBNOpt#1{%
  \def\QCBOptA{#1}%
  \let\QCBOptB\empty
}
\def\Qcb{%
  \@ifnextchar[{%
    \QCBOpt}{\QCBNOpt}%
}
\def\PrepCapArgs{%
  \ifx\QCBOptA\empty
    \ifx\QCTOptA\empty
      {}%
    \else
      \ifx\QCTOptB\empty
        {\QCTOptA}%
      \else
        [\QCTOptB]{\QCTOptA}%
      \fi
    \fi
  \else
    \ifx\QCBOptA\empty
      {}%
    \else
      \ifx\QCBOptB\empty
        {\QCBOptA}%
      \else
        [\QCBOptB]{\QCBOptA}%
      \fi
    \fi
  \fi
}
\def\GRAPHICSPS#1{%
 \ifcase\GRAPHICSTYPE
   \special{ps: #1}%
 \or
   \special{language "PS", include "#1"}%
 \fi
}%
\def\graffile#1#2#3#4{%
    \bgroup
	   \@inlabelfalse
       \leavevmode
       \@ifundefined{bbl@deactivate}{\def~{\string~}}{\activesoff}%
        \raise -#4 \BOXTHEFRAME{%
           \hbox to #2{\raise #3\hbox to #2{\null #1\hfil}}}%
    \egroup
}%
\def\draftbox#1#2#3#4{%
 \leavevmode\raise -#4 \hbox{%
  \frame{\rlap{\protect\tiny #1}\hbox to #2%
   {\vrule height#3 width\z@ depth\z@\hfil}%
  }%
 }%
}%
\let\nographics=\@msidraft
\newif\ifwasdraft
\def\GRAPHIC#1#2#3#4#5{%
   \ifnum\@msidraft=\@ne\draftbox{#2}{#3}{#4}{#5}%
   \else\graffile{#1}{#3}{#4}{#5}%
   \fi
}
\def\addtoLaTeXparams#1{%
    \edef\LaTeXparams{\LaTeXparams #1}}%
\newif\ifBoxFrame \BoxFramefalse
\newif\ifOverFrame \OverFramefalse
\newif\ifUnderFrame \UnderFramefalse
\def\BOXTHEFRAME#1{%
   \hbox{%
      \ifBoxFrame
         \frame{#1}%
      \else
         {#1}%
      \fi
   }%
}
\def\doFRAMEparams#1{\BoxFramefalse\OverFramefalse\UnderFramefalse\readFRAMEparams#1\end}%
\def\readFRAMEparams#1{%
 \ifx#1\end%
  \let\next=\relax
  \else
  \ifx#1i\dispkind=\z@\fi
  \ifx#1d\dispkind=\@ne\fi
  \ifx#1f\dispkind=\tw@\fi
  \ifx#1t\addtoLaTeXparams{t}\fi
  \ifx#1b\addtoLaTeXparams{b}\fi
  \ifx#1p\addtoLaTeXparams{p}\fi
  \ifx#1h\addtoLaTeXparams{h}\fi
  \ifx#1X\BoxFrametrue\fi
  \ifx#1O\OverFrametrue\fi
  \ifx#1U\UnderFrametrue\fi
  \ifx#1w
    \ifnum\@msidraft=1\wasdrafttrue\else\wasdraftfalse\fi
    \@msidraft=\@ne
  \fi
  \let\next=\readFRAMEparams
  \fi
 \next
 }%
\def\IFRAME#1#2#3#4#5#6{%
      \bgroup
      \let\QCTOptA\empty
      \let\QCTOptB\empty
      \let\QCBOptA\empty
      \let\QCBOptB\empty
      #6%
      \parindent=0pt
      \leftskip=0pt
      \rightskip=0pt
      \setbox0=\hbox{\QCBOptA}%
      \@tempdima=#1\relax
      \ifOverFrame
          \typeout{This is not implemented yet}%
          \show\HELP
      \else
         \ifdim\wd0>\@tempdima
            \advance\@tempdima by \@tempdima
            \ifdim\wd0 >\@tempdima
               \setbox1 =\vbox{%
                  \unskip\hbox to \@tempdima{\hfill\GRAPHIC{#5}{#4}{#1}{#2}{#3}\hfill}%
                  \unskip\hbox to \@tempdima{\parbox[b]{\@tempdima}{\QCBOptA}}%
               }%
               \wd1=\@tempdima
            \else
               \textwidth=\wd0
               \setbox1 =\vbox{%
                 \noindent\hbox to \wd0{\hfill\GRAPHIC{#5}{#4}{#1}{#2}{#3}\hfill}\\%
                 \noindent\hbox{\QCBOptA}%
               }%
               \wd1=\wd0
            \fi
         \else
            \ifdim\wd0>0pt
              \hsize=\@tempdima
              \setbox1=\vbox{%
                \unskip\GRAPHIC{#5}{#4}{#1}{#2}{0pt}%
                \break
                \unskip\hbox to \@tempdima{\hfill \QCBOptA\hfill}%
              }%
              \wd1=\@tempdima
           \else
              \hsize=\@tempdima
              \setbox1=\vbox{%
                \unskip\GRAPHIC{#5}{#4}{#1}{#2}{0pt}%
              }%
              \wd1=\@tempdima
           \fi
         \fi
         \@tempdimb=\ht1
         \advance\@tempdimb by -#2
         \advance\@tempdimb by #3
         \leavevmode
         \raise -\@tempdimb \hbox{\box1}%
      \fi
      \egroup%
}%
\def\DFRAME#1#2#3#4#5{%
  \hfil\break
  \bgroup
     \leftskip\@flushglue
	 \rightskip\@flushglue
	 \parindent\z@
	 \parfillskip\z@skip
     \let\QCTOptA\empty
     \let\QCTOptB\empty
     \let\QCBOptA\empty
     \let\QCBOptB\empty
	 \vbox\bgroup
        \ifOverFrame 
           #5\QCTOptA\par
        \fi
        \GRAPHIC{#4}{#3}{#1}{#2}{\z@}%
        \ifUnderFrame 
           \break#5\QCBOptA
        \fi
	 \egroup
   \egroup
   \break
}%
\def\FFRAME#1#2#3#4#5#6#7{%
  \@ifundefined{floatstyle}
    {
     \begin{figure}[#1]%
    }
    {
	 \ifx#1h
      \begin{figure}[H]%
	 \else
      \begin{figure}[#1]%
	 \fi
	}
  \let\QCTOptA\empty
  \let\QCTOptB\empty
  \let\QCBOptA\empty
  \let\QCBOptB\empty
  \ifOverFrame
    #4
    \ifx\QCTOptA\empty
    \else
      \ifx\QCTOptB\empty
        \caption{\QCTOptA}%
      \else
        \caption[\QCTOptB]{\QCTOptA}%
      \fi
    \fi
    \ifUnderFrame\else
      \label{#5}%
    \fi
  \else
    \UnderFrametrue%
  \fi
  \begin{center}\GRAPHIC{#7}{#6}{#2}{#3}{\z@}\end{center}%
  \ifUnderFrame
    #4
    \ifx\QCBOptA\empty
      \caption{}%
    \else
      \ifx\QCBOptB\empty
        \caption{\QCBOptA}%
      \else
        \caption[\QCBOptB]{\QCBOptA}%
      \fi
    \fi
    \label{#5}%
  \fi
  \end{figure}%
 }%
\def\makeactives{
  \catcode`\"=\active
  \catcode`\;=\active
  \catcode`\:=\active
  \catcode`\'=\active
  \catcode`\~=\active
}
   \gdef\activesoff{%
      \def"{\string"}
      \def;{\string;}
      \def:{\string:}
      \def'{\string'}
      \def~{\string~}
    }
\def\FRAME#1#2#3#4#5#6#7#8{%
 \bgroup
 \ifnum\@msidraft=\@ne
   \wasdrafttrue
 \else
   \wasdraftfalse%
 \fi
 \def\LaTeXparams{}%
 \dispkind=\z@
 \def\LaTeXparams{}%
 \doFRAMEparams{#1}%
 \ifnum\dispkind=\z@\IFRAME{#2}{#3}{#4}{#7}{#8}{#5}\else
  \ifnum\dispkind=\@ne\DFRAME{#2}{#3}{#7}{#8}{#5}\else
   \ifnum\dispkind=\tw@
    \edef\@tempa{\noexpand\FFRAME{\LaTeXparams}}%
    \@tempa{#2}{#3}{#5}{#6}{#7}{#8}%
    \fi
   \fi
  \fi
  \ifwasdraft\@msidraft=1\else\@msidraft=0\fi{}%
  \egroup
 }%
\def\TEXUX#1{"texux"}
\long\def\QQQ#1#2{%
     \long\expandafter\def\csname#1\endcsname{#2}}%
\long\def\QQA#1#2{}%
\def\QTR#1#2{{\csname#1\endcsname #2}}
\def\EXPAND#1[#2]#3{}%
\def\NOEXPAND#1[#2]#3{}%
\def\LaTeXparent#1{}%
\def\ChildStyles#1{}%
\def\ChildDefaults#1{}%
\def\QTagDef#1#2#3{}%
  \providecommand{\UNICODE}[2][]{\protect\rule{.1in}{.1in}}
  \providecommand{\U}[1]{\protect\rule{.1in}{.1in}}
\def\QQfnmark#1{\footnotemark}
 \def\abstract{%
  \if@twocolumn
   \section*{Abstract (Not appropriate in this style!)}%
   \else \small 
   \begin{center}{\bf Abstract\vspace{-.5em}\vspace{\z@}}\end{center}%
   \quotation 
   \fi
  }%
   \def\registered{\relax\ifmmode{}\r@gistered
                    \else$\m@th\r@gistered$\fi}%
 \def\r@gistered{^{\ooalign
  {\hfil\raise.07ex\hbox{$\scriptstyle\rm\text{R}$}\hfil\crcr
  \mathhexbox20D}}}}{}%
\newdimen\theight
\def\newfmtname{LaTeX2e}
  \DeclareOldFontCommand{\rm}{\normalfont\rmfamily}{\mathrm}
  \DeclareOldFontCommand{\sf}{\normalfont\sffamily}{\mathsf}
  \DeclareOldFontCommand{\tt}{\normalfont\ttfamily}{\mathtt}
  \DeclareOldFontCommand{\bf}{\normalfont\bfseries}{\mathbf}
  \DeclareOldFontCommand{\it}{\normalfont\itshape}{\mathit}
  \DeclareOldFontCommand{\sl}{\normalfont\slshape}{\@nomath\sl}
  \DeclareOldFontCommand{\sc}{\normalfont\scshape}{\@nomath\sc}
\def\alpha{{\Greekmath 010B}}%
\def\beta{{\Greekmath 010C}}%
\def\gamma{{\Greekmath 010D}}%
\def\delta{{\Greekmath 010E}}%
\def\epsilon{{\Greekmath 010F}}%
\def\zeta{{\Greekmath 0110}}%
\def\eta{{\Greekmath 0111}}%
\def\theta{{\Greekmath 0112}}%
\def\iota{{\Greekmath 0113}}%
\def\kappa{{\Greekmath 0114}}%
\def\lambda{{\Greekmath 0115}}%
\def\mu{{\Greekmath 0116}}%
\def\nu{{\Greekmath 0117}}%
\def\xi{{\Greekmath 0118}}%
\def\pi{{\Greekmath 0119}}%
\def\rho{{\Greekmath 011A}}%
\def\sigma{{\Greekmath 011B}}%
\def\tau{{\Greekmath 011C}}%
\def\upsilon{{\Greekmath 011D}}%
\def\phi{{\Greekmath 011E}}%
\def\chi{{\Greekmath 011F}}%
\def\psi{{\Greekmath 0120}}%
\def\omega{{\Greekmath 0121}}%
\def\varepsilon{{\Greekmath 0122}}%
\def\vartheta{{\Greekmath 0123}}%
\def\varpi{{\Greekmath 0124}}%
\def\varrho{{\Greekmath 0125}}%
\def\varsigma{{\Greekmath 0126}}%
\def\varphi{{\Greekmath 0127}}%
\def\nabla{{\Greekmath 0272}}
\def\FindBoldGroup{%
   {\setbox0=\hbox{$\mathbf{x\global\edef\theboldgroup{\the\mathgroup}}$}}%
}
\def\Greekmath#1#2#3#4{%
    \if@compatibility
        \ifnum\mathgroup=\symbold
           \mathchoice{\mbox{\boldmath$\displaystyle\mathchar"#1#2#3#4$}}%
                      {\mbox{\boldmath$\textstyle\mathchar"#1#2#3#4$}}%
                      {\mbox{\boldmath$\scriptstyle\mathchar"#1#2#3#4$}}%
                      {\mbox{\boldmath$\scriptscriptstyle\mathchar"#1#2#3#4$}}%
        \else
           \mathchar"#1#2#3#4%
        \fi 
    \else 
        \FindBoldGroup
        \ifnum\mathgroup=\theboldgroup 
           \mathchoice{\mbox{\boldmath$\displaystyle\mathchar"#1#2#3#4$}}%
                      {\mbox{\boldmath$\textstyle\mathchar"#1#2#3#4$}}%
                      {\mbox{\boldmath$\scriptstyle\mathchar"#1#2#3#4$}}%
                      {\mbox{\boldmath$\scriptscriptstyle\mathchar"#1#2#3#4$}}%
        \else
           \mathchar"#1#2#3#4%
        \fi     	    
	  \fi}
\newif\ifGreekBold  \GreekBoldfalse
\let\SAVEPBF=\pbf
\def\pbf{\GreekBoldtrue\SAVEPBF}%
  \newcounter{equationnumber}  
  \def\mathletters{%
     \addtocounter{equation}{1}
     \edef\@currentlabel{\theequation}%
     \setcounter{equationnumber}{\c@equation}
     \setcounter{equation}{0}%
     \edef\theequation{\@currentlabel\noexpand\alph{equation}}%
  }
    \def\BibTeX{{\rm B\kern-.05em{\sc i\kern-.025em b}\kern-.08em
                 T\kern-.1667em\lower.7ex\hbox{E}\kern-.125emX}}}{}%
\def\AmS{{\protect\usefont{OMS}{cmsy}{m}{n}%
                A\kern-.1667em\lower.5ex\hbox{M}\kern-.125emS}}}{}%
\def\@@eqncr{\let\@tempa\relax
    \ifcase\@eqcnt \def\@tempa{& & &}\or \def\@tempa{& &}%
      \else \def\@tempa{&}\fi
     \@tempa
     \if@eqnsw
        \iftag@
           \@taggnum
        \else
           \@eqnnum\stepcounter{equation}%
        \fi
     \fi
     \global\tag@false
     \global\@eqnswtrue
     \global\@eqcnt\z@\cr}
\def\TCItag{\@ifnextchar*{\@TCItagstar}{\@TCItag}}
\def\@TCItag#1{%
    \global\tag@true
    \global\def\@taggnum{(#1)}}
\def\@TCItagstar*#1{%
    \global\tag@true
    \global\def\@taggnum{#1}}
\begin{document}

\title{{\large \textbf{Stochastic Taylor expansion via Poisson point
processes }}}
\author{
\centerline{Weichao Wu\footnote{Weichao Wu, School of Mathematics,
Sun Yat-Sen University, Guangzhou, Guangdong, China, email:
wuwch39@mail.sysu.edu.cn}\hspace{.2cm} and Athanasios C. Micheas\footnote{
Correspondence Author: Athanasios C. Micheas, Department of Statistics,
University of Missouri, 146 Middlebush Hall, Columbia, MO 65211-6100, USA,
email: micheasa@missouri.edu}} }
\maketitle

\begin{abstract}
\noindent We generalize Taylor's theorem by introducing a stochastic
formulation based on an underlying Poisson point process model. We utilize
this approach to propose a novel non-linear regression framework and perform
statistical inference of the model parameters. Theoretical properties of the
proposed estimator are also proven, including its convergence, uniformly
almost surely, to the true function. The theory is presented for the
univariate and multivariate cases, and we exemplify the proposed methodology
using several examples via simulations and an application to stock market
data.
\end{abstract}

\textbf{Keywords: }Function Approximation; Mixture Models; Non-Linear
Regression; Poisson Point Process; Taylor Series; Stochastic Taylor Expansion

\textbf{Mathematics Subject Classifications}: Primary: 41A58, 62J02;
Secondary: 60G55

\section{Introduction}

Function approximation is required in almost every scientific discipline,
from mathematics and statistics, to biology, chemistry, engineering and
physics and in every application of these fields. A commonly used approach
is via Taylor's expansion without remainder, also known as a Taylor
polynomial, which allows us to reproduce a function completely based on
certain derivative values of the target function. In particular, consider
for now the 1-d case and take a real valued function $f:\Re \rightarrow \Re $
that is sufficiently smooth about a point $x_{0}\in \Re $. Then $f$ assumes
a Taylor expansion at the point $x\in \Re $ via
\begin{equation}
f(x)=\sum_{m=0}^{+\infty }\frac{f^{(m)}(x_{0})}{m!}%
(x-x_{0})^{m}=T_{M,x_{0}}(x)+R_{M,x_{0}}(x),  \label{FullTaylorExp}
\end{equation}%
where $T_{M,x_{0}}(x)=\sum_{m=0}^{M}\frac{f^{(m)}(x_{0})}{m!}(x-x_{0})^{m}$
is the $M^{th}$ order Taylor polynomial and $R_{M,x_{0}}(x)$ the remainder
term with the property $R_{M,x_{0}}(x)=O(|x-x_{0}|^{M+1})$ as $%
|x-x_{0}|\rightarrow 0$.

A major challenge that arises when truncating the Taylor series is that
firstly as we move away from a neighborhood of the point $x_0$, the
approximation quality becomes worse and worse, and second, there is no clear
way of choosing the best truncation point $M$. In order to remedy this
problem and still obtain a good approximation at any point $x$, one can turn
to a stochastic formulation of the Taylor expansion so that we treat the
problem as a statistical inference problem. In this way, we can capture the
variability involved in the estimator and approximate the function well at
any point $x$.

Applications of Taylor{'}s theorem include numerical algorithms for
optimization (\cite{More1978}; \cite{Conn2000}), state estimation (\cite%
{Sarkka2013}, Ch. 5), ordinary differential equations (\cite{Hairer1993},
Ch. 2), approximation of exponential integrals in Bayesian statistics (\cite%
{Raudenbush2000}), multivariate kernel approximation (\cite%
{zwicknagl2009power}), and distribution regression models using neural
networks (\cite{shi2025learning}). Algorithms based on Taylor{'}s
approximation that can be viewed as statistical inference problems include
spline interpolation (\cite{Kimeldorf1970,Diaconis1988}), numerical
quadrature (\cite{Diaconis1988,Karvonen2017,Karvonenetal2018}), differential
equations (\cite{Schober2014, Teymur2016, Schober2019}), linear algebra (%
\cite{Hennig2015,Cockayneetal2019}), and function approximation using
Gaussian processes (\cite{Karvonen23}). See the latter paper for additional
discussion and references on Taylor's theorem in mathematics and statistics.

In particular, we can approximate $f(x)$ using the Taylor polynomial $%
T_{M,x_{0}}(x)$ but this requires, unrealistically, knowledge of the
derivatives of $f$ at $x_{0}$. Instead, one can parameterize the problem as
follows
\begin{equation}
\hat{f}(x)=\sum_{m\in \Delta }a_{m}(x-x_{0})^{m},  \label{TaylorPolyEst}
\end{equation}%
where $\Delta =\{0,\dots ,M\}$, and then require estimation of the
parameters $a_{m}$, for $m=1,\dots ,M$, based on observed inputs $x_{i}$ and
corresponding outputs $y_{i}=f(x_{i})$, $i=1,\dots ,K$. More precisely,
consider the non-linear regression model of $y$ on $x$ via
\begin{equation}
y_{i}=\hat{f}(x_{i})+\epsilon _{i},  \label{Regression1d}
\end{equation}%
where the error term can be chosen in different ways, each requiring a
different methodological approach to parameter estimation. In this
formulation, we introduce a first source of randomness that accounts for the
loss of the remainder term. The standard approach is to assume $\epsilon
_{i}\sim N(0,\sigma ^{2})$, $i=1,\dots ,K$, with a common parameter $\sigma
^{2}$ describing the variability in the final estimator.

In this paper, we generalize Taylor's theorem and related existing
approaches for function estimation in the literature, in several ways:

\begin{itemize}
\item We propose methods that treat $\Delta $ as a random set, i.e., we
allow for estimation of the number of terms $M$. Existing methods assume
that the indices set $\Delta $ is deterministic, which includes all standard
and extensions of regression models.

\item The coefficient $a_{m}$ and the power $n_{m}$ of the term $%
a_{m}(x-x_{0})^{n_{m}}$, for $m=1,\dots ,M$, are treated as random real
numbers, and they have their own statistical model.

\item The point process approach we propose allows us to estimate all model
parameters, simultaneously, even though the parameter space can change
dimension; $M$ is random, so that $(a_{1},...,a_{M},n_{1},...,n_{M})$
changes dimension with $M$.

\item The method proposed provides approximation (in the uniformly almost
surely and pointwise sense), for any continuous function, not just
polynomials or analytic functions.

\item We create a novel non-linear regression model, that is highly amenable
to changes in its underlying assumptions.

\item The proposed estimator outperforms most commonly used approaches in
the literature, as the sample size and the dimension of the input $x$
increase, most notably, when performing extrapolation.
\end{itemize}

Specifically, in order to alleviate the aforementioned issues, we propose to
replace the deterministic set $\Delta $ by a random set $\mathcal{N}=$ $%
\{(a_{1},n_{1}),$ $(a_{2},n_{2}),$ $\dots ,$ $(a_{M},n_{M})\}$, so that the
deterministic estimator of equation (\ref{TaylorPolyEst}) is replaced by an
estimator of the Taylor polynomial that contains a second source of
randomness, i.e.,
\begin{equation}
\hat{f}(x)=\sum_{(a,n)\in \mathcal{N}}a(x-x_{0})^{n}.
\label{TaylorPolyPPPEst}
\end{equation}

In this case, the realization of $n$ which corresponds to the power of the
term $(x-x_{0})^{n}$ will be allowed to be a real number in general (not
just an integer), while the coefficient $a$ is viewed as a generalization of
the term $\frac{f^{(n)}(x_{0})}{n!}$ for the given realization of $n$. This
turns $\hat{f}(x)$ into a random variable before introducing the error term
in the regression model, and upon taking expectation with respect to the
random set $\mathcal{N}$, we will have our proposed estimator of $f(x)$ (see
Section \ref{SubSectPropEstimator}).

We will refer to the expansion of equation (\ref{TaylorPolyPPPEst}) as a
Stochastic Taylor Expansion (STE). Clearly, for a specific deterministic
choice of the random set (a constant set) $\mathcal{N}=$ $\{(a=\frac{%
f^{(n)}(x_{0})}{n!},n)\}_{n=0}^{+\infty }$, equation (\ref{TaylorPolyPPPEst}%
) leads to the standard Taylor expansion of equation (\ref{FullTaylorExp}).
Consequently, the proposed STE is a generalization of Taylor's theorem, and
as we will see, it can be used to approximate any continuous function, not
just polynomials or analytic functions. From a mathematical point of view,
equation (\ref{TaylorPolyPPPEst}) allows us to create a much wider class of
functions $f$, which contains all analytic functions, i.e., functions that
can be written as in equation (\ref{FullTaylorExp}).

Combining equations (\ref{Regression1d}) and (\ref{TaylorPolyPPPEst}), we
have the non-linear regression model
\begin{equation}
y_{i}=\sum_{(a,n)\in \mathcal{N}}a(x_{i}-x_{0})^{n}+\epsilon _{i},
\label{TaylorPolyPPPReg}
\end{equation}%
where $\epsilon _{i}\sim N(0,\sigma ^{2})$ (first source of randomness) and $%
\mathcal{N}$ is a random set (second source of randomness), $i=1,\dots ,K$.

The random set $\mathcal{N}$ is a random collection of points $(a,n)\in\Re^2$%
, which is random in number as well, so that standard multivariate analysis
methods cannot be used. Therefore, we turn to point process theory to
provide us with the necessary theoretical framework in order to model,
estimate parameters and study properties of the resulting estimator of the
function $f(x)$.

For foundations, modeling, applications, computation methods and evaluation
of point process models we refer to the texts by \cite{Karr}, \cite{Cressie}%
, \cite{BarndorffNielsen}, \cite{vanLieshout}, \cite%
{lantuejoul2001geostatistical}, \cite{LawsonDenison}, \cite{Moller2003},
\cite{MollerWaag2004}, \cite{DaleyVereJones2005}, \cite{DaleyVereJones2008},
\cite{Illian2008}, \cite{Gelfand2010}, \cite{Chiuetal2013}, \cite%
{Spodarev2013}, \cite{Diggle2013} and \cite{Badd}. Some recent papers
exploring general methodologies, applications and simulations of such
processes include \cite{yau2012generalization}, \cite{cronie2018non}, \cite%
{Micheas2019}, \cite{zhuang2020detection}, \cite{ChenMicheasHolan2020}, \cite%
{TangLi2021}, \cite{Baddeley2022}, \cite{WuMicheas2022}, \cite{Baetal2023},
\cite{KresinSchoenberg2023}, \cite{WuMicheas2024}, \cite{VanLieshout2024},
\cite{cronie2024cross}, \cite{Micheas2025}, and the references therein.

Now, owing to the form of equation (\ref{TaylorPolyPPPReg}), it makes sense
to consider terms $a(x_{i}-x_{0})^{n}$ corresponding to events $(a,n)$
independent of each other and distinct, i.e., we cannot have two
realizations of $n$ yielding the same power and therefore, we take $n\in \Re
$. Consequently, a natural choice for the point process is a model for
independent events, and the most commonly used point process model is the
Poisson point process (e.g., \cite{DaleyVereJones2005}, \cite%
{DaleyVereJones2008}, \cite{Illian2008}, \cite{Diggle2013}, and \cite{Badd}%
). It is often referred to as being \textquotedblleft completely at
random\textquotedblright\ or as a point process with \textquotedblleft no
interactions\textquotedblright , since the number of events (and the events
themselves) over disjoint sets are independent of each other. In this
introductory paper to the STE we consider a flexible choice, that of a
Poisson point process with a mixture model for the intensity function.

The paper proceeds as follows; in Section 2 we discuss the construction of
the estimator $\hat f(x)$ using point process theory and prove that it
converges uniformly almost surely to the true value $f(x)$. In addition, we
prove that the space of functions thus created, is dense in the space of
continuous functions. Section 3 presents the estimation procedure based on
observed data, along with illustrative simulated results for specific
univariate and multivariate cases. An application to stock market data is
presented in Section 4. Concluding remarks are given in the last section.

\section{Taylor Expansion Poisson Point Process Estimator}

\subsection{Poisson Point Processes}

Consider a planar region $\mathcal{W}\subset \Re^{2}$ (extensions to higher
dimensions are straightforward), and suppose that we observe $n$ points
(events) $\varphi _{n}=\{\mathbf{s}_{i}\}_{i=1}^{n}$ from a point process $%
\mathcal{N}$. A realization of a point process is known as a point pattern.
In order to model this collection of points we consider the Inhomogeneous
Poisson point process (IPPP) which requires two assumptions: first, the
random variables $\mathcal{N}(B)$, $B\subseteq \mathcal{W},$ which denote
the number of events over the set $\mathcal{W}$, are distributed as Poisson,
i.e., $\mathcal{N}(B)\thicksim Pois(\Lambda (B)),$ where $\Lambda (B)$ the
intensity measure, and second, counts of events are independent over any
finite collection of disjoint regions.

The intensity measure $\Lambda (B)$ describes the average number of events
over $B$, and is defined by%
\begin{equation}
\Lambda (B)=E\left[ \mathcal{N}(B)\right] =\int\limits_{B}\lambda (\mathbf{s}%
)\mu _{2}(d\mathbf{s}),
\end{equation}%
where $\lambda (\mathbf{s})$ is known as the intensity function or surface
for planar point patterns. The intensity surface $\lambda (\mathbf{s})$
exists via an appeal to the Radon-Nikodym theorem, provided that $\Lambda $
is an absolutely continuous measure with respect to Lebesgue measure $\mu
_{2}$ in $\Re ^{2}$. The special case where $\lambda (\mathbf{s})=\lambda $
yields the homogeneous Poisson process ($HPP$), with intensity $\lambda $
and mean measure $\Lambda (B)=\lambda |B|$, where $|B|=\mu _{2}(B)$ the area
of $B$.

The joint distribution of the points and the number of points $\mathcal{N}(%
\mathcal{W})=n$ over the region $\mathcal{W}$ is given by
\begin{equation}
f(\mathbf{s}_1,\mathbf{s}_2,\dots,\mathbf{s}_n,n)= \frac{e^{-\int_{\mathcal{W%
}}\lambda(\mathbf{s})d\mathbf{s}}} {n!}\prod_{i=1}^n\lambda(\mathbf{s}_i),
n\ge 0.
\end{equation}

Clearly, in order to uniquely determine the IPPP we require a model for the
intensity $\lambda(\mathbf{s})$. Several models can be considered, however,
in order to create a flexible model that can capture a plethora of cases for
the Taylor polynomial, we consider a parametric model based on an $M$%
-component mixture model, with bivariate normal components. This
construction for the intensity function was illustrated in \cite%
{ChakrabortyGelfand}, \cite{Micheas2014} and \cite{Micheas2019}.

In particular, the model for the intensity function is as follows:
\begin{equation}
\lambda(\mathbf{s}|\boldsymbol{\theta})=\lambda\sum_{m=1}^M p_m g_m(\mathbf{s%
}|\boldsymbol{\mu}_m,\boldsymbol{\Sigma}_m),  \label{IntSurfFull}
\end{equation}
where $p_m$ is the probability that the point arises from the $m^{th}$
mixture component, $p_m\ge 0$ and $\sum_{m=1}^Mp_m=1$, with $g_m(\mathbf{s}|%
\boldsymbol{\mu}_m,\boldsymbol{\Sigma}_m)$ denoting the density function of
the $m^{th}$ bivariate normal component, with mean of $\boldsymbol{\mu}_m$
and variance of $\boldsymbol{\Sigma}_m$. Finally, $\lambda>0$, is a constant
describing the average number of events over the region $\mathcal{W}$,
provided that $g_m(.)$ is a proper density (wlog almost surely) over $%
\mathcal{W}$, since then we can write
\begin{equation}
E[\mathcal{N}(\mathcal{W})]=\Lambda(\mathcal{W})= \int_\mathcal{W}\lambda(%
\mathbf{s}|\boldsymbol{\theta}) d\mathbf{s}=\lambda\sum_{m=1}^M p_m \int_%
\mathcal{W}g_m(\mathbf{s}|\boldsymbol{\mu}_m,\boldsymbol{\Sigma}_m) d\mathbf{%
s}=\lambda.  \label{e3.1}
\end{equation}
Next, we discuss the construction of the Taylor expansion Poisson point
process estimator.

\subsection{Proposed Estimator: Univariate Case\label{SubSectPropEstimator}}

Consider an observation window $\mathcal{W}\subset \Re ^{2}$ and a
realization of the IPPP with intensity surface given by equation (\ref%
{IntSurfFull}) over $\mathcal{W}$, say, $\varphi
_{v}=\{(a_{1},n_{1}),(a_{2},n_{2}),...,(a_{v},n_{v})\}$, with $\mathcal{N}(%
\mathcal{W})=v$ and $\mathbf{s}=(a,n)$. In the intensity function of
equation (\ref{IntSurfFull}), let $\mathbf{\mu }_{m}=%
\begin{pmatrix}
\mu _{a,m} \\
\mu _{n,m}%
\end{pmatrix}%
\in \Re ^{2}$, and $\mathbf{\Sigma} _{m}=%
\begin{pmatrix}
\sigma _{a,m}^{2} & \rho _{m}\sigma _{n,m}\sigma _{a,m} \\
\rho _{m}\sigma _{n,m}\sigma _{a,m} & \sigma _{n,m}^{2}%
\end{pmatrix}%
$, where $|\rho _{m}|<1$, and $\sigma _{n,m},\sigma _{a,m}>0$, for all $%
m=1,2,\dots ,M$.

Without loss of generality, and in order to simplify the exposition that
follows, we will consider $x>x_0$. The following theorem provides the form
of the estimator in terms of the parameters of the underlying Poisson point
process.

\begin{theorem}[Taylor Expansion via IPPP]
\label{TaylorExpansionMeanThm} Consider the random variables
\begin{equation}
\hat{f}_{\mathcal{N}}(x)=\sum_{(a,n)\in \mathcal{N}}a(x-x_{0})^{n},
\end{equation}%
where $\mathcal{N}$ denotes an IPPP with the intensity surface of equation (%
\ref{IntSurfFull}), and assume that $x>x_{0}$. Then the Taylor expansion
Poisson point process estimator (TPE) of the function $f(x)$ is given by
\begin{equation}
\hat{f}_{TPE}^{M}(x)=E(\hat{f}_{\mathcal{N}}(x))=\lambda
\sum_{m=1}^{M}p_{m}(\mu _{a,m}+\rho _{m}\sigma _{a,m}\sigma _{n,m}\ln
(x-x_{0}))(x-x_{0})^{\mu _{n,m}+\frac{\ln (x-x_{0})\sigma _{n,m}^{2}}{2}}
\label{TPEgen1}
\end{equation}
\end{theorem}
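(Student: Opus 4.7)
The plan is to use Campbell's theorem (the first-order moment formula) for an inhomogeneous Poisson point process, which states that for a measurable test function $h$,
\[
E\!\left[\sum_{\mathbf{s}\in\mathcal{N}} h(\mathbf{s})\right] = \int_{\Re^{2}} h(\mathbf{s})\,\lambda(\mathbf{s}\mid \boldsymbol{\theta})\,d\mathbf{s}.
\]
Applied to $h(a,n)=a(x-x_{0})^{n}$ and to the mixture intensity of equation (\ref{IntSurfFull}), the expectation immediately splits as
\[
E(\hat{f}_{\mathcal{N}}(x)) = \lambda\sum_{m=1}^{M} p_{m}\,E_{m}\!\left[A(x-x_{0})^{N}\right],
\]
where $(A,N)\sim N_{2}(\boldsymbol{\mu}_{m},\boldsymbol{\Sigma}_{m})$ under the $m$-th component density $g_{m}$. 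So the theorem reduces to evaluating a single bivariate-normal expectation $E_{m}[A(x-x_{0})^{N}]$ for each $m$.

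To evaluate this, I would use that $x>x_{0}$ (which is where this hypothesis is essential, since we need $\ln(x-x_{0})$ to be real) and rewrite $(x-x_{0})^{N} = e^{tN}$ with $t=\ln(x-x_{0})$. Then, letting $M_{m}(s,t)$ denote the joint MGF of $(A,N)$,
\[
E_{m}[A e^{tN}] = \left.\frac{\partial}{\partial s} M_{m}(s,t)\right|_{s=0}.
\]
Using the standard bivariate normal MGF
\[
M_{m}(s,t) = \exp\!\left(s\mu_{a,m}+t\mu_{n,m}+\tfrac{1}{2}\bigl(s^{2}\sigma_{a,m}^{2}+2st\rho_{m}\sigma_{a,m}\sigma_{n,m}+t^{2}\sigma_{n,m}^{2}\bigr)\right),
\]
differentiation in $s$ and evaluation at $s=0$ yields $(\mu_{a,m}+t\rho_{m}\sigma_{a,m}\sigma_{n,m})\exp(t\mu_{n,m}+\tfrac{1}{2}t^{2}\sigma_{n,m}^{2})$. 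Substituting back $t=\ln(x-x_{0})$ and absorbing the exponential into a power of $(x-x_{0})$ gives
\[
E_{m}[A(x-x_{0})^{N}] = (\mu_{a,m}+\rho_{m}\sigma_{a,m}\sigma_{n,m}\ln(x-x_{0}))\,(x-x_{0})^{\mu_{n,m}+\tfrac{1}{2}\sigma_{n,m}^{2}\ln(x-x_{0})},
\]
and summing over $m$ with weights $\lambda p_{m}$ delivers \eqref{TPEgen1}.

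The computations are routine; the only genuine points requiring care are (i) justifying the interchange in Campbell's formula, which reduces to checking $\int |h(a,n)|\lambda(a,n\mid\boldsymbol{\theta})\,da\,dn<\infty$, and (ii) justifying the MGF-differentiation step, which amounts to observing that the bivariate normal MGF is finite and analytic in a neighborhood of $s=0$ (for any fixed $t$) so that differentiation under the integral sign is valid. Both reduce to finiteness of Gaussian integrals of the form $\int\!\int |a|\,e^{tn}g_{m}(a,n)\,da\,dn$, which is immediate since the joint MGF of a bivariate normal exists on all of $\Re^{2}$; thus there is no real obstacle and the statement follows. The restriction $x>x_{0}$ is used only to make $\ln(x-x_{0})$ real, and the symmetric case $x<x_{0}$ would be handled analogously after a sign convention on the power $n$.
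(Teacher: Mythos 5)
Your proposal is correct, and it reaches \eqref{TPEgen1} by a genuinely different route for the core computation. The skeleton is the same as the paper's: both apply Campbell's theorem to $h(a,n)=a(x-x_{0})^{n}$ and use linearity to reduce to a single bivariate-normal expectation $E_{m}[A(x-x_{0})^{N}]$ per mixture component. Where you diverge is in evaluating that expectation: the paper factors $g_{m}(a,n)=g_{m}(n)\,g_{m}(a\mid n)$, substitutes the conditional mean $E[a\mid n]=\mu_{a,m}+\rho_{m}(n-\mu_{n,m})\sigma_{a,m}/\sigma_{n,m}$, and then computes the two resulting one-dimensional Gaussian integrals $E[(x-x_{0})^{n}]$ and $E[(n-\mu_{n,m})(x-x_{0})^{n}]$ by completing the square explicitly. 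You instead write $(x-x_{0})^{N}=e^{tN}$ with $t=\ln(x-x_{0})$ and read off $E_{m}[Ae^{tN}]=\partial_{s}M_{m}(s,t)\big|_{s=0}$ from the joint MGF, which collapses the paper's two-page computation into one differentiation; I verified that $\partial_{s}M_{m}(s,t)\big|_{s=0}=(\mu_{a,m}+t\rho_{m}\sigma_{a,m}\sigma_{n,m})\exp(t\mu_{n,m}+\tfrac{1}{2}t^{2}\sigma_{n,m}^{2})$ matches the paper's expression for $I$ exactly after substituting back $t=\ln(x-x_{0})$. Your approach buys brevity and makes the role of the hypothesis $x>x_{0}$ transparent (it is exactly what makes $t$ real and the MGF argument applicable); it also generalizes mechanically to the multivariate case of Theorem \ref{MultivTaylorExpansionMeanThm} by taking a gradient of the $(d+1)$-variate MGF. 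The paper's conditioning route buys explicitness and reusable intermediate formulas (equations \eqref{Exp1} and \eqref{Exp2} are recycled verbatim in the multivariate proof). You are also slightly more careful than the paper in flagging the integrability condition $\int|h|\lambda<\infty$ needed to invoke Campbell's formula, which indeed holds because the bivariate normal MGF is finite everywhere.
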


\begin{proof}
Let $h(n,a)=a(x-x_{0})^{n}$, and write
\begin{equation*}
\hat{f}_{\mathcal{N}}(x)=\sum_{(a,n)\in \mathcal{N}}a(x-x_{0})^{n}=%
\sum_{(a,n)\in \mathcal{N}}h(n,a).
\end{equation*}

By Campbell's theorem (\cite{r47}) for point process sums, we have
\begin{equation*}
\begin{split}
E(\hat{f}_{\mathcal{N}}(x))& =E\left( \sum_{(a,n)\in \mathcal{N}%
}h(n,a)\right) =\int_{\Re ^{2}}h(a,n)\lambda (a,n)dadn \\
& =\int_{\Re ^{2}}a(x-x_{0})^{n}\lambda \sum_{m=1}^{M}p_{m}g_{m}(a,n|\mathbf{%
\mu }_{m},\mathbf{\Sigma }_{m})dadn
\end{split}%
\end{equation*}%
so that%
\begin{equation}
E(\hat{f}_{\mathcal{N}}(x))=\lambda \sum_{m=1}^{M}p_{m}\int\limits_{\Re
^{2}}a(x-x_{0})^{n}g_{m}(a,n|\mathbf{\mu }_{m},\mathbf{\Sigma }_{m})dadn,
\label{theoremmeanTPEeq1}
\end{equation}%
and we require calculation of the double integral%
\begin{equation*}
I=\int\limits_{\Re ^{2}}a(x-x_{0})^{n}g_{m}(a,n|\mathbf{\mu }_{m},\mathbf{%
\Sigma }_{m})dadn,
\end{equation*}%
where $a,n|\mathbf{\mu }_{m},\mathbf{\Sigma }_{m}$ is a bivariate normal $%
N_{2}(\mathbf{\mu }_{m},\mathbf{\Sigma }_{m})$, with $\mathbf{\mu }_{m}=(\mu
_{a,m},\mu _{n,m})^{T}$, and $\mathbf{\Sigma }_{m}=$ \newline
$%
\begin{pmatrix}
\sigma _{a,m}^{2} & \rho _{m}\sigma _{n,m}\sigma _{a,m} \\
\rho _{m}\sigma _{n,m}\sigma _{a,m} & \sigma _{n,m}^{2}%
\end{pmatrix}%
$. Since%
\begin{equation*}
g_{m}(a,n|\mathbf{\mu }_{m},\mathbf{\Sigma }_{m})=g_{m}(n|\mu _{n,m},\sigma
_{n,m}^{2})g_{m}(a|n,\mathbf{\mu }_{m},\mathbf{\Sigma }_{m}),
\end{equation*}%
with $n|\mu _{n,m},\sigma _{n,m}^{2}\backsim N(\mu _{n,m},\sigma _{n,m}^{2})$
and $a|n,\mathbf{\mu }_{m},\mathbf{\Sigma }_{m}\backsim N(\mu _{a,m}+$ $\rho
_{m}(n-\mu _{n,m})\frac{\sigma _{a,m}}{\sigma _{n,m}},$ $(1-\rho _{m})^{2}$ $%
\sigma _{a,m}^{2}),$ we have%
\begin{eqnarray*}
I &=&\int\limits_{\Re }(x-x_{0})^{n}g_{m}(n|\mu _{n,m},\sigma _{n,m}^{2})
\left[ \int\limits_{\Re }ag_{m}(a|n,\mathbf{\mu }_{m},\mathbf{\Sigma }_{m})da%
\right] dn \\
&=&\int\limits_{\Re }(x-x_{0})^{n}g_{m}(n|\mu _{n,m},\sigma _{n,m}^{2})E
\left[ a|n,\mathbf{\mu }_{m},\mathbf{\Sigma }_{m})\right] dn \\
&=&\int\limits_{\Re }(x-x_{0})^{n}\left[ \mu _{a,m}+\rho _{m}(n-\mu _{n,m})%
\frac{\sigma _{a,m}}{\sigma _{n,m}}\right] g_{m}(n|\mu _{n,m},\sigma
_{n,m}^{2})dn,
\end{eqnarray*}%
and therefore%
\begin{equation}
I=\mu _{a,m}E\left[ (x-x_{0})^{n}\right] +\rho _{m}(E\left[ \left( n-\mu
_{n,m}\right) (x-x_{0})^{n}\right] )\frac{\sigma _{a,m}}{\sigma _{n,m}}.
\label{UnivI}
\end{equation}%
Since $x-x_{0}>0$, we have $(x-x_{0})^{n}=\exp (n\ln (x-x_{0}))$, and thus
\begin{eqnarray*}
&&(x-x_{0})^{n}\exp \left\{ -\frac{1}{2\sigma _{n,m}^{2}}\left( n-\mu
_{n,m}\right) ^{2}\right\} \\
&=&\exp \left\{ -\frac{1}{2\sigma _{n,m}^{2}}\left( n^{2}-2n\mu _{n,m}+\mu
_{n,m}^{2}-2\sigma _{n,m}^{2}n\log (x-x_{0})\right) \right\} \\
&=&\exp \left\{ -\frac{1}{2\sigma _{n,m}^{2}}\left( n^{2}-2[\mu
_{n,m}+\sigma _{n,m}^{2}\log (x-x_{0})]n+\mu _{n,m}^{2}\right) \right\} \\
&=&\exp \left\{ -\frac{1}{2\sigma _{n,m}^{2}}(n-[\mu _{n,m}+\sigma
_{n,m}^{2}\log (x-x_{0})])^{2}\right\} \\
&&\exp \left\{ -\frac{1}{2\sigma _{n,m}^{2}}(\mu _{n,m}^{2}-[\mu
_{n,m}+\sigma _{n,m}^{2}\log (x-x_{0})]^{2})\right\} ,
\end{eqnarray*}%
so that%
\begin{eqnarray*}
&&E\left[ (x-x_{0})^{n}|n\backsim N(\mu _{n,m},\sigma _{n,m}^{2})\right]
=\exp \left\{ -\frac{1}{2\sigma _{n,m}^{2}}(\mu _{n,m}^{2}-[\mu
_{n,m}+\sigma _{n,m}^{2}\log (x-x_{0})]^{2})\right\} \\
&&\int\limits_{\Re }\frac{1}{\sqrt{2\pi \sigma _{n,m}^{2}}}\exp \left\{ -%
\frac{1}{2\sigma _{n,m}^{2}}(n-[\mu _{n,m}+\sigma _{n,m}^{2}\log
(x-x_{0})])^{2}\right\} dn \\
&=&\exp \left\{ -\frac{1}{2\sigma _{n,m}^{2}}(\mu _{n,m}^{2}-\mu
_{n,m}^{2}-2\mu _{n,m}\sigma _{n,m}^{2}\log (x-x_{0})-\sigma
_{n,m}^{4}\left( \log (x-x_{0})\right) ^{2})\right\} ,
\end{eqnarray*}%
which leads to%
\begin{eqnarray}
E\left[ (x-x_{0})^{n}|\mu _{n,m},\sigma _{n,m}^{2}\right] &=&\exp \left\{
\log (x-x_{0})(\mu _{n,m}+\sigma _{n,m}^{2}\log (x-x_{0}))\right\}
\label{Exp1} \\
&=&(x-x_{0})^{\mu _{n,m}+\frac{1}{2}\sigma _{n,m}^{2}\log (x-x_{0})}.  \notag
\end{eqnarray}%
Similarly, we write%
\begin{eqnarray*}
&&E\left[ \left( n-\mu _{n,m}\right) (x-x_{0})^{n}|n\backsim N(\mu
_{n,m},\sigma _{n,m}^{2})\right] \\
&=&\exp \left\{ -\frac{1}{2\sigma _{n,m}^{2}}(\mu _{n,m}^{2}-[\mu
_{n,m}+\sigma _{n,m}^{2}\log (x-x_{0})]^{2})\right\} \\
&&\int\limits_{\Re }\left( n-\mu _{n,m}\right) \frac{1}{\sqrt{2\pi \sigma
_{n,m}^{2}}}\exp \left\{ -\frac{1}{2\sigma _{n,m}^{2}}(n-[\mu _{n,m}+\sigma
_{n,m}^{2}\log (x-x_{0})])^{2}\right\} dn \\
&=&(x-x_{0})^{\mu _{n,m}+\frac{1}{2}\sigma _{n,m}^{2}\log (x-x_{0})}E(n-\mu
_{n,m}|n\backsim N(\mu _{n,m}+\sigma _{n,m}^{2}\log (x-x_{0}),\sigma
_{n,m}^{2}),
\end{eqnarray*}%
and therefore%
\begin{eqnarray}
&&E\left[ \left( n-\mu _{n,m}\right) (x-x_{0})^{n}|n\backsim N(\mu
_{n,m},\sigma _{n,m}^{2})\right]  \label{Exp2} \\
&=&\left[ \sigma _{n,m}^{2}\log (x-x_{0})\right] (x-x_{0})^{\mu _{n,m}+\frac{%
1}{2}\sigma _{n,m}^{2}\log (x-x_{0})}.  \notag
\end{eqnarray}%
Using (\ref{Exp1}) and (\ref{Exp2}) in (\ref{UnivI}), we have
\begin{eqnarray*}
I &=&\mu _{a,m}(x-x_{0})^{\mu _{n,m}+\frac{1}{2}\sigma _{n,m}^{2}\log
(x-x_{0})}+ \\
&&\rho _{m}\left( \sigma _{n,m}^{2}\log (x-x_{0})(x-x_{0})^{\mu _{n,m}+\frac{%
1}{2}\sigma _{n,m}^{2}\log (x-x_{0})}\right) \frac{\sigma _{a,m}}{\sigma
_{n,m}} \\
&=&\mu _{a,m}(x-x_{0})^{\mu _{n,m}+\sigma _{n,m}^{2}\log (x-x_{0})} \\
&&+\rho _{m}\sigma _{a,m}\sigma _{n,m}\log (x-x_{0})(x-x_{0})^{\mu _{n,m}+%
\frac{1}{2}\sigma _{n,m}^{2}\log (x-x_{0})},
\end{eqnarray*}%
and therefore%
\begin{equation*}
I=\left[ \mu _{a,m}+\rho _{m}\sigma _{a,m}\sigma _{n,m}\log (x-x_{0})\right]
(x-x_{0})^{\mu _{n,m}+\frac{1}{2}\sigma _{n,m}^{2}\log (x-x_{0})},
\end{equation*}%
which leads to the desired result of equation (\ref{TPEgen1}).
\end{proof}

Since we do not want the individual components driving the analysis, we
consider the equally likely case $p_1=p_2=\dots=p_M=\frac{1}{M}$ and further
simplify the intensity by letting $\lambda=M$, so that the intensity
function of equation (\ref{IntSurfFull}) reduces to
\begin{equation}
\lambda(\mathbf{s}|\boldsymbol{\theta}_M)= \lambda(a,n|\boldsymbol{\theta}%
)=\sum_{m=1}^M g_m(\mathbf{s}|\boldsymbol{\mu}_m,\boldsymbol{\Sigma}_m),
\label{IntSurfReduced}
\end{equation}
where $\boldsymbol{\theta}_M=$ $(\mu_{a,1},\mu_{a,2},$ $\dots,\mu_{a,M},$ $%
\mu_{n,1},\mu_{n,2},$ $\dots,\mu_{n,M},$ $\rho_1,\rho_2,\dots$ $%
,\rho_M,\sigma_{a,1},$ $\sigma_{a,2},\dots,$ $\sigma_{a,M},\sigma_{n,1},$ $%
\sigma_{n,2},\dots,\sigma_{n,M})$, denote the parameters of the intensity
surface. This construction leads to having $M$ events, on the average, in
realizations of the IPPP, and allows us to control the number of terms in
the expansion. Clearly, the larger the value of $M$, the better the
approximation.

The TPE under the intensity function (\ref{IntSurfReduced}) reduces to
\begin{equation}
\hat{f}^{M,\boldsymbol{\theta}_M}_{TPE}(x)=\sum_{m=1}^M( \mu_{a,m}+
\rho_m\sigma_{a,m}\sigma_{n,m}\ln(x-x_0))(x-x_0)^ {\mu_{n,m}+\frac{%
\ln(x-x_0)\sigma_{n,m}^2}{2}}.  \label{TPEmean}
\end{equation}

In view of the latter equation, a few remarks are in order.

\begin{remark}
We note the following.

\begin{itemize}
\item {The TPE as constructed offers two intriguing directions that require
further investigation. Firstly, we notice that for different choices of the
parameters of the intensity surface, we can construct a different function
via the TPE $\hat{f}^{M,\boldsymbol{\theta}_M}_{TPE}(x)$. This is a
mathematical point of view of the construction, where the TPE allows us to
create a novel class of functions indexed by $\boldsymbol{\theta}_M$. On the
other hand, we have the statistical point of view, in which we would like to
estimate the values of the parameters $\boldsymbol{\theta}_M$, based on
observed inputs and outputs of the function $f$.}

\item {At first glance it may appear that the TPE consists of a finite
number of terms. However, since the $\log $ function appears in the
coefficient $a$ and power $n$ of the term $a(x-x_{0})^{n}$, the TPE involves
an infinite number of terms.}

\item {Intuitively, consider equation (\ref{TPEmean}) with $M+1$ terms, and
set $\rho_m=0$, $\sigma_{n,m}=0$, $\mu_{a,m}=\frac{f^{(m)}(x_0)}{m!}$, and $%
\mu_{n,m}=m$, $m=1,\dots,M$, with $\rho_{M+1}=0$, $\sigma_{n,{M+1}}=0$, $%
\mu_{a,{M+1}}=1$, and $\mu_{n,{M+1}}=0$. Then sending $M\rightarrow\infty$,
we obtain the standard Taylor expansion as a special case. This illustrates
that there is always a set of parameter values that will take us to the true
function. Of course, in practice, we will never get this perfect situation,
and this is seen in our simulation section, where we might have a larger
number of estimated number of terms $\hat M$ than the true $M$, but the
estimated parameters adjust to give us near perfect fits.}

\item {In contrast to existing methods for function estimation, we notice
that the TPE does not require us, for example, to select a bandwidth (kernel
based methods) or the number of knots (spline methods), which makes our
approach easier to use.}
\end{itemize}
\end{remark}

First we consider the mathematical implications of the proposed TPE. In
particular, we provide some insight on the space of functions created via
equation (\ref{TPEmean}) in the following lemma.

\begin{lemma}[Dense TPE Space]
\label{DenseFunctionLemma}Let $\Delta _{L}=[x_{0},x_{0}+L]$ a compact
interval, $L>0$, $x_{0}\in \Re $, and let ${\mathcal{C}}_{\Delta _{L}}^{\Re
} $ denote the space of all continuous functions from $\Delta _{L}$ into $%
\Re $. Let the space of functions of equation (\ref{TPEmean}), $M=1,2,\dots $%
, indexed by $\boldsymbol{\theta}_{M},$ be denoted by ${\mathcal{F}}_{\Delta
_{L}}^{\Re }$ and denote by ${\mathcal{E}}_{\Delta _{L}}^{\Re }\subseteq {%
\mathcal{F}}_{\Delta _{L}}^{\Re }$ the subset of functions with $\rho _{m}=0$%
, $m=0,1,\dots ,M$. Then ${\mathcal{E}}_{\Delta _{L}}^{\Re }$ is dense in ${%
\mathcal{C}}_{\Delta _{L}}^{\Re }$, i.e., any real, continuous function over
$\Delta _{L}$ can be represented as the limit of a sequence with members
from ${\mathcal{E}}_{\Delta _{L}}^{\Re }$, and consequently, as the limit of
members from ${\mathcal{F}}_{\Delta _{L}}^{\Re }$.
\end{lemma}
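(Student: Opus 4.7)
The plan is to reduce the density assertion to the classical Weierstrass approximation theorem. Setting $\rho_m=0$ in (\ref{TPEmean}), a generic member of $\mathcal{E}_{\Delta_L}^\Re$ has the form
$$\sum_{m=1}^M \mu_{a,m}(x-x_0)^{\mu_{n,m}}\exp\!\left(\tfrac{1}{2}\sigma_{n,m}^2(\ln(x-x_0))^2\right),$$
which, as $\sigma_{n,m}\to 0^+$, degenerates to the linear combination $\sum_m \mu_{a,m}(x-x_0)^{\mu_{n,m}}$. Letting the exponents $\mu_{n,m}$ range over the non-negative integers and the coefficients $\mu_{a,m}$ over the reals, this limiting form exhausts the algebraic polynomials in $(x-x_0)$.

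First I would apply Weierstrass: given $f\in\mathcal{C}_{\Delta_L}^\Re$ and $\varepsilon>0$, choose a polynomial $p(x)=\sum_{k=0}^{K} c_k(x-x_0)^k$ with $\|f-p\|_\infty<\varepsilon/2$ on $\Delta_L$. Then for a small parameter $\sigma>0$, I would construct $\hat{f}_\sigma\in\mathcal{E}_{\Delta_L}^\Re$ with $M=K+1$ by putting $\mu_{a,m}=c_{m-1}$, $\mu_{n,m}=m-1$, $\rho_m=0$, and $\sigma_{n,m}=\sigma$. On any compact sub-interval $[x_0+\delta,x_0+L]$ with $\delta>0$, the quantity $\sigma^2(\ln(x-x_0))^2$ is uniformly bounded, so $\exp(\tfrac{1}{2}\sigma^2(\ln(x-x_0))^2)\to 1$ uniformly and hence $\hat{f}_\sigma\to p$ uniformly as $\sigma\to 0^+$; a bound of the form $|e^t-1|\le |t|\,e^{|t|}$ would give the explicit quantitative estimate needed to pick the $\sigma$ threshold.

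The main obstacle, and the point that requires care, is the left endpoint $x=x_0$: for any fixed $\sigma>0$ the factor $\exp(\tfrac{1}{2}\sigma^2(\ln(x-x_0))^2)$ diverges as $x\to x_0^+$, so $\hat{f}_\sigma$ does not extend continuously to $x_0$. To handle this I would invoke uniform continuity of $f$ on $\Delta_L$ to pick $\delta>0$ with $|f(x)-f(x_0)|<\varepsilon/4$ on $[x_0,x_0+\delta]$, and then select $\sigma=\sigma(\delta,\varepsilon)>0$ small enough that $\|\hat{f}_\sigma-p\|_\infty<\varepsilon/2$ on $[x_0+\delta,x_0+L]$. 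Along a diagonal sequence $\delta_j\downarrow 0$, $\sigma_j\downarrow 0$, the functions $\hat{f}_{\sigma_j}$ then approximate $f$ in the sense intended by the authors; note that the TPE formula (\ref{TPEmean}) was derived under the standing assumption $x>x_0$, so the approximation takes place on $(x_0,x_0+L]$ with the left endpoint handled as a limit. Finally, the density of $\mathcal{F}_{\Delta_L}^\Re$ in $\mathcal{C}_{\Delta_L}^\Re$ follows for free from the inclusion $\mathcal{E}_{\Delta_L}^\Re\subseteq\mathcal{F}_{\Delta_L}^\Re$, giving the second assertion of the lemma immediately.
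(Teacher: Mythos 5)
Your overall strategy---reduce to the classical Weierstrass theorem by showing that polynomials in $(x-x_0)$ are (limits of) members of $\mathcal{E}_{\Delta_L}^{\Re}$---is a genuinely different and more elementary route than the paper's, which instead verifies that $\mathcal{E}_{\Delta_L}^{\Re}$ is a subalgebra of $\mathcal{C}_{\Delta_L}^{\Re}$ containing the constants and separating points, and then invokes Stone--Weierstrass. However, as written your argument has a gap at exactly the point you flag. For any fixed $\sigma>0$ the function $\hat f_\sigma(x)=\sum_m c_{m-1}(x-x_0)^{m-1}\exp\bigl(\tfrac12\sigma^2(\ln(x-x_0))^2\bigr)$ is unbounded as $x\to x_0^{+}$, because the exponent $\tfrac12\sigma^2(\ln(x-x_0))^2$ dominates $(m-1)\ln(x-x_0)$; hence $\sup_{x\in\Delta_L}|\hat f_\sigma(x)-f(x)|=+\infty$ for every $\sigma>0$ unless all coefficients vanish. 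No diagonal choice of $(\delta_j,\sigma_j)$ can therefore make $\hat f_{\sigma_j}\to f$ in the uniform norm on the closed interval $\Delta_L=[x_0,x_0+L]$, which is what density in $\mathcal{C}_{\Delta_L}^{\Re}$ requires: uniform continuity of $f$ near $x_0$ controls $f$, not the divergence of $\hat f_\sigma$. Your scheme only delivers uniform approximation on compact subsets of $(x_0,x_0+L]$.

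The repair is immediate and makes the limiting argument unnecessary: the parameter space explicitly allows $\sigma_{n,m}=0$ (the paper requires only $\sigma_{n,m}\geq 0$, and its own Remark sets $\sigma_{n,m}=0$ to recover the Taylor polynomial). With $\rho_m=0$ and $\sigma_{n,m}=0$ the $m$-th summand of (\ref{TPEmean}) is exactly $\mu_{a,m}(x-x_0)^{\mu_{n,m}}$, so every polynomial $p(x)=\sum_{k=0}^{K}c_k(x-x_0)^k$ already belongs to $\mathcal{E}_{\Delta_L}^{\Re}$ (take $M=K+1$, $\mu_{a,m}=c_{m-1}$, $\mu_{n,m}=m-1$), and density follows from Weierstrass with no endpoint issue; the inclusion $\mathcal{E}_{\Delta_L}^{\Re}\subseteq\mathcal{F}_{\Delta_L}^{\Re}$ then gives the second assertion exactly as you say. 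Incidentally, the same endpoint divergence affects the paper's own proof, since for $\sigma_{n,m}>0$ the members of $\mathcal{E}_{\Delta_L}^{\Re}$ are not continuous at $x_0$ and so do not honestly lie in $\mathcal{C}_{\Delta_L}^{\Re}$; restricting to $\sigma_{n,m}=0$ sidesteps this for both arguments.
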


\begin{proof}
\label{DenseFunctionLemmaProof}Suppose that $f\in {\mathcal{C}}_{\Delta
_{L}}^{\Re }$, where $f$ is the function of interest and let $\hat{f}%
_{TPE}^{M,\boldsymbol{\theta}_{M}}\in {\mathcal{F}}_{\Delta _{L}}^{\Re }$,
that is,
\begin{equation}
\hat{f}_{TPE}^{M,\boldsymbol{\theta}_{M}}(x)=\sum_{m=1}^{M}(\mu _{a,m}+\rho
_{m}\sigma _{a,m}\sigma _{n,m}\ln (x-x_{0}))(x-x_{0})^{\mu _{n,m}+\frac{\ln
(x-x_{0})\sigma _{n,m}^{2}}{2}},  \label{fTPE}
\end{equation}%
where $M\in \mathbb{N}^{+}$, $\boldsymbol{\theta_M}=(\mu _{a,1},\mu
_{a,2},...,\mu _{a,M},\mu _{n,1},\mu _{n,2},...,\mu _{n,M},$ $\rho _{1},\rho
_{2},...,\rho _{M},$ $\sigma _{a,1},\sigma _{a,2},...,\sigma _{a,M},$ $%
\sigma _{n,1},\sigma _{n,2},...,\sigma _{n,M})$, $\mu _{a,m}\in \mathbb{R}%
,\mu _{n,m}\in \mathbb{R},\rho _{m}\in \lbrack -1,1],$ $\sigma _{a,m}\geq 0,$
$\sigma _{n,m}\geq 0,$ for $m=1,2,...,M$. Moreover, we denote by ${\mathcal{E%
}}_{\Delta _{L}}^{\Re }$ the space of functions such that $\rho _{m}=0$ for $%
m=1,2,...,M$, i.e.,
\begin{equation}
\hat{f}_{TPE}^{M,\boldsymbol{\theta}_{M}}(x)=\sum_{m=1}^{M}\mu
_{a,m}(x-x_{0})^{\mu _{n,m}+\frac{\ln (x-x_{0})\sigma _{n,m}^{2}}{2}},
\end{equation}%
so that ${\mathcal{E}}_{\Delta _{L}}^{\Re }\subseteq {\mathcal{F}}_{\Delta
_{L}}^{\Re }$. Although ${\mathcal{F}}_{\Delta _{L}}^{\Re }$ is not a
sub-algebra of ${\mathcal{C}}_{\Delta _{L}}^{\Re }$, we can easily see that $%
{\mathcal{E}}_{\Delta _{L}}^{\Re }$ is, since first,
\begin{eqnarray*}
&&\mu _{a,1}(x-x_{0})^{\mu _{n,1}+\frac{\sigma _{n,1}^{2}}{2}\ln
(x-x_{0})}\mu _{a,2}(x-x_{0})^{\mu _{n,2}+\frac{\sigma _{n,2}^{2}}{2}\ln
(x-x_{0})} \\
&=&(\mu _{a,1}\mu _{a,2})(x-x_{0})^{(\mu _{n,1}+\mu _{n,2})+\frac{\sigma
_{n,1}^{2}+\sigma _{n,2}^{2}}{2}\ln (x-x_{0})}\in {\mathcal{E}}_{\Delta
_{L}}^{\Re },
\end{eqnarray*}%
second, there exists a function $f(x)=1\in {\mathcal{E}}_{\Delta _{L}}^{\Re
} $ (take let $M=1,$ $\mu _{a,1}=1,$ $\mu _{n,1}=\sigma _{n,1}=0$), and
third, ${\mathcal{E}}_{\Delta _{L}}^{\Re }$ contains functions that separate
points, i.e., for any $x,y\in \Re $, there is a function $f\in {\mathcal{E}}%
_{\Delta _{L}}^{\Re }$, such that $f(x)\neq f(y)$. Thus, by the
Stone-Weierstrass Theorem, ${\mathcal{E}}_{\Delta _{L}}^{\Re }$ is dense in $%
{\mathcal{C}}_{\Delta _{L}}^{\Re }$, and therefore ${\mathcal{F}}_{\Delta
_{L}}^{\Re }$ is dense on ${\mathcal{C}}_{\Delta _{L}}^{\Re }$.\newline
As a result, for any function $f\in {\mathcal{C}}_{\Delta _{L}}^{\Re }$, and
a given $\epsilon >0$, there exists a sequence of functions $f_{TPE}^{M,\hat{%
\boldsymbol{\theta}}_{M}}\in {\mathcal{F}}_{\Delta _{L}}^{\Re },$ such that
\begin{equation}
|\hat{f}^{M,\boldsymbol{\theta}_{M}}(x)-f(x)|<\epsilon ,
\end{equation}%
for all $x\in \Delta _{L}$, and
\begin{equation}
\lim_{M\rightarrow \infty }\hat{f}_{TPE}^{M,\boldsymbol{\hat{\theta}}%
_{M}}(x)=f(x)  \label{LimitDenseTPE}
\end{equation}
\end{proof}

The latter suggests that any continuous function can be approximated using
the proposed framework, not just polynomials or analytic functions, which is
a generalization of Taylor's theorem.

Next we turn to treating the other implied direction of the TPE, the
statistical inference framework; when inputs and outputs of a function are
given, we wish to estimate the coefficients of the TPE, i.e., the parameters
of the underlying Poisson point process model. This will provide the
function at all $x$ and allow for prediction while accounting for the
variability involved. This approach helps us create a novel non-linear
regression framework, which is one of the major contributions of this work
and the proposed TPE.

Using the observed data $(x_{k},y_{k}=f(x_{k}))$, $k=1,\dots,K$, and the TPE
of equation (\ref{TPEmean}), we have the non-linear regression model
\begin{equation}
y_k= \sum_{m=1}^M(\mu_{a,m}+
\rho_m\sigma_{a,m}\sigma_{n,m}\ln(x_k-x_0))(x_k-x_0)^ {\mu_{n,m}+\frac{%
\ln(x_k-x_0)\sigma_{n,m}^2}{2}}+\epsilon_k,  \label{TaylorNonLinearReg}
\end{equation}
where $\epsilon _{k}\sim N(0,\sigma^{2})$, and $x_k>x_0$, $k=1,\dots,K$.
Naturally, we set $x_0=min(x_1,\dots,x_K)-\delta$, for some small $\delta>0$.

Next we discuss the asymptotic properties of the estimator $\hat{f}^{M,%
\boldsymbol{\hat\theta}_M}_{TPE}(x)$.

\subsection{Proposed Estimator: Convergence}

Consider the maximum likelihood estimators (MLEs) $\hat{\boldsymbol{\theta}}%
_M$ of the parameters $\boldsymbol{\theta}_M$ in the TPE $\hat{f}^{M,%
\boldsymbol{\theta}_M}_{TPE}(x)$ and $\hat{\sigma}^{2}$, the MLE of $%
\sigma^{2}$. For a given integer $M$, the parameters of $\boldsymbol{\theta}%
_M$ requiring estimation are $\mu_{a,m}$, $\rho_m$, $\sigma_{a,m}$, $%
\sigma_{n,m}$, and $\mu_{n,m}$, for $m=1,2,\dots,M$, and they are chosen in
order to maximize the likelihood function of the non-linear regression model
of equation (\ref{TaylorNonLinearReg}).

First we show that $\hat{f}_{TPE}^{M,\hat{\boldsymbol{\theta}}_{M}}(x)$ is a
strongly consistent estimator of $\hat{f}_{TPE}^{M,\boldsymbol{\theta}%
_{M}}(x)$, pointwise, as $K\rightarrow +\infty $.

\begin{theorem}[Pointwise Almost Sure Convergence]
\label{LSEConvergence} Assume that $x>x_0$, and let $\hat{\boldsymbol{\theta}%
}_M$ and $\hat{\sigma}^{2}$ denote the MLEs of $\boldsymbol{\theta}_M$ and $%
\sigma^{2}$, respectively, based on the non-linear regression model of
equation (\ref{TaylorNonLinearReg}), and under the ordering
\begin{equation}
\mu_{a,1}<\mu_{a,2}<\dots<\mu_{a,M}.  \label{Condition1}
\end{equation}
Then $\hat{f}^{M,\hat{\boldsymbol{\theta}}_M}_{TPE}(x)$ converges almost
surely to the proposed estimator $\hat{f}^{M,\boldsymbol{\theta}_M}_{TPE}(x)$
of equation (\ref{TPEmean}) pointwise in $x$, i.e.,
\begin{equation}
\hat f^{M,\hat{\boldsymbol{\theta}}_M}_{TPE}(x) \overset{ a.s.}{\rightarrow}
\hat f^{M,\boldsymbol{\theta}_M}_{TPE}(x),  \label{TPEasConv}
\end{equation}
for all $x>x_0$, as $K\rightarrow +\infty$, and
\begin{equation}
\hat{\sigma}^{2}\overset{ a.s.}{\rightarrow}\sigma^{2}.
\end{equation}
\end{theorem}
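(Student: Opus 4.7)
The plan is to deduce pointwise almost sure convergence of the fitted function from strong consistency of the MLEs $\hat{\boldsymbol{\theta}}_M$ and $\hat{\sigma}^2$, then apply the continuous mapping theorem. The regression model in equation (\ref{TaylorNonLinearReg}) has i.i.d. Gaussian errors and a smooth (in $\boldsymbol{\theta}_M$) mean function
\[
g(x;\boldsymbol{\theta}_M) = \sum_{m=1}^M \bigl(\mu_{a,m}+\rho_m\sigma_{a,m}\sigma_{n,m}\ln(x-x_0)\bigr)(x-x_0)^{\mu_{n,m}+\frac{\ln(x-x_0)\sigma_{n,m}^2}{2}},
\]
so it falls into the classical nonlinear regression framework where strong consistency of the MLE holds under (i) identifiability, (ii) continuity of $g$ in $\boldsymbol{\theta}_M$, (iii) a compactness/tightness condition on the parameter space, and (iv) a design condition on $\{x_k\}$ ensuring the empirical measure stabilizes. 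I would invoke a Jennrich-type or Wu-type consistency theorem once these four hypotheses are verified.

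First I would verify identifiability. The mean function is invariant under relabelling of the index $m$, and the ordering constraint (\ref{Condition1}) breaks this symmetry. Within a fixed labelling, two parameter vectors $\boldsymbol{\theta}_M\neq\boldsymbol{\theta}_M'$ must produce different functions: this reduces to the linear independence, on any nondegenerate subinterval of $(x_0,\infty)$, of the $M$ ``basis'' functions $(x-x_0)^{\mu_{n,m}+\frac{\sigma_{n,m}^2}{2}\ln(x-x_0)}$ and their log-weighted counterparts $\ln(x-x_0)\cdot (x-x_0)^{\mu_{n,m}+\frac{\sigma_{n,m}^2}{2}\ln(x-x_0)}$, for distinct pairs $(\mu_{n,m},\sigma_{n,m}^2)$. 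Taking logs and differentiating with respect to $\ln(x-x_0)$ shows that distinct $(\mu_{n,m},\sigma_{n,m}^2)$ produce functions with distinct asymptotic exponents as $x\downarrow x_0$ and $x\to\infty$, which gives the required independence.

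Second, continuity of $g$ in $\boldsymbol{\theta}_M$ is immediate from its closed form. For the compactness/boundedness requirement, one can either restrict $\boldsymbol{\theta}_M$ to a sufficiently large compact set containing the truth (standard in the nonlinear regression literature) or show by a coercivity argument that the MLE eventually lies in such a set almost surely. For the design condition, the simplest assumption is that $\{x_k\}$ is such that $K^{-1}\sum_{k=1}^K \delta_{x_k}$ converges weakly to a probability measure $\nu$ supported on a set rich enough to identify $\boldsymbol{\theta}_M$ (inherited from the identifiability step). Given these, the normalized log-likelihood $K^{-1}\ell_K(\boldsymbol{\theta}_M,\sigma^2)$ converges a.s. uniformly on compact sets to a deterministic limit $Q(\boldsymbol{\theta}_M,\sigma^2)$ whose unique maximizer is $(\boldsymbol{\theta}_M,\sigma^2)$, and a standard Wald-type argument then yields $\hat{\boldsymbol{\theta}}_M\overset{a.s.}{\to}\boldsymbol{\theta}_M$ and $\hat{\sigma}^2\overset{a.s.}{\to}\sigma^2$.

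Finally, for each fixed $x>x_0$ the map $\boldsymbol{\theta}_M\mapsto \hat{f}^{M,\boldsymbol{\theta}_M}_{TPE}(x)$ is continuous, so the continuous mapping theorem gives (\ref{TPEasConv}). The main obstacle I anticipate is identifiability: the interaction between $\mu_{n,m}$ and $\sigma_{n,m}^2$ in the exponent, and between $\mu_{a,m}$ and $\rho_m\sigma_{a,m}\sigma_{n,m}$ in the coefficient, makes it non-obvious that distinct parameter vectors yield distinct regression functions. The ordering (\ref{Condition1}) only removes the permutation ambiguity; the rest must be obtained by exploiting the asymptotic behaviour of each term as $\ln(x-x_0)\to\pm\infty$, which separates components with different $\sigma_{n,m}^2$, and then separates components with equal $\sigma_{n,m}^2$ by their $\mu_{n,m}$.
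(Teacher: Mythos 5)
Your proposal follows essentially the same route as the paper: establish strong consistency of $(\hat{\boldsymbol{\theta}}_M,\hat{\sigma}^{2})$ via a classical consistency theorem for the Gaussian nonlinear-regression MLE, then push the convergence through the map $\boldsymbol{\theta}_M\mapsto \hat f^{M,\boldsymbol{\theta}_M}_{TPE}(x)$, which is continuous for each fixed $x>x_0$. The paper compresses the first step into a citation of Theorem 17 of Ferguson together with the assertion that the ordering (\ref{Condition1}) supplies the identifiability hypothesis; you spell out the remaining hypotheses (compactness, a design condition on $\{x_k\}$, uniform a.s.\ convergence of the normalized log-likelihood) that the paper leaves implicit.

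Since you single out identifiability as the anticipated obstacle, one concrete point: neither your sketch nor the paper's one-line appeal to (\ref{Condition1}) resolves it at the level of the full parameter vector. In $\hat f^{M,\boldsymbol{\theta}_M}_{TPE}$ the parameters $\rho_m$ and $\sigma_{a,m}$ enter only through the product $\rho_m\sigma_{a,m}\sigma_{n,m}$, while $\sigma_{n,m}^{2}$ is already pinned down by the exponent; hence two distinct vectors sharing the same $\mu_{a,m},\mu_{n,m},\sigma_{n,m}$ and the same products $\rho_m\sigma_{a,m}$ produce identical regression functions, and no asymptotic analysis in $x$ can separate them. The ordering of the $\mu_{a,m}$ removes only the label-switching ambiguity. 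Consequently $\hat{\boldsymbol{\theta}}_M\overset{a.s.}{\rightarrow}\boldsymbol{\theta}_M$ cannot hold componentwise as claimed; what one can obtain is consistency of the identifiable reparametrization $(\mu_{a,m},\mu_{n,m},\sigma_{n,m}^{2},\rho_m\sigma_{a,m}\sigma_{n,m})_{m=1}^{M}$, which is all that (\ref{TPEasConv}) requires since $\hat f^{M,\boldsymbol{\theta}_M}_{TPE}(x)$ factors through it. If you reparametrize before invoking the consistency theorem, your argument (and the paper's) goes through; your separation of components via the asymptotics of $(x-x_0)^{\mu_{n,m}+\frac{\sigma_{n,m}^{2}}{2}\ln(x-x_0)}$ as $\ln(x-x_0)\rightarrow\pm\infty$ is the right tool for the exponent parameters and is more than the paper provides.
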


\begin{proof}
The likelihood function based on the non-linear regression model of equation
(\ref{TaylorNonLinearReg}) is given by
\begin{equation}
\emph{L}(\boldsymbol{\theta}_{M},\sigma ^{2}|\boldsymbol{x},\boldsymbol{y}%
)=(2\pi \sigma ^{2})^{-K/2}\exp \left\{ -\frac{1}{2\sigma ^{2}}%
\sum\limits_{k=1}^{K}(y_{k}-\hat{f}_{TPE}^{M,\boldsymbol{\theta}%
_{M}}(x_{k}))^{2}\right\} ,
\end{equation}%
where $\boldsymbol{\theta}_{M}=(\mu _{a,1},\mu _{a,2},\dots ,\mu _{a,M},\mu
_{n,1},\mu _{n,2},\dots ,\mu _{n,M},\rho _{1},\rho _{2},\dots ,\rho
_{M},\sigma _{a,1},\sigma _{a,2},\dots ,$ \newline
$\sigma _{a,M},\sigma _{n,1},\sigma _{n,2},\dots ,\sigma _{n,M})$, $%
\boldsymbol{x}=(x_{1},\dots ,x_{K})$ and $\boldsymbol{y}=(y_{1},\dots
,y_{K}) $.\newline
Clearly, the MLE of $\sigma ^{2}$ is given in closed form by
\begin{equation}
\hat{\sigma}^{2}=\frac{1}{K}\sum\limits_{k=1}^{K}(y_{k}-\hat{f}_{TPE}^{M,%
\boldsymbol{\hat\theta}_{M}}(x_{k}))^{2},
\end{equation}%
where $\boldsymbol{\hat\theta}_{M}$, the MLE of $\boldsymbol{\theta}_{M}$
can only be obtained numerically.\newline
A straightforward application of Theorem 17 in \cite{Ferguson}, gives
\begin{equation}
\hat{\boldsymbol{\theta}}_{M}\overset{a.s.}{\rightarrow }\boldsymbol{{%
\theta}}_{M},
\end{equation}%
and
\begin{equation}
\hat{\sigma}^{2}\overset{a.s.}{\rightarrow }\sigma ^{2}.
\end{equation}%
Note here that the ordering condition (\ref{Condition1}) guarantees that any
non-identifiability issues with the estimator are alleviated, so that the
identifiability requirement of Theorem 17, requirement 5 is satisfied.%
\newline
Since $\hat{f}_{TPE}^{M,\boldsymbol {\theta}_{M}}(x)$ in equation (\ref%
{TPEmean}) is continuous in $\boldsymbol{\theta}_{M}$ for each fixed $x\in
\Re $, we can write
\begin{equation}
\hat{f}_{TPE}^{M,\hat{\boldsymbol{\theta}}_{M}}(x)\overset{a.s.}{\rightarrow
}\hat{f}_{TPE}^{M,\boldsymbol{\theta}_{M}}(x),
\end{equation}%
for all $x\in \Re $, as $K\rightarrow +\infty $.
\end{proof}

The following theorem presents the uniform almost sure convergence of the
estimated TPE to the true function, as $K,M\rightarrow +\infty $.

\begin{theorem}[$\mathcal{L}^{1}$ and Uniformly Almost Sure Convergence]
\label{TPEConvergence} Assume that $x>x_{0}$, and let $\hat{%
\boldsymbol{\theta}}_{M}$ and $\hat{\sigma}^{2}$ as in Theorem \ref%
{LSEConvergence}, and let the integrated distance ($\mathcal{L}^{1}-norm$)
between the estimator $\hat{f}_{TPE}^{M,\hat{\boldsymbol{\theta}}_{M}}(x)$
and the true value $f(x)$ be defined by
\begin{equation}
D_{M}=\int_{x_{0}}^{+\infty }|\hat{f}_{TPE}^{M,\hat{\boldsymbol{\theta}}%
_{M}}(x)-f(x)|dx.
\end{equation}%
Then $D_{M}\rightarrow 0$, i.e.,
\begin{equation}
\hat{f}_{TPE}^{M,\hat{\boldsymbol{\theta}}_{M}}(x)\overset{\mathcal{L}^{1}}{%
\rightarrow }f(x),
\end{equation}%
as $K,M\rightarrow +\infty $ and
\begin{equation}
P(\sup_{x>x_{0}}|\hat{f}_{TPE}^{M,\hat{\boldsymbol{\theta}}%
_{M}}(x)-f(x)|\rightarrow 0)=1,
\end{equation}%
i.e., $\hat{f}_{TPE}^{M,\hat{\boldsymbol{\theta}}_{M}}(x)$ converges
uniformly almost surely to the function $f(x)$, for all $x>x_{0}$, as $%
K,M\rightarrow +\infty $.
\end{theorem}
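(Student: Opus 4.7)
The plan is to combine the two previously established results, namely the pointwise almost sure convergence of the MLE-based estimator to the oracle TPE (Theorem \ref{LSEConvergence}) and the Stone--Weierstrass density of $\mathcal{F}_{\Delta_L}^{\Re}$ in $\mathcal{C}_{\Delta_L}^{\Re}$ (Lemma \ref{DenseFunctionLemma}), through a single triangle-inequality decomposition
\begin{equation*}
|\hat{f}_{TPE}^{M,\hat{\boldsymbol{\theta}}_{M}}(x) - f(x)| \;\le\; \bigl|\hat{f}_{TPE}^{M,\hat{\boldsymbol{\theta}}_{M}}(x) - \hat{f}_{TPE}^{M,\boldsymbol{\theta}_{M}}(x)\bigr| \;+\; \bigl|\hat{f}_{TPE}^{M,\boldsymbol{\theta}_{M}}(x) - f(x)\bigr|,
\end{equation*}
which splits the total error into an estimation error (first term, controlled as $K\to\infty$) and an approximation error (second term, controlled as $M\to\infty$). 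The driving observation is that these two limit directions can be chained: first fix $M$ and let $K\to\infty$ using Theorem \ref{LSEConvergence} and continuity of $\boldsymbol{\theta}_{M}\mapsto \hat{f}_{TPE}^{M,\boldsymbol{\theta}_{M}}(x)$; then let $M\to\infty$ and invoke the density lemma.

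The detailed execution proceeds in three steps. First, for a fixed $M$, Theorem \ref{LSEConvergence} gives $\hat{\boldsymbol{\theta}}_{M}\overset{a.s.}{\to}\boldsymbol{\theta}_{M}$; since the TPE is jointly continuous in $(x,\boldsymbol{\theta}_{M})$ on any compact product set, the continuous mapping theorem combined with an equicontinuity argument upgrades pointwise a.s.\ convergence to uniform a.s.\ convergence on any compact sub-interval $[x_{0}+\delta, x_{0}+L]$. Second, for the approximation error, Lemma \ref{DenseFunctionLemma} yields, for every $\varepsilon>0$ and every $L>0$, an integer $M_{0}=M_{0}(\varepsilon,L)$ and parameters $\boldsymbol{\theta}_{M_{0}}$ satisfying $\sup_{x\in\Delta_{L}}|\hat{f}_{TPE}^{M_{0},\boldsymbol{\theta}_{M_{0}}}(x)-f(x)|<\varepsilon/2$. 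Third, combining the two via the triangle inequality on $\Delta_{L}$ yields $\sup_{x\in\Delta_{L}}|\hat{f}_{TPE}^{M,\hat{\boldsymbol{\theta}}_{M}}(x)-f(x)|\to 0$ on an event of probability one, as $K,M\to\infty$.

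For the $\mathcal{L}^{1}$ statement, I would split $D_{M}=\int_{x_{0}}^{x_{0}+L}+\int_{x_{0}+L}^{+\infty}$. The first piece is bounded by $L\cdot\sup_{\Delta_{L}}|\hat{f}_{TPE}^{M,\hat{\boldsymbol{\theta}}_{M}}(x)-f(x)|$ and thus vanishes by the uniform a.s.\ convergence of the previous paragraph, followed by bounded/dominated convergence to remove the expectation if needed. The tail piece $\int_{x_{0}+L}^{+\infty}|\hat{f}_{TPE}^{M,\hat{\boldsymbol{\theta}}_{M}}(x)-f(x)|dx$ is controlled by first applying the triangle inequality, using $f\in\mathcal{L}^{1}(x_{0},+\infty)$ to make $\int_{x_{0}+L}^{+\infty}|f(x)|dx<\varepsilon$ for $L$ large, and using tail-decay constraints on the estimated parameters (e.g., $\mu_{n,m}+\tfrac{1}{2}\sigma_{n,m}^{2}\ln(x-x_{0})<-1$ for $x$ large, which holds once $\sigma_{n,m}^{2}<0$ is ruled out by suitable restrictions) to make the TPE tail integral small as well.

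The main obstacle will be passing from uniform convergence on compact intervals, which is what Stone--Weierstrass and the continuity argument naturally deliver, to uniform almost sure convergence on the unbounded ray $(x_{0},+\infty)$. The TPE summands $(x-x_{0})^{\mu_{n,m}+\tfrac{1}{2}\sigma_{n,m}^{2}\ln(x-x_{0})}$ grow super-polynomially for $\sigma_{n,m}^{2}>0$, so a true supremum on $(x_{0},+\infty)$ requires either an additional hypothesis on $f$ (compact effective support, polynomial tail, or tail integrability) or a delicate coupling in which the window $L=L(M)$ grows with $M$ while ensuring that both the approximation and estimation errors are simultaneously controlled. Without such control the unbounded supremum need not vanish, so a rigorous argument must impose or extract implicit moment/decay assumptions from the regression model of equation (\ref{TaylorNonLinearReg}).
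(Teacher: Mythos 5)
Your proposal follows essentially the same route as the paper's proof: a triangle-inequality split of $|\hat{f}_{TPE}^{M,\hat{\boldsymbol{\theta}}_{M}}(x)-f(x)|$ into an estimation error killed by Theorem \ref{LSEConvergence} as $K\rightarrow\infty$ and an approximation error killed by Lemma \ref{DenseFunctionLemma} as $M\rightarrow\infty$, carried out first on the compact interval $\Delta_{L}$ (the paper does this at the maximizing point $x_{Max}=\arg\sup_{x\in\Delta_{L}}|\hat{f}_{TPE}^{M,\hat{\boldsymbol{\theta}}_{M}}(x)-f(x)|$ via a reverse-triangle-inequality manipulation, which is your decomposition evaluated at a single point) and then extended to $(x_{0},+\infty)$ and to the $\mathcal{L}^{1}$ statement by the bound $D_{M}^{L}\leq L\sup_{x\in\Delta_{L}}|\hat{f}_{TPE}^{M,\hat{\boldsymbol{\theta}}_{M}}(x)-f(x)|$. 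The obstacle you flag in your last paragraph --- that uniform control on compacts does not by itself give a vanishing supremum or a vanishing tail integral on the unbounded ray, given the super-polynomial growth of the summands $(x-x_{0})^{\mu_{n,m}+\frac{1}{2}\sigma_{n,m}^{2}\ln(x-x_{0})}$ --- is genuinely present in the paper's argument as well, which passes from $\Delta_{L}$ to $(x_{0},+\infty)$ only by ``sending $L\rightarrow\infty$ with $L\delta_{0}\rightarrow 0$'' without controlling the tail; so your explicit identification of the needed decay or integrability hypotheses is a point in favor of your write-up rather than a deviation from the paper.
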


\begin{proof}
Consider the $\mathcal{L}^{1}-norm$ between the estimator $\hat{f}_{TPE}^{M,%
\hat{\boldsymbol{\theta}}_{M}}(x)$ and the true value continuous function $%
f(x)$ over the interval of $\Delta _{L}=[x_{0},x_{0}+L]$, $L>0$, defined by
\begin{equation}
D_{M}^{L}=\int_{x_{0}}^{x_{0}+L}|\hat{f}_{TPE}^{M,\hat{\boldsymbol{\theta}}%
_{M}}(x)-f(x)|dx,
\end{equation}%
and note that $D_{M}^{L}\rightarrow D_{M},$ as $L\rightarrow \infty ,$ and
in addition%
\begin{equation}
0\leq \int_{x_{0}}^{x_{0}+L}|\hat{f}_{TPE}^{M,\hat{\boldsymbol{\theta}}%
_{M}}(x)-f(x)|dx\leq L\sup_{x\in \Delta _{L}}|\hat{f}_{TPE}^{M,\hat{%
\boldsymbol{\theta}}_{M}}(x)-f(x)|<+\infty .  \label{Dmbound}
\end{equation}%
Define
\begin{equation}
x_{Max}=\arg \sup_{x\in \Delta _{L}}|\hat{f}_{TPE}^{M,\hat{%
\boldsymbol{\theta}}_{M}}(x)-f(x)|,  \label{XmaxDef}
\end{equation}%
and as a result of equation (\ref{TPEasConv}), we have
\begin{equation}
\hat{f}_{TPE}^{M,\hat{\boldsymbol{\theta}}_{M}}(x_{Max})\overset{p}{%
\rightarrow }\hat{f}_{TPE}^{M,\boldsymbol{\theta}_{M}}(x_{Max}).
\end{equation}%
Therefore, for any $\delta >0$ we can write
\begin{equation}
\lim_{K\rightarrow \infty }P(|\hat{f}_{TPE}^{M,\hat{\boldsymbol{\theta}}%
_{M}}(x_{Max})-\hat{f}_{TPE}^{M,\boldsymbol{\theta}_{M}}(x_{Max})|<\delta
)=1,
\end{equation}%
so that
\begin{equation}
\lim_{K\rightarrow \infty }P(|\hat{f}_{TPE}^{M,\hat{\boldsymbol{\theta}}%
_{M}}(x_{Max})-\hat{f}_{TPE}^{M,\boldsymbol{\theta}%
_{M}}(x_{Max})-f(x_{Max})+f(x_{Max})|<\delta )=1.
\end{equation}%
Since
\begin{multline}
|\hat{f}_{TPE}^{M,\hat{\boldsymbol{\theta}}_{M}}(x_{Max})-\hat{f}_{TPE}^{M,%
\boldsymbol{\theta}_{M}}(x_{Max})-f(x_{Max})+f(x_{Max})|\geq \\
||\hat{f}_{TPE}^{M,\hat{\boldsymbol{\theta}}_{M}}(x_{Max})-f(x_{Max})|-|\hat{%
f}_{TPE}^{M,\boldsymbol{\theta}_{M}}(x_{Max})-f(x_{Max})||,  \notag
\end{multline}%
we have
\begin{multline}
\lim_{K\rightarrow \infty }P(||\hat{f}_{TPE}^{M,\hat{\boldsymbol{\theta}}%
_{M}}(x_{Max})-f(x_{Max})|-|\hat{f}_{TPE}^{M,\boldsymbol{\theta}%
_{M}}(x_{Max})-f(x_{Max})||<\delta ) \\
\geq \lim_{K\rightarrow \infty }P(|\hat{f}_{TPE}^{M,\hat{\boldsymbol{\theta}}%
_{M}}(x_{Max})-\hat{f}_{TPE}^{M,\boldsymbol{\theta}%
_{M}}(x_{Max})-f(x_{Max})+f(x_{Max})|<\delta )=1,  \notag
\end{multline}%
so that
\begin{equation}
\lim_{K\rightarrow \infty }P(||\hat{f}_{TPE}^{M,\hat{\boldsymbol{\theta}}%
_{M}}(x_{Max})-f(x_{Max})|-|\hat{f}_{TPE}^{M,\boldsymbol{\theta}%
_{M}}(x_{Max})-f(x_{Max})||<\delta )=1.
\end{equation}%
Thus we can write
\begin{multline}
\lim_{K\rightarrow \infty }P(-\delta +|\hat{f}_{TPE}^{M,\boldsymbol{\theta}%
_{M}}(x_{Max})-f(x_{Max})|<|\hat{f}_{TPE}^{M,\hat{\boldsymbol{\theta}}%
_{M}}(x_{Max})-f(x_{Max})|< \\
\delta +|\hat{f}_{TPE}^{M,\boldsymbol{\theta}_{M}}(x_{Max})-f(x_{Max})|)=1
\notag
\end{multline}%
and therefore, we have
\begin{equation}
\lim_{K\rightarrow \infty }P(|\hat{f}_{TPE}^{M,\hat{\boldsymbol{\theta}}%
_{M}}(x_{Max})-f(x_{Max})|<\delta +|\hat{f}_{TPE}^{M,\boldsymbol{\theta}%
_{M}}(x_{Max})-f(x_{Max})|)\geq  \label{DeltaEq1}
\end{equation}%
\begin{multline*}
\lim_{K\rightarrow \infty }P(-\delta +|\hat{f}_{TPE}^{M,\boldsymbol{\theta}%
_{M}}(x_{Max})-f(x_{Max})|<|\hat{f}_{TPE}^{M,\hat{\boldsymbol{\theta}}%
_{M}}(x_{Max})-f(x_{Max})|< \\
\delta +|\hat{f}_{TPE}^{M,\boldsymbol{\theta}_{M}}(x_{Max})-f(x_{Max})|)=1.
\end{multline*}%
Using equation (\ref{LimitDenseTPE}), we have%
\begin{equation*}
\lim_{M\rightarrow \infty }\hat{f}_{TPE}^{M,\boldsymbol{\theta}%
_{M}}(x_{Max})=f(x_{Max}),
\end{equation*}%
so that for any $\delta _{1}>0,$ there exists $M^{\prime }>0,$ such that for
any $M>M^{\prime }$, we have
\begin{equation}
|\hat{f}_{TPE}^{M,\boldsymbol{\theta}_{M}}(x_{Max})-f(x_{Max})|<\delta _{1},
\end{equation}%
and adding $\delta $ on both sides, we obtain%
\begin{equation}
\delta +|\hat{f}_{TPE}^{M,\boldsymbol{\theta}_{M}}(x_{Max})-f(x_{Max})|<%
\delta +\delta _{1}=\delta _{0}.
\end{equation}%
As a result, for any $M>M^{\prime }$, equation (\ref{DeltaEq1}) becomes
\begin{equation}
\begin{split}
1& =\lim_{K\rightarrow \infty }P(|\hat{f}_{TPE}^{M,\hat{\boldsymbol{\theta}}%
_{M}}(x_{Max})-f(x_{Max})|<\delta +|\hat{f}_{TPE}^{M,\boldsymbol{\theta}%
_{M}}(x_{Max})-f(x_{Max})|) \\
& \leq \lim_{K\rightarrow \infty }P(|\hat{f}_{TPE}^{M,\hat{%
\boldsymbol{\theta}}_{M}}(x_{Max})-f(x_{Max})|<\delta _{0}),
\end{split}%
\end{equation}%
and since $\delta _{0}$ is also arbitrarily chosen, we have
\begin{equation}
\lim_{M\rightarrow \infty }\lim_{K\rightarrow \infty }P(|\hat{f}_{TPE}^{M,%
\hat{\boldsymbol{\theta}}_{M}}(x_{Max})-f(x_{Max})|<\delta _{0})=1,
\end{equation}%
so that
\begin{equation}
\lim_{M\rightarrow \infty }\hat{f}_{TPE}^{M,\hat{\boldsymbol{\theta}}%
_{M}}(x_{Max})\overset{\mbox{p}}{\longrightarrow }f(x_{Max}).
\end{equation}%
Now using equation (\ref{XmaxDef}), we can write%
\begin{equation}
\lim_{M\rightarrow \infty }P(|\hat{f}_{TPE}^{M,\hat{\boldsymbol{\theta}}%
_{M}}(x_{Max})-f(x_{Max})|<\delta _{0})=\lim_{M\rightarrow \infty
}P(\sup_{x}|\hat{f}_{TPE}^{M,\hat{\boldsymbol{\theta}}_{M}}(x)-f(x)|<\delta
_{0})=1,
\end{equation}%
for arbitrary $\delta _{0}>0,$ and an appeal to continuity of probability
measure (Micheas, 2018, Theorem 4.13) yields%
\begin{equation*}
P(\lim_{M\rightarrow \infty }\sup_{x}|\hat{f}_{TPE}^{M,\hat{%
\boldsymbol{\theta}}_{M}}(x)-f(x)|=0)=1,
\end{equation*}%
i.e., $\hat{f}_{TPE}^{M,\hat{\boldsymbol{\theta}}_{M}}(x)$ converges
uniformly almost surely to the function $f(x)$, for all $x\in \Re $, as $%
K,M\rightarrow +\infty $. Now using equation (\ref{Dmbound}) we can write
\begin{equation*}
\lim_{M\rightarrow \infty }P\left( D_{M}^{L}<L\delta _{0}\right) \geq
\lim_{M\rightarrow \infty }P(L\sup_{x}|\hat{f}_{TPE}^{M,\hat{%
\boldsymbol{\theta}}_{M}}(x)-f(x)|<L\delta _{0})=1,
\end{equation*}%
and sending $L\rightarrow \infty ,$ with $L\delta _{0}\rightarrow 0,$ we have%
\begin{equation*}
\lim_{M\rightarrow \infty }P\left( D_{M}\rightarrow 0\right) =1,
\end{equation*}%
which leads to%
\begin{equation}
\hat{f}_{TPE}^{M,\hat{\boldsymbol{\theta}}_{M}}(x)\overset{\mathcal{L}^{1}}{%
\rightarrow }f(x),
\end{equation}%
as $K,M\rightarrow +\infty .$
\end{proof}

\subsection{Extension to the Multivariate Case\label{s3.2}}

Based on the results of the previous section, the extension to higher
dimensions is straightforward. In particular, suppose now that $f:\Re
^{d}\rightarrow \Re $, is analytic at the point $\mathbf{x}_{0}=$ $%
(x_{1,0},x_{2,0},$ $\dots ,x_{d,0})\in \Re ^{d},$ so that its Taylor
expansion is given by
\begin{equation*}
f(\mathbf{x})=\sum_{n_{1}=0}^{+\infty }\sum_{n_{2}=0}^{+\infty }\dots
\sum_{n_{d}=0}^{+\infty }a_{n_{1},n_{2},\dots ,n_{d}}
(x_{1}-x_{1,0})^{n_{1}}(x_{2}-x_{2,0})^{n_{2}}\dots (x_{d}-x_{d,0})^{n_{d}},
\end{equation*}%
where $\mathbf{x}=$ $(x_{1},x_{2},$ $\dots ,x_{d})\in \Re ^{d},$ and
\begin{equation*}
a_{n_{1},n_{2},\dots ,n_{d}}=\frac{1}{n_{1}!n_{2}!\dots n_{d}!}\left( \frac{%
\partial ^{n_{1}+n_{2}+\dots +n_{d}}f}{\partial x_{1}^{n_{1}}\partial
x_{2}^{n_{2}}\dots \partial x_{d}^{n_{d}}}\right) (\mathbf{x}_{0}),
\end{equation*}%
denotes all the coefficients. Thus, the truncated Taylor series with $M$
terms is given by
\begin{equation}
\widehat{f}(\mathbf{x})=\sum_{n_{1}=0}^{M_{1}-1}\sum_{n_{2}=0}^{M_{2}-1}%
\dots
\sum_{n_{d}=0}^{M_{d}-1}a_{n_{1},n_{2},...,n_{d}}(x_{1}-x_{1,0})^{n_{1}}(x_{2}-x_{2,0})^{n_{2}}\dots (x_{d}-x_{d,0})^{n_{d}},
\end{equation}%
where $M_{1}M_{2}\dots M_{d}=M$.

Similarly to the univariate case, consider a region $\mathcal{W}\subset \Re
^{d+1}$, and suppose that we observe $v$ events $\{\mathbf{s}%
_{k}\}_{j=1}^{v}=\{(a_{j},n_{1,j},n_{2,j},\dots ,n_{d,j})\}_{j=1}^{v}$,
where $(a_{j},n_{1,j},n_{2,j},\dots ,n_{d,j})\in $ $\mathcal{W}$. Then, we
define the Poisson point process $\mathcal{N}$ over the window $\mathcal{W}$%
, with mixture intensity function $\lambda (\mathbf{s})$, $\mathbf{s}%
=(a,n_{1},n_{2},\dots ,n_{d})\in \mathcal{W}$, given by
\begin{equation}
\lambda (a,n_{1},n_{2},\dots ,n_{d})=\lambda
\sum_{m=1}^{M}p_{m}g_{m}(a,n_{1},n_{2},\dots ,n_{d}|\mathbf{\mu }_{m},%
\mathbf{\Sigma }_{m}),  \label{MultIntesFull}
\end{equation}%
where $0\leq p_{m}\leq 1$, $p_{1}+p_{2}+\dots +p_{M}=1$,
\begin{equation}
\mathbf{\mu }_{m}=(%
\begin{array}{ccccc}
\mu _{a,m} & \mu _{n_{1},m} & \mu _{n_{2},m} & \dots & \mu _{n_{d},m}%
\end{array}%
)=(%
\begin{array}{cc}
\mu _{a,m} & \mathbf{\xi }_{m}%
\end{array}%
),  \label{MuSplit}
\end{equation}%
with $\mathbf{\xi }_{m}=(%
\begin{array}{cccc}
\mu _{n_{1},m} & \mu _{n_{2},m} & \dots & \mu _{n_{d},m}%
\end{array}%
),$ and in general,%
\begin{equation}
\mathbf{\Sigma }_{m}=%
\begin{pmatrix}
\sigma _{a,m}^{2} & \rho _{1,m}\sigma _{a,m}\sigma _{n_{1},m} & \rho
_{2,m}\sigma _{a,m}\sigma _{n_{2},m} & \dots & \rho _{d,m}\sigma
_{a,m}\sigma _{n_{d},m} \\
\rho _{1,m}\sigma _{a,m}\sigma _{n_{1},m} & \sigma _{n_{1},m}^{2} & \sigma
_{n_{1},n_{2},m} & \dots & \sigma _{n_{1},n_{d},m} \\
\rho _{2,m}\sigma _{a,m}\sigma _{n_{2},m} & \sigma _{n_{1},n_{2},m} & \sigma
_{n_{2},m}^{2} & \dots & \sigma _{n_{2},n_{d},m} \\
\dots & \dots & \dots & \dots & \dots \\
\rho _{d,m}\sigma _{a,m}\sigma _{n_{d},m} & \sigma _{n_{1},n_{d},m} & \sigma
_{n_{2},n_{d},m} & \dots & \sigma _{n_{d},m}^{2}%
\end{pmatrix}%
,
\end{equation}%
so that
\begin{equation*}
E[\mathcal{N}(B)]=\Lambda (B)=\int_{B}\lambda (\mathbf{s})\mu _{d+1}(d%
\mathbf{s})=M,
\end{equation*}%
where $\mu _{d+1}$ denotes Lebesgue measure in $\Re ^{d+1}$. Intuitively, it
makes sense for the powers of the Taylor expansion terms to be independent,
and therefore we will consider the diagonal structure below%
\begin{equation}
\mathbf{\Sigma }_{m}=%
\begin{pmatrix}
\sigma _{a,m}^{2} & \rho _{1,m}\sigma _{a,m}\sigma _{n_{1},m} & \rho
_{2,m}\sigma _{a,m}\sigma _{n_{2},m} & \dots & \rho _{d,m}\sigma
_{a,m}\sigma _{n_{d},m} \\
\rho _{1,m}\sigma _{a,m}\sigma _{n_{1},m} & \sigma _{n_{1},m}^{2} & 0 & \dots
& 0 \\
\rho _{2,m}\sigma _{a,m}\sigma _{n_{2},m} & 0 & \sigma _{n_{2},m}^{2} & \dots
& 0 \\
\dots & \dots & \dots & \dots & \dots \\
\rho _{d,m}\sigma _{a,m}\sigma _{n_{d},m} & 0 & 0 & \dots & \sigma
_{n_{d},m}^{2}%
\end{pmatrix}%
,
\end{equation}%
and in addition, we will write%
\begin{equation}
\mathbf{\Sigma }_{m}=%
\begin{pmatrix}
\sigma _{a,m}^{2} & \Sigma _{12}^{T} \\
\Sigma _{12} & \mathbf{\Sigma }_{22}%
\end{pmatrix}%
,  \label{CovMat}
\end{equation}%
where $\Sigma _{12}^{T}=(\rho _{1,m}\sigma _{a,m}\sigma _{n_{1},m},\dots
,\rho _{d,m}\sigma _{a,m}\sigma _{n_{d},m}),$ and $\mathbf{\Sigma }%
_{22}=diag(\sigma _{n_{1},m}^{2},\dots ,\sigma _{n_{d},m}^{2}).$ Similarly
to the univariate case, we set $\lambda =E[\mathcal{N}(B)]=M,$ and $p_{m}=%
\frac{1}{M}$ for all $m=1,2,\dots ,M$. As a result, the intensity function
becomes%
\begin{equation}
\lambda (a,n_{1},n_{2},\dots ,n_{d})=\sum_{m=1}^{M}g_{m}(a,n_{1},n_{2},\dots
,n_{d}|\mathbf{\mu }_{m},\mathbf{\Sigma }_{m}).
\label{MultivIntensityFunction}
\end{equation}

Now, given a realization from this IPPP, say, $\varphi
_{v}=\{(a_{1},n_{1,1},n_{2,1},\dots $, $n_{d,1})$, $(a_{2},$ $%
n_{1,2},n_{2,2},\dots ,n_{d,2})$, $\dots $, $(a_{v},n_{1,v},n_{2,v},\dots
,n_{d,v})\}$, and for any $\mathbf{x}$ which satisfies $x_{1}>x_{1,0},$ $%
x_{2}>x_{2,0},$ $\dots ,$ $x_{d}>x_{d,0}$, the function $f(\mathbf{x})$ is
constructed by
\begin{equation*}
\hat{f}(\mathbf{x})=%
\sum_{j=1}^{v}a_{j}(x_{1}-x_{1,0})^{n_{1,j}}(x_{2}-x_{2,0})^{n_{2,j}}\dots
(x_{d}-x_{0,d})^{n_{d,j}}.
\end{equation*}

The multivariate analog of Theorem (\ref{TaylorExpansionMeanThm}) is
presented next.

\begin{theorem}[Multivariate Taylor Expansion via IPPP]
\label{MultivTaylorExpansionMeanThm} Consider the random variables
\begin{equation}
\hat{f}_{\mathcal{N}}(\mathbf{x})=\sum_{(a,n_{1},n_{2},\dots ,n_{d})\in
\mathcal{N}}a(x_{1}-x_{1,0})^{n_{1}}(x_{2}-x_{2,0})^{n_{2}}\dots
(x_{d}-x_{d,0})^{n_{d}},
\end{equation}%
where $\mathcal{N}$ denotes an IPPP with the intensity surface of equation (%
\ref{MultIntesFull}), and assume that $x_{1}>x_{1,0},$ $x_{2}>x_{2,0},\dots
, $ $x_{d}>x_{d,0}$. Then the multivariate Taylor expansion Poisson point
process estimator (MTPE) of the function $f(\mathbf{x})$ is given by
\begin{equation}
\begin{split}
\hat{f}_{MTPE}^{M,\boldsymbol{\theta}_M}(\mathbf{x})= E(\hat{f}_{\mathcal{N}%
}(\mathbf{x}))=& \lambda \sum_{m=1}^{M}p_{m}\left( \mu
_{a,m}+\sum_{r=1}^{d}\rho _{r,m}\sigma _{a,m}\sigma _{n_{r},m}\ln
(x_{r}-x_{r,0})\right) \\
& \prod_{r=1}^{d}(x_{r}-x_{r,0})^{\mu _{n_{r},m}+\frac{\sigma
_{n_{_{r}},m}^{2}}{2}\ln (x_{r}-x_{r,0})}.
\end{split}
\label{e3.4}
\end{equation}
\end{theorem}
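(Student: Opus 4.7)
The plan is to mirror the univariate proof of Theorem \ref{TaylorExpansionMeanThm}, leveraging Campbell's theorem together with the conditional-marginal factorization of the multivariate normal density, and exploiting the block structure of $\mathbf{\Sigma}_m$ in (\ref{CovMat}) where $\mathbf{\Sigma}_{22}$ is diagonal. This diagonality is precisely what allows the $(d+1)$-dimensional integral to decouple into a linear combination of products of univariate integrals already evaluated in equations (\ref{Exp1}) and (\ref{Exp2}).

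I would first apply Campbell's theorem to the integrand $h(a, n_1, \dots, n_d) = a \prod_{r=1}^d (x_r - x_{r,0})^{n_r}$, giving $E(\hat{f}_\mathcal{N}(\mathbf{x})) = \lambda \sum_{m=1}^M p_m I_m$, where $I_m$ integrates $h$ against the $m$-th Gaussian component. Using the split $\mathbf{\mu}_m = (\mu_{a,m}, \mathbf{\xi}_m)$ from (\ref{MuSplit}) and the block form of $\mathbf{\Sigma}_m$ in (\ref{CovMat}), I factor $g_m(a, \mathbf{n}) = g_m(\mathbf{n}) g_m(a \mid \mathbf{n})$ and integrate $a$ out first. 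Because $\mathbf{\Sigma}_{22}$ is diagonal, $\Sigma_{12}^T \mathbf{\Sigma}_{22}^{-1}(\mathbf{n} - \mathbf{\xi}_m)$ telescopes into a single sum and the conditional mean reduces cleanly to
\begin{equation*}
E[a \mid \mathbf{n}] = \mu_{a,m} + \sum_{r=1}^d \rho_{r,m} \frac{\sigma_{a,m}}{\sigma_{n_r,m}} (n_r - \mu_{n_r,m}),
\end{equation*}
while $g_m(\mathbf{n})$ factors as a product of independent univariate $N(\mu_{n_r,m}, \sigma_{n_r,m}^2)$ densities.

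Substituting back and writing $\prod_r (x_r - x_{r,0})^{n_r} = \exp\bigl(\sum_r n_r \log(x_r - x_{r,0})\bigr)$, the integral over $\mathbf{n}$ decouples into
\begin{equation*}
I_m = \mu_{a,m} \prod_{r=1}^d E[(x_r - x_{r,0})^{n_r}] + \sum_{r=1}^d \frac{\rho_{r,m} \sigma_{a,m}}{\sigma_{n_r,m}} E[(n_r - \mu_{n_r,m})(x_r - x_{r,0})^{n_r}] \prod_{s \neq r} E[(x_s - x_{s,0})^{n_s}],
\end{equation*}
where each univariate expectation was already computed in (\ref{Exp1}) and (\ref{Exp2}), with the appropriate component index. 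Factoring the common product $\prod_{r=1}^d (x_r - x_{r,0})^{\mu_{n_r,m} + \frac{\sigma_{n_r,m}^2}{2} \log(x_r - x_{r,0})}$, and simplifying $\rho_{r,m} \sigma_{a,m} \sigma_{n_r,m}^2 / \sigma_{n_r,m} = \rho_{r,m} \sigma_{a,m} \sigma_{n_r,m}$, collapses the bracketed coefficient into $\mu_{a,m} + \sum_{r=1}^d \rho_{r,m} \sigma_{a,m} \sigma_{n_r,m} \log(x_r - x_{r,0})$, which is exactly the $m$-th summand of (\ref{e3.4}).

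The main obstacle is bookkeeping rather than analysis: one must track carefully how the diagonal block $\mathbf{\Sigma}_{22}$ simultaneously (i) simplifies $E[a \mid \mathbf{n}]$ to an explicit linear combination of the $n_r$'s, and (ii) renders the components of $\mathbf{n}$ independent so that the cross terms split as $E[(n_r - \mu_{n_r,m})(x_r - x_{r,0})^{n_r}] \prod_{s \neq r} E[(x_s - x_{s,0})^{n_s}]$. Without diagonality, neither decoupling would hold and the integrals would not reduce to the univariate formulas; once these two reductions are aligned, the remainder is a direct citation of (\ref{Exp1})–(\ref{Exp2}) followed by collecting exponents.
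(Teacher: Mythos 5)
Your proposal is correct and follows essentially the same route as the paper's proof: Campbell's theorem, the factorization $g_m(a,\mathbf{n})=g_m(\mathbf{n})\,g_m(a\mid\mathbf{n})$ with the conditional mean $\mu_{a,m}+\Sigma_{12}\mathbf{\Sigma}_{22}^{-1}(\mathbf{n}-\mathbf{\xi}_m)$ simplified via the diagonal $\mathbf{\Sigma}_{22}$, and the decoupling of the $\mathbf{n}$-integral into products of the univariate expectations (\ref{Exp1}) and (\ref{Exp2}). The decomposition $I_m=\mu_{a,m}\prod_r I_r+\sigma_{a,m}\sum_r\frac{\rho_{r,m}}{\sigma_{n_r,m}}I_r^{\ast}$ and the final collection of exponents are exactly what the paper does.
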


\begin{proof}
Similarly to the univariate case, let $\mathbf{n}=(n_{1},n_{2},\dots ,n_{d})$%
, $h(a,\mathbf{n})=a(x_{1}-x_{1,0})^{n_{1}}$ $\dots $ $%
(x_{d}-x_{d,0})^{n_{d}}$, $\mathbf{x}=$ $(x_{1},x_{2},$ $\dots ,x_{d}),$ $%
\mathbf{x}_{0}=$ $(x_{1,0},x_{2,0},$ $\dots ,x_{d,0}),$ and write
\begin{equation}
\hat{f}_{\mathcal{N}}(\mathbf{x})=\sum_{(a,\mathbf{n})\in \mathcal{N}%
}a(x_{1}-x_{1,0})^{n_{1}}\dots (x_{d}-x_{d,0})^{n_{d}}=\sum_{(a,\mathbf{n}%
)\in \mathcal{N}}h(a,\mathbf{n}).
\end{equation}%
An appeal to Campbell's theorem (\cite{r47}) for point process sums, yields
\begin{equation*}
\begin{split}
E(\hat{f}_{\mathcal{N}}(\mathbf{x}))& =E\left( \sum_{(a,\mathbf{n})\in
\mathcal{N}}h(a,\mathbf{n})\right) =\int_{\Re ^{d+1}}h(a,\mathbf{n})\lambda
(a,\mathbf{n})dad\mathbf{n} \\
& =\lambda \int_{\Re ^{d+1}}a(x_{1}-x_{1,0})^{n_{1}}\dots
(x_{d}-x_{d,0})^{n_{d}}\sum_{m=1}^{M}p_{m}g(a,\mathbf{n}|\mathbf{\mu }_{m},%
\mathbf{\Sigma }_{m})dad\mathbf{n}
\end{split}%
\end{equation*}%
so that%
\begin{equation}
E(\hat{f}_{\mathcal{N}}(\mathbf{x}))=\lambda \sum_{m=1}^{M}p_{m}\int_{\Re
^{d+1}}a(x_{1}-x_{1,0})^{n_{1}}\dots (x_{d}-x_{d,0})^{n_{d}}g_{m}(a,\mathbf{n%
}|\mathbf{\mu }_{m},\mathbf{\Sigma }_{m})dad\mathbf{n},
\label{MTPEproofMean}
\end{equation}%
and we require calculation of the integral above. Recall equations (\ref%
{MuSplit}) and (\ref{CovMat}), and write the joint multivariate normal
component distribution as%
\begin{eqnarray}
g_{m}(a,\mathbf{n}|\mathbf{\mu }_{m},\mathbf{\Sigma }_{m}) &=&g_{m}(\mathbf{n%
}|\mathbf{\xi }_{m},\mathbf{\Sigma }_{22})g_{m}(a|\mathbf{n},\mathbf{\mu }%
_{m},\mathbf{\Sigma }_{m})  \notag \\
&=&g_{m}(a|\mathbf{n},\mathbf{\mu }_{m},\mathbf{\Sigma }_{m})\prod%
\limits_{r=1}^{d}\phi (n_{r}|\mu _{n_{r},m},\sigma _{n_{r},m}^{2}),
\end{eqnarray}%
where%
\begin{equation*}
g_{m}(\mathbf{n}|\mathbf{\xi }_{m},\mathbf{\Sigma }_{22})=\prod%
\limits_{i=1}^{d}\phi (n_{r}|\mu _{n_{r},m},\sigma _{n_{r},m}^{2}),
\end{equation*}%
with $\phi (n_{r}|\mu _{n_{r},m},\sigma _{n_{r},m}^{2})$ the density of a
normal $N(\mu _{n_{r},m},\sigma _{n_{r},m}^{2})$. Then we can write%
\begin{equation*}
\begin{split}
I& =\int_{\Re ^{d+1}}a(x_{1}-x_{1,0})^{n_{1}}\dots
(x_{d}-x_{d,0})^{n_{d}}g_{m}(a,\mathbf{n}|\mathbf{\mu }_{m},\mathbf{\Sigma }%
_{m})dad\mathbf{n} \\
& =\int_{\Re ^{d}}(x_{1}-x_{1,0})^{n_{1}}\dots (x_{d}-x_{d,0})^{n_{d}}g_{m}(%
\mathbf{n}|\mathbf{\xi }_{m},\mathbf{\Sigma }_{22})\int_{\Re }ag_{m}(a|%
\mathbf{n},\mathbf{\mu }_{m},\mathbf{\Sigma }_{m})dad\mathbf{n} \\
& =\int_{\Re ^{d}}E^{g_{m}}(a|\mathbf{n},\mathbf{\mu }_{m},\mathbf{\Sigma }%
_{m})\prod\limits_{q=1}^{d}(x_{q}-x_{q,0})^{n_{q}}\phi (n_{q}|\mu
_{n_{q},m},\sigma _{n_{q},m}^{2})d\mathbf{n},
\end{split}%
\end{equation*}%
with $a|\mathbf{n},\mathbf{\mu }_{m},\mathbf{\Sigma }_{m}\backsim N(\mu
_{a,m}+\Sigma _{12}\Sigma _{22}^{-1}(\mathbf{n}-\mathbf{\xi }_{m}),\sigma
_{a,m}-\Sigma _{12}\Sigma _{22}^{-1}\Sigma _{21}),$ so that%
\begin{equation*}
E^{g_{m}}(a|\mathbf{n},\mathbf{\mu }_{m},\mathbf{\Sigma }_{m})=\mu
_{a,m}+\Sigma _{12}\Sigma _{22}^{-1}(\mathbf{n}-\mathbf{\xi }_{m})=\mu
_{a,m}+\sum_{r=1}^{d}\rho _{r,m}\frac{\sigma _{a,m}}{\sigma _{n_{r},m}}%
(n_{r}-\mu _{n_{r},m}).
\end{equation*}%
As a result, the integral $I$ becomes%
\begin{eqnarray*}
I &=&\int_{\Re ^{d}}\left[ \mu _{a,m}+\sigma _{a,m}\sum_{r=1}^{d}\frac{\rho
_{r,m}}{\sigma _{n_{r},m}}(n_{r}-\mu _{n_{r},m})\right] \prod%
\limits_{q=1}^{d}(x_{q}-x_{q,0})^{n_{q}}\phi (n_{q}|\mu _{n_{q},m},\sigma
_{n_{q},m}^{2})d\mathbf{n} \\
&=&\mu _{a,m}\prod\limits_{r=1}^{d}\int_{\Re }(x_{r}-x_{r,0})^{n_{r}}\phi
(n_{r}|\mu _{n_{r},m},\sigma _{n_{r},m}^{2})dn_{r} \\
&&+\sigma _{a,m}\sum_{r=1}^{d}\frac{\rho _{r,m}}{\sigma _{n_{r},m}}\left[
\int_{\Re ^{d}}(n_{r}-\mu
_{n_{r},m})\prod\limits_{q=1}^{d}(x_{q}-x_{q,0})^{n_{q}}\phi (n_{q}|\mu
_{n_{q},m},\sigma _{n_{q},m}^{2})d\mathbf{n}\right] ,
\end{eqnarray*}%
and therefore%
\begin{equation*}
I=\mu _{a,m}\prod\limits_{r=1}^{d}I_{r}+\sigma _{a,m}\sum_{r=1}^{d}\frac{%
\rho _{r,m}}{\sigma _{n_{r},m}}I_{r}^{\ast },
\end{equation*}%
where from equation (\ref{Exp1}) we have%
\begin{equation*}
I_{r}=\int_{\Re }(x_{r}-x_{r,0})^{n_{r}}\phi (n_{r}|\mu _{n_{r},m},\sigma
_{n_{r},m}^{2})dn_{r}=(x_{r}-x_{r,0})^{\mu _{n_{r},m}+\frac{1}{2}\sigma
_{n_{_{r}},m}^{2}\log (x_{r}-x_{r,0})},
\end{equation*}%
and%
\begin{equation*}
I_{r}^{\ast }=\int_{\Re ^{d}}(n_{r}-\mu
_{n_{r},m})\prod\limits_{q=1}^{d}(x_{q}-x_{q,0})^{n_{q}}\phi (n_{q}|\mu
_{n_{q},m},\sigma _{n_{q},m}^{2})d\mathbf{n}.
\end{equation*}%
Now write%
\begin{eqnarray*}
I_{r}^{\ast } &=&\left[ \int_{\Re }(n_{r}-\mu
_{n_{r},m})(x_{r}-x_{r,0})^{n_{r}}\phi (n_{r}|\mu _{n_{r},m},\sigma
_{n_{r},m}^{2})dn_{r}\right] \\
&&\prod\limits_{q=1,q\neq r}^{d}\int_{\Re }(x_{q}-x_{q,0})^{n_{q}}\phi
(n_{q}|\mu _{n_{q},m},\sigma _{n_{q},m}^{2})d\mathbf{n},
\end{eqnarray*}%
and using equations (\ref{Exp1}) and (\ref{Exp2}) yields%
\begin{eqnarray*}
I_{r}^{\ast } &=&\left[ \sigma _{n_{r},m}^{2}\log (x_{r}-x_{r,0})\right]
(x_{r}-x_{r,0})^{\mu _{n_{r},m}+\frac{1}{2}\sigma _{n_{r},m}^{2}\log
(x_{r}-x_{r,0})} \\
&&\prod\limits_{q=1,q\neq r}^{d}(x_{q}-x_{q,0})^{\mu _{n_{q},m}+\frac{1}{2}%
\sigma _{n_{_{q}},m}^{2}\log (x_{q}-x_{q,0})} \\
&=&\sigma _{n_{r},m}^{2}\log
(x_{r}-x_{r,0})\prod\limits_{q=1}^{d}(x_{q}-x_{q,0})^{\mu _{n_{q},m}+\frac{1%
}{2}\sigma _{n_{_{q}},m}^{2}\log (x_{q}-x_{q,0})}.
\end{eqnarray*}%
Thus we can write%
\begin{eqnarray*}
I &=&\mu _{a,m}\prod\limits_{r=1}^{d}(x_{r}-x_{r,0})^{\mu _{n_{r},m}+\frac{1%
}{2}\sigma _{n_{_{r}},m}^{2}\log (x_{r}-x_{r,0})} \\
&&+\sigma _{a,m}\sum_{r=1}^{d}\frac{\rho _{r,m}}{\sigma _{n_{r},m}}\sigma
_{n_{r},m}^{2}\log (x_{r}-x_{r,0})\prod\limits_{q=1}^{d}(x_{q}-x_{q,0})^{\mu
_{n_{q},m}+\frac{1}{2}\sigma _{n_{_{q}},m}^{2}\log (x_{q}-x_{q,0})} \\
&=&\left[ \mu _{a,m}+\sigma _{a,m}\sum_{r=1}^{d}\rho _{r,m}\sigma
_{n_{r},m}\log (x_{r}-x_{r,0})\right] \prod%
\limits_{r=1}^{d}(x_{r}-x_{r,0})^{\mu _{n_{r},m}+\frac{1}{2}\sigma
_{n_{_{r}},m}^{2}\log (x_{r}-x_{r,0})},
\end{eqnarray*}%
as entertained.
\end{proof}

Similarly to the univariate case, we can obtain all the results of the
previous section for the MTPE (omitted). In particular, the modified version
of the MTPE is given by
\begin{equation}
\begin{split}
\hat{f}_{MTPE}^{M,\boldsymbol{\theta}_{M}}(\mathbf{x})=&
\sum_{m=1}^{M}\left( \mu _{a,m}+\sum_{r=1}^{d}\rho _{r,m}\sigma _{a,m}\sigma
_{n_{r},m}\ln (x_{r}-x_{r,0})\right) \\
& \prod_{r=1}^{d}(x_{r}-x_{r,0})^{\mu _{n_{r},m}+\frac{\sigma
_{n_{_{r}},m}^{2}}{2}\ln (x_{r}-x_{r,0})}.
\end{split}
\label{e3.5}
\end{equation}%
Next we put the theoretical results to use and discuss recovering a function
based on observed data.

\section{Implementation and simulation study}

In this section we consider the backward direction of the stochastic Taylor
expansion as follows: given observed inputs and outputs from a function,
estimate the coefficients of the underlying Poisson point process model. As
a result, we provide an estimator for the function itself within the range
of the observed inputs, and more importantly, we are able to perform
function extrapolation. We begin by discussing the algorithm required for
function estimation, followed by illustrative examples in order to study the
behavior of the proposed methodology in different scenarios.

\subsection{Algorithm: Recovering the function from data\label{s3.1}}

Suppose we have data $\boldsymbol{X}=\{\boldsymbol{x}_{k}\}_{k=1}^{K}$,
where $\boldsymbol{x}_{k}=(x_{1,k},x_{2,k},...,x_{d,k})$, with corresponding
values $\boldsymbol{y}=(y_{1},y_{2},...,y_{K})$, i.e., observed data from a
function $f:\Re ^{d}\rightarrow \Re $. Consider the non-linear regression
model%
\begin{equation}
y_{k}=\sum_{m=1}^{M}\left( \mu _{a,m}+\sum_{r=1}^{d}\rho _{r,m}\sigma
_{a,m}\sigma _{n_{r},m}\ln (x_{r,k}-x_{r,0})\right)
\prod_{r=1}^{d}(x_{r,k}-x_{r,0})^{\mu _{n_{r},m}+\frac{\sigma _{n_{r},m}^{2}%
}{2}\ln (x_{r,k}-x_{r,0})}+\epsilon _{k},
\end{equation}%
where $\epsilon _{k}\overset{iid}{\sim }N(0,\sigma ^{2})$, $k=1,2,...,K$.

The following Algorithm presents the steps in order to estimate the
parameters of the underlying IPPP model.

~\newline
\textbf{Algorithm 1}

\textbf{Step 1}: Set $\boldsymbol{x}_0=(\underset{k}{\min} (x_{1,k}),%
\underset{k}{\min}(x_{2,k}),\dots, \underset{k}{\min}(x_{d,k}))$, and choose
a maximum number of terms $M_{\max}$.

\textbf{Step 2}:

For each $M=1,2,\dots,M_{\max}$, fit the non-linear regression model, obtain
the MLE $\hat{\boldsymbol{\theta}}_M$ of $\boldsymbol{\theta}_M$ (same as
the least squares estimator), and calculate the MTPE $\hat{y}_k=\hat{f}%
_{MTPE}^ {M,\hat{\boldsymbol{\theta}}_M} (\mathbf{x}_k)$ using equation (\ref%
{e3.5}).

\textbf{Step 3}:

For each $M=1,2,\dots,M_{\max}$, calculate the residual sum of squares
\begin{equation}
RSS_M=\sum_{k=1}^K(y_k-\hat{f}^{M,\hat{\boldsymbol {\theta}}_M}_{TPE}(%
\boldsymbol{x}_k))^2,
\end{equation}
and choose the optimum number of terms $M$ that gives the smallest $RSS_M$,
i.e.,
\begin{equation}
\hat{M}=\arg\min(RSS_M).  \label{e4.2}
\end{equation}
The best MTPE is then given by $\hat{f}_{MTPE}^{\hat{M},\hat{%
\boldsymbol{\theta}}_{\hat{M}}}$.

Next we present several illustrative simulations in order to appreciate the
behavior of the proposed estimators and assess their performance.

\subsection{Simulations}

We conduct simulations using known functions, and compare the estimators
given by our algorithm against the truth. In particular, since we know the
true function, as a measure of overall performance we will calculate the
integrated distance between our final estimator $\hat{f}(\boldsymbol{x})$,
where $\boldsymbol{x}=(x_1,x_2,...,x_d)$, and the true function $f(%
\boldsymbol{x})$ over a certain set $W\subset\Re^ d$. The distance is given
by
\begin{equation}
D(\hat{f},f)=\int_W\left(\hat{f}(\boldsymbol{x})-f(\boldsymbol{x})\right)^2d%
\boldsymbol{x}.  \label{e4.1}
\end{equation}

In each of the following simulations, we consider a known function $f$ over
a given set $W$, and we draw a sample of size $K$ from $y_{k}=f(%
\boldsymbol{x}_{k})+\epsilon _{k}$, with $\epsilon _{k}\overset{iid}{\sim }%
N(0,\sigma ^{2})$, for some given $\sigma >0$, for all $k=1,2,...,K$. The
locations $\boldsymbol{x}_{k}$ are drawn uniformly and ordered, and then
used as the inputs to the function $f$. These samples $(\boldsymbol{x}%
_{k},y_{k})$ are then used in our algorithm to provide the TPE or MTPE for
the function, which allows us to extrapolate the function, as well as assess
the accuracy of the estimator by calculating the integrated distance of the
estimator and the truth using formula (\ref{e4.1}). All programming and
calculations where performed using R software, version 4.2.1.

\textbf{Univariate Examples}: In order to explore the behavior of the
estimator given by Algorithm 1, we perform simulation studies by considering
different functions, for several sample sizes. In all the examples that
follow, interpolation is near perfect which is a good indication that the
methodology is verified. However, as anticipated, during extrapolation and
for certain functions, we will observe departure from the truth the further
away we get from the data. This is standard behavior for any statistical
model when it comes to forecasting. From our simulations we have concluded
that this phenomenon occurs when first, the sample size is small and second,
the function under investigation is analytic and requires an infinite term
Taylor series.
\begin{figure}[htbp]
\centering \includegraphics[width=0.6\textwidth]{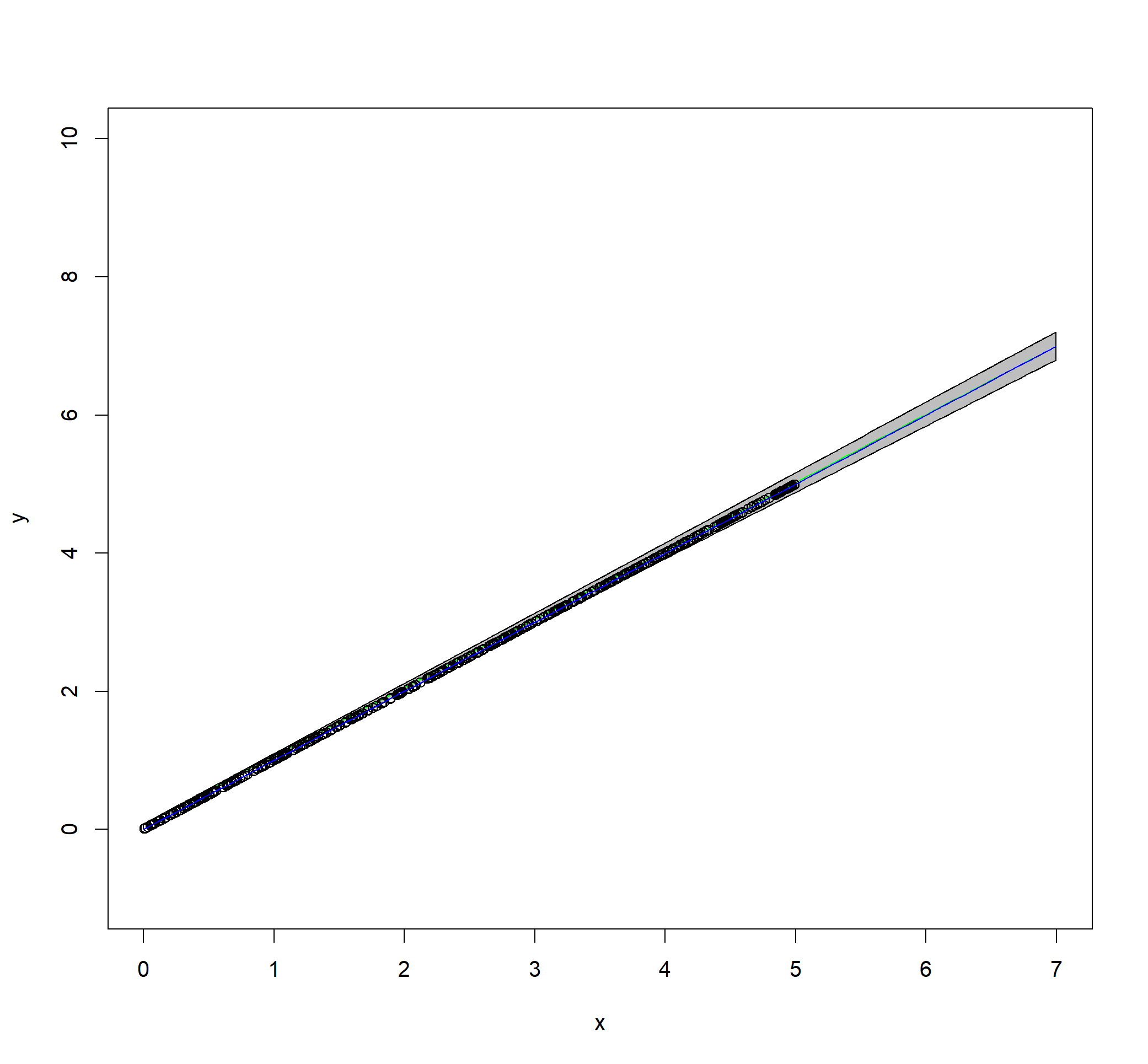}
\caption{Ideal function $f(x)=x$, with a sample of size $K=500$. As
expected, the estimated number of terms is $\hat{M}=1$. The TPE (red), the
mean of point process realizations (green), truth (blue), are all on top of
each other. The 95\% envelopes (grey) are also displayed.}
\label{straightline}
\end{figure}

As an ideal case, consider $f(x)=x$, with $x\in (0,5]$, $x_{0}=0$, $\sigma
=0.00001$, $M_{\max }=15$, and sample $K=500$ points. As expected, the TPE
over $x\in (0,7]$, yields an estimated number of terms of $\hat{M}=1$, with $%
D(\hat{f}_{500},f)=1.55161\cdot 10^{-10}$. The TPE is given by
\begin{equation}
\hat{f}_{500}(x)={\small (1.000023+2.031777\cdot 10^{-05}\ln
(x))x^{0.9999452+0.5\ln (x)2.517725\cdot 10^{-05}}.}
\end{equation}%
Notice how the estimates of the parameters adjust in order to give us a
complete recovery of the true function, since the $\log (x)$ function has no
contribution to the estimator (coefficients are near zero). In Figure \ref%
{straightline}, we display the TPE (red), the mean of point process
realizations (green), truth (blue), which are all on top of each other. The
95\% envelopes (grey) are also displayed. These bounds correspond to the $%
0.025$ and $0.975$ percentiles of 10000 STE realizations from the estimated
IPPP, using equation (\ref{TaylorPolyPPPEst}). All envelopes we present for
the examples that follow, are obtained in a similar fashion.

\begin{table}[htbp]
\centering
\begin{tabularx}{\textwidth}{XXXXXXXXX}
\hline
\raisebox{-0\height}[0.03\textwidth]{Sample} & $\hat M$ & $\boldsymbol{\hat\mu}_a$ & $\boldsymbol{\hat\mu}_n$ & $\boldsymbol{\hat\sigma}_a$ & $\boldsymbol{\hat\sigma}_n$ & $\boldsymbol{\hat\rho}$  & $D$ & Run Time \\ \hline
25 & 5 & $\tiny\begin{pmatrix}-3.770\\ -3.502\\ -1.248\\ 0.332\\ 3.214\end{pmatrix}$ & \tiny$\begin{pmatrix}0.549\\ 2.763\\ 5.778\\ 6.558\\ 4.294\end{pmatrix}$ & \tiny$\begin{pmatrix}
0.098\\ 0.1007\\ 0.096\\0.0895\\0.0991\end{pmatrix}$ & \tiny$\begin{pmatrix}0.348\\ 2.189\cdot 10^{-05}\\ 0.0191\\ 1.018\cdot 10^{-07}\\ 4.059\cdot 10^{-05}\end{pmatrix}$ & \tiny$\begin{pmatrix}
0.016\\ 0.0204\\ -0.028\\-0.0226\\ 0.0087\end{pmatrix}$ & 2283.702 & 2.92 s
 \\ \hline
\multicolumn{9}{c}{Estimator} \\
\hline
\multicolumn{9}{l}{$\hat{f}(x)=(-3.77+0.00056\ln(x)) x^{0.549+0.0606\ln(x)}+(-3.502+4.511\cdot 10^{-08}\ln(x))$}\\
\multicolumn{9}{c}{$x^{2.763+2.396\cdot 10^{-10} \ln(x)}+(-1.248-5.347\cdot 10^{-05}\ln(x))x^{5.778+1.831\cdot 10^{-04}\ln(x)}$}\\
\multicolumn{9}{c}{$+(0.332-2.061\cdot^{-10} \ln(x))x^{6.558+0.518\cdot 10^{-14}\ln(x)}$}\\
\multicolumn{9}{c}{$+(3.214+3.515\cdot 10^{-08}\ln(x))x^{4.294+0.824\cdot 10^{-09}\ln(x)}$}\\
\hline
\raisebox{-0\height}[0.03\textwidth]{Sample} & $\hat M$ & $\boldsymbol{\hat\mu}_a$ & $\boldsymbol{\hat\mu}_n$ & $\boldsymbol{\hat\sigma}_a$ & $\boldsymbol{\hat\sigma}_n$ & $\boldsymbol{\hat\rho}$  & $D$ & Run Time \\ \hline
100 & 5 & \tiny$\begin{pmatrix}-4.856\\-0.290\\0.385\\-0.127\\-0.067\end{pmatrix}$ & \tiny $\begin{pmatrix}1.372\\1.460\\1.6228\\0.543\\0.771\end{pmatrix}$ & \tiny $\begin{pmatrix}0.860\\5.834\\9.733\\12.422\\2.378\end{pmatrix}$ & \tiny $\begin{pmatrix}0.008\\1.046\\0.888\\0.490\\0.924\end{pmatrix}$ & \tiny$\begin{pmatrix}-0.289\\-0.999\\-0.272\\0.895\end{pmatrix}$ & 0.125 & 35.84 s
 \\ \hline
\multicolumn{9}{c}{Estimator} \\
\hline
\multicolumn{9}{l}{$\hat{f}(x)=(-4.856-0.002\ln(x))x^{1.372+0.004\ln(x)}+(-0.290-6.095\ln(x))x^{1.460+0.523\ln(x)}$}\\
\multicolumn{9}{c}{$+(0.385+8.586\ln(x))x^{1.628+0.444\ln(x)}+(-0.127-1.657\ln(x))x^{0.523+0.245\ln(x)}$}\\
\multicolumn{9}{c}{$+(-0.067+1.966\ln(x))x^{0.771+0.462\ln(x)}$}\\
\hline
\raisebox{-0\height}[0.03\textwidth]{Sample} & $\hat M$ & $\boldsymbol{\hat\mu}_a$ & $\boldsymbol{\hat\mu}_n$ & $\boldsymbol{\hat\sigma}_a$ & $\boldsymbol{\hat\sigma}_n$ & $\boldsymbol{\hat\rho}$  & $D$ & Run Time \\ \hline
500 & 4 & \tiny $\begin{pmatrix}-0.568\\-4.774\\-0.084\\0.468\end{pmatrix}$ & \tiny$\begin{pmatrix}0.841\\1.133\\1.458\\3.278\end{pmatrix}$ & \tiny$\begin{pmatrix}0.767\\2.751\\7.046\\0.849\end{pmatrix}$ & \tiny$\begin{pmatrix}0.336\\0.060\\0.233\\0.090\end{pmatrix}$ & \tiny$\begin{pmatrix}-0.9999\\0.381\\0.999997\\0.559\end{pmatrix}$ & 0.121 & 12.37 s
 \\ \hline
\multicolumn{9}{c}{Estimator} \\
\hline
\multicolumn{9}{l}{$\hat{f}(x)=(-0.568-0.258\ln(x))x^{0.841+0.168\ln(x)}+(-4.774+0.063\ln(x))x^{1.133+0.030\ln(x)}$}\\
\multicolumn{9}{c}{$+(-0.084+1.643\ln(x))x^{1.458+0.117\ln(x)}+(0.468+0.043\ln(x))x^{3.278+0.045\ln(x)}$}\\
\hline
\end{tabularx}
\caption{Simulation study for $f(x)=x^3-6x$, with $x\in(0,4]$, $x_0=0$, and $%
\protect\sigma=1$. We draw $K=25$, $100$, and $500$, samples from the true
function and set $M_{\max}=5$.}
\label{t5.1}
\end{table}

\begin{figure}[tbp]
\includegraphics[width=0.32\textwidth]{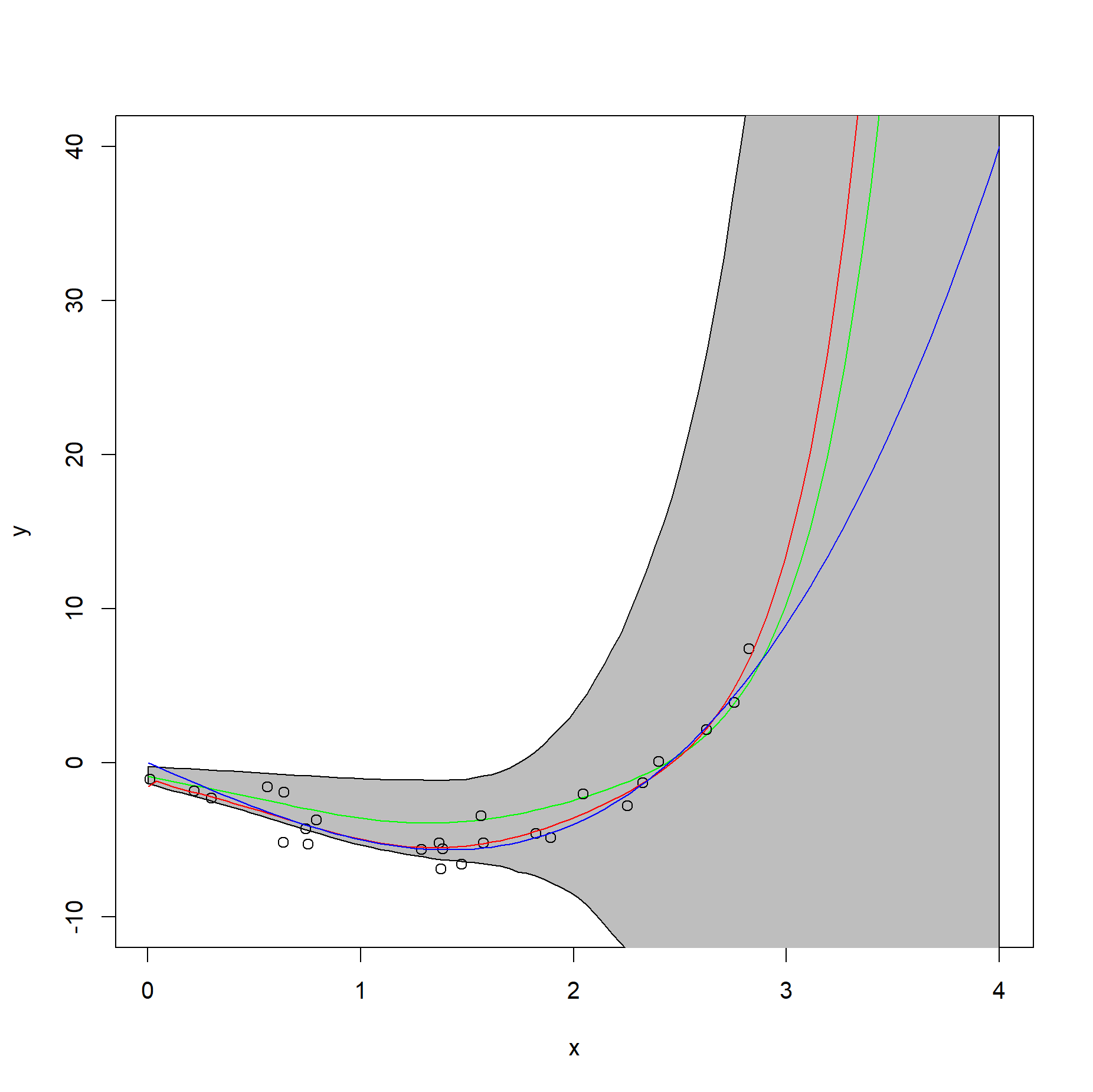} %
\includegraphics[width=0.32\textwidth]{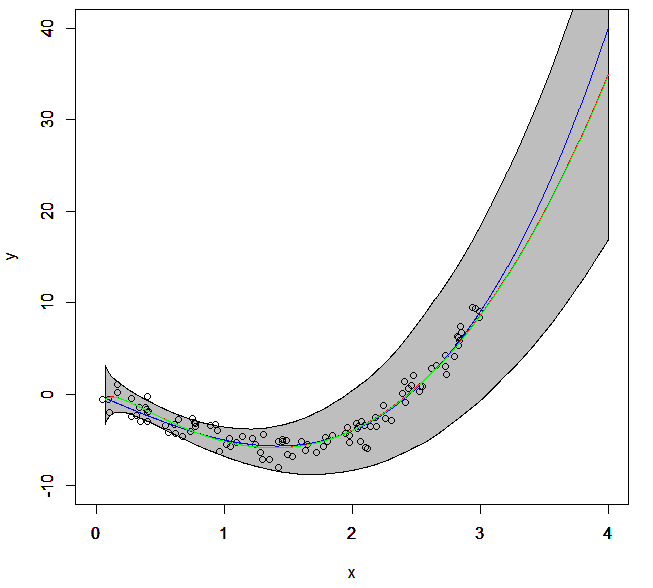} %
\includegraphics[width=0.32\textwidth]{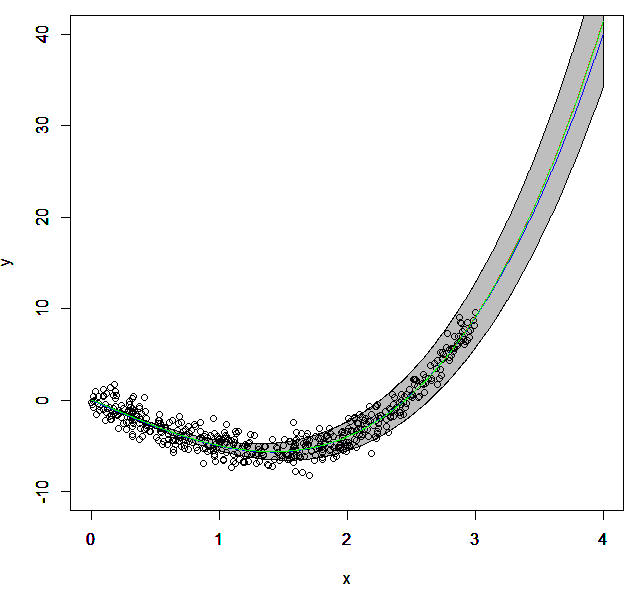}
\caption{Simulation study for $f(x)=x^3-6x$, with $x\in(0,4]$, $x_0=0$, and $%
\protect\sigma=1$. We draw $K=25$ (left), $100$ (middle), and $500$ (right),
samples from the true function and set $M_{\max}=5$. The TPE (red), the mean
of point process realizations (green), truth (blue), and the 95\% envelopes
(grey) are also displayed. }
\label{f5.1}
\end{figure}

Now we consider the function $f(x)=x^{3}-6x$, with $x\in (0,3]$, and set $%
x_{0}=0$, and $\sigma =1$. Once we obtain the TPE we perform extrapolation
in order to assess its performance in the interval $(0,4]$. We choose 3
different sample sizes of $K=25$, $100$, and $500$, and we set $M_{\max }=5$%
. Results are shown in Table \ref{t5.1} and we display the truth and fits in
Figure \ref{f5.1}. Notice how the estimator works in this case. Since we do
not have a perfect case (large $\sigma $), the final estimators consist of
more than two terms, with their estimated coefficients adjusting to give us
near perfect fits within the data. In terms of forecasting the truth in the
interval $(3,4]$, we can clearly see that the larger the sample size the
better the TPE fits, and this is also reflected in the integrated distances.

Next we consider a more complicated function that requires an infinite
number of terms in its Taylor series. Let $f(x)=x\sin(x)+e^{-x^2}+x%
\cos(x)/(x^2+1)$, over the interval $(0,3]$, and set $x_0=0, \sigma=0.2$. We
will assess the performance of the TPE in the interval $(0,4]$. We choose
three different sample sizes $K=25$, $100$, and $500$, and set $M_{\mbox{max}%
}=6$. Results are shown in Table \ref{t5.2} and Figure \ref{f5.2}. As
expected, we notice that the further away we get from the data
(extrapolation), the worst the estimated value of the function becomes,
however, the $95\%$ envelopes contain the truth.

\begin{table}[tbp]
\centering
\begin{tabularx}{\textwidth}{XXXXXXXXX}
\hline
\raisebox{-0\height}[0.03\textwidth]{Sample} &$\hat M$ & $\boldsymbol{\hat\mu}_a$ & $\boldsymbol{\hat\mu}_n$ & $\boldsymbol{\hat\sigma}_a$ & $\boldsymbol{\hat\sigma}_n$ & $\boldsymbol{\hat\rho}$  & $D$ & Run Time \\ \hline
25 & 6 & \tiny$\begin{pmatrix}0.733\\0.550\\0.177\\-0.078\\-0.010\\-0.001\end{pmatrix}$ & \tiny $\begin{pmatrix}-0.220\\0.586\\1.861\\2.558\\3.828\\4.499\end{pmatrix}$ & \tiny $\begin{pmatrix}1.999\\0.905\\1.197\\1.115\\1.066\\0.975\end{pmatrix}$ & \tiny $\begin{pmatrix}0.130\\0.387\\0.314\\0.308\\0.344\\0.319\end{pmatrix}$ & \tiny $\begin{pmatrix}0.168\\0.218\\0.333\\-0.055\\-0.059\\-0.006\end{pmatrix}$ & 1.048 & 202.4 s
 \\ \hline
\multicolumn{9}{c}{Estimator} \\
\hline
\multicolumn{9}{l}{$\hat{f}(x)=(0.733+0.044\ln(x))x^{-0.220+0.065\ln(x)}+(0.550+0.076\ln(x))x^{0.586+0.193\ln(x)}$}\\
\multicolumn{9}{c}{$+(0.177+0.125\ln(x))x^{1.861+0.157\ln(x)}+(-0.078-0.019\ln(x))x^{2.558+0.154\ln(x)}$}\\
\multicolumn{9}{c}{$+(-0.010-0.022\ln(x))x^{3.828+0.172\ln(x)}+(-0.001-0.002\ln(x))x^{4.499+0.159\ln(x)}$}\\
\hline
\raisebox{-0\height}[0.03\textwidth]{Sample} & $\hat M$ & $\boldsymbol{\hat\mu}_a$ & $\boldsymbol{\hat\mu}_n$ & $\boldsymbol{\hat\sigma}_a$ & $\boldsymbol{\hat\sigma}_n$ & $\boldsymbol{\hat\rho}$  & $D$ & Run Time \\ \hline
100 & 6 & \tiny$\begin{pmatrix}0.187\\-0.033\\  1.085\\  0.389\\ -0.194\\0.012
\end{pmatrix}$ & \tiny$\begin{pmatrix}0.400\\ 0.809\\ 1.503\\ 4.113\\ 3.610\\ 3.375\end{pmatrix}$ &\tiny $\begin{pmatrix}1.108\\ 15.056 \\ 0.454 \\ 0.214\\  3.278\\  0.657
\end{pmatrix}$ &\tiny $\begin{pmatrix}0.256\\ 0.335 \\0.205\\ 0.232 \\0.280 \\0.539
\end{pmatrix}$ & \tiny$\begin{pmatrix}-0.637\\ -0.345 \\ 0.873 \\ 0.147 \\-0.543\\ 0.090
\end{pmatrix}$ & 0.848 & 210.48 s
 \\ \hline
\multicolumn{9}{c}{Estimator} \\
\hline
\multicolumn{9}{l}{$\hat{f}(x)=(0.187-0.181\ln(x))x^{0.400+0.128\ln(x)}+(-0.033-1.742\ln(x))x^{0.809+0.168\ln(x)}$}\\
\multicolumn{9}{c}{$+(1.085+0.081\ln(x))x^{1.503+0.103\ln(x)}+(0.389+0.007\ln(x))x^{4.113+0.116\ln(x)}$}\\
\multicolumn{9}{c}{$+(-0.194-0.498\ln(x))x^{3.610+0.140\ln(x)}+(0.012+0.032\ln(x))x^{3.375+0.270\ln(x)}$}\\
\hline
\raisebox{-0\height}[0.03\textwidth]{Sample} &$\hat M$ & $\boldsymbol{\hat\mu}_a$ & $\boldsymbol{\hat\mu}_n$ & $\boldsymbol{\hat\sigma}_a$ & $\boldsymbol{\hat\sigma}_n$ & $\boldsymbol{\hat\rho}$  & $D$ & Run Time \\ \hline
500 & 4 & \tiny$\begin{pmatrix}0.459\\ 0.416\\0.636\\ 0.005
\end{pmatrix}$ & \tiny$\begin{pmatrix} 0.008\\0.840\\1.719\\ 3.301
\end{pmatrix}$ & \tiny$\begin{pmatrix}0.683\\ 8.435\\ 1.767\\ 0.753
\end{pmatrix}$ & \tiny$\begin{pmatrix}0.194\\ 0.170\\ 0.296\\ 0.244\end{pmatrix}$ &\tiny $\begin{pmatrix}-0.174 \\-0.903 \\ 0.439\\ -0.528\end{pmatrix}$  & 0.422 & 149.1 s
 \\ \hline
\multicolumn{9}{c}{Estimator} \\
\hline
\multicolumn{9}{l}{$\hat{f}(x)=(0.459-0.023\ln(x))x^{0.008+0.097\ln(x)}+(0.416-1.297\ln(x))x^{0.840+0.085\ln(x)}$}\\
\multicolumn{9}{c}{$+(0.636+0.230\ln(x))x^{1.719+0.148\ln(x)}+(0.005-0.097\ln(x))x^{3.301+0.122\ln(x)}$}\\
\hline
\end{tabularx}
\caption{Simulation study for $f(x)=x\sin(x)+e^{-x^2}+x\cos(x)/(x^2+1)$,
over the interval $x\in(0,4]$, with $x_0=0$, and $\protect\sigma=0.2$. We
draw $K=25$, $100$, and $500$, samples from the true function and set $%
M_{\max}=6$.}
\label{t5.2}
\end{table}

\begin{figure}[tbp]
\includegraphics[width=0.32\textwidth]{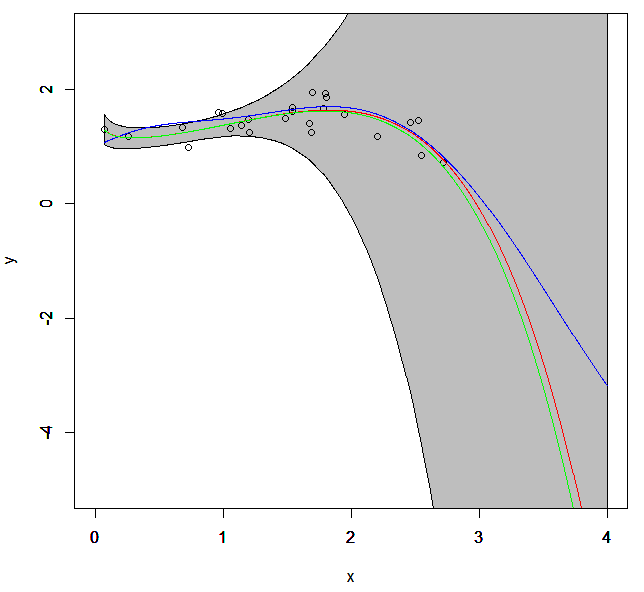} %
\includegraphics[width=0.32\textwidth]{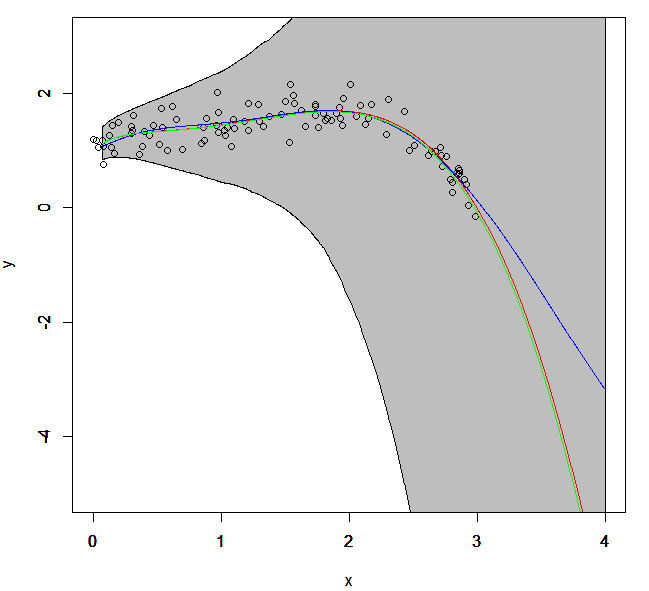} %
\includegraphics[width=0.32\textwidth]{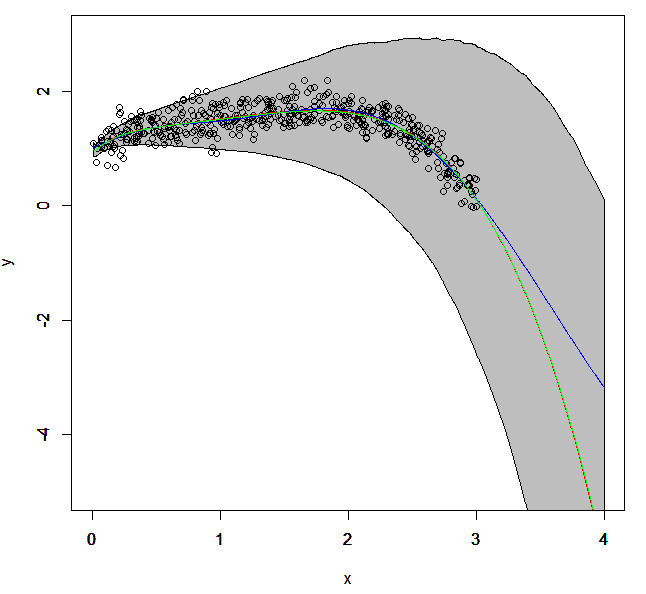}
\caption{Simulation study for $x\sin(x)+e^{-x^2}+x\cos(x)/(x^2+1)$, with $%
x\in(0,4]$, $x_0=0$, and $\protect\sigma=0.2$. We draw $K=25$ (left), $100$
(middle), and $500$ (right), samples from the true function and set $%
M_{\max}=6$. The TPE (red), the mean of point process realizations (green),
truth (blue), and the 95\% envelopes (grey) are also displayed.}
\label{f5.2}
\end{figure}

\textbf{Multivariate Examples}: Similarly to the univariate case, we tested
Algorithm 1 for several multivariate, real-valued functions. However, due to
the difficulty of plotting higher dimensional functions, only two
dimensional functions are presented. Moreover, since the number of
parameters is greatly increased, small samples sizes like 25 or 100, are not
enough in order to provide the MTPE. Therefore, we present results for $%
K=500 $.

\begin{table}[tbp]
\centering
\begin{tabularx}{\textwidth}{XXXXXX}
\hline
\raisebox{-0\height}[0.03\textwidth]{Sample} &$\hat M$ & $\boldsymbol{\hat\mu}_a$ & $\boldsymbol{\hat\mu}_{n_x}$ & $\boldsymbol{\hat\mu}_{n_y}$ & $\boldsymbol{\hat\sigma}_a$\\
\hline
500 & 6 & \tiny $\begin{pmatrix}1.514\\1.760\\-2.162\\-0.576\\-0.048\\0.430\end{pmatrix}$ & \tiny $\begin{pmatrix}0.150\\ 0.120\\ 1.354\\ 2.223\\ 6.282\\ 6.046\end{pmatrix}$ & \tiny $\begin{pmatrix}-0.086\\1.807\\0.818\\4.711\\3.436\\3.357\end{pmatrix}$ & \tiny $\begin{pmatrix}1.477\\0.987\\0.687\\1.220\\1.847\\ 0.402\end{pmatrix}$\\
\hline
$D$ & Run Time &$\boldsymbol{\hat\sigma}_{n_x}$ & $\boldsymbol{\hat\sigma}_{n_y}$ & $\boldsymbol{\hat\rho}_x$ & $\boldsymbol{\hat\rho}_y$   \\
\hline
0.032 & 96.71 s & \tiny  $\begin{pmatrix}0.171\\0.122\\0.090\\0.618\\0.155\\1.957\end{pmatrix}$& \tiny  $\begin{pmatrix}0.312\\0.199\\0.258\\0.151\\1.166\\ 0.970\end{pmatrix}$& \tiny  $\begin{pmatrix}-0.562\\-0.107\\-0.943\\-0.995\\-0.476\\0.923\end{pmatrix}$& \tiny  $\begin{pmatrix}0.910\\0.983\\0.452\\-0.964 \\-0.642\\0.370\end{pmatrix}$ \\
\hline
\multicolumn{6}{c}{Estimator} \\
\hline
\multicolumn{6}{l}{\tiny$\hat{f}(x,y)=$}\\
\multicolumn{6}{l}{\tiny$[1.514-0.142\ln(x+0.05)+0.419\ln(y+0.05)] (x+0.05)^{0.150+0.085\ln(x+0.05)} (y+0.05)^{-0.086+0.156\ln(y+0.05)}$}\\
\multicolumn{6}{l}{\tiny$+[1.760-0.013\ln(x+0.05)+0.193\ln(y+0.05)] (x+0.05)^{0.120+0.061\ln(x+0.05)}(y+0.05)^{1.807+0.100 \ln(y+0.05)}$}\\
\multicolumn{6}{l}{\tiny$+[-2.162-0.058\ln(x+0.05)+0.080\ln(y+0.05)] (x+0.05)^{1.354+0.045\ln(x+0.05)}(y+0.05)^{0.818+0.129 \ln(y+0.05)}$}\\
\multicolumn{6}{l}{\tiny$+[-0.576-0.749\ln(x+0.05)-0.177\ln(y+0.05)] (x+0.05)^{2.223+0.309\ln(x+0.05)}(y+0.05)^{4.711+0.075\ln(y+0.05)}$}\\
\multicolumn{6}{l}{\tiny$+[-0.048-0.136\ln(x+0.05)-1.384\ln(y+0.05)](x+0.05)^{6.282+0.078\ln(x+0.05)}(y+0.05)^{3.436+0.583\ln(y+0.05)}$}\\
\multicolumn{6}{l}{\tiny$+[0.430+0.726\ln(x+0.05)+0.144\ln(y+0.05)](x+0.05)^{6.046+0.978\ln(x+0.05)}(y+0.05)^{3.357+0.485\ln(y+0.05)}$}\\
\hline
\raisebox{-0\height}[0.03\textwidth]{Sample} &$\hat M$ & $\boldsymbol{\hat\mu}_a$ & $\boldsymbol{\hat\mu}_{n_x}$ & $\boldsymbol{\hat\mu}_{n_y}$ & $\boldsymbol{\hat\sigma}_a$\\ \hline
500 & 7 & \tiny $\begin{pmatrix}0.038 \\1.484\\0.643\\-1.311\\-0.461\\0.732\\-0.020\end{pmatrix}$ & \tiny $\begin{pmatrix}1.316\\1.152\\2.146\\2.220\\2.817\\ 1.404\\7.540\end{pmatrix}$ & \tiny $\begin{pmatrix}1.813\\1.470\\1.316\\1.302\\8.166\\6.254\\13.707\end{pmatrix}$ & \tiny $\begin{pmatrix}26.112\\0.258\\0.280  \\0.468 \\8.361\\0.010\\0.045\end{pmatrix}$ \\
\hline
$D$ & Run Time &$\boldsymbol{\hat\sigma}_{n_x}$ & $\boldsymbol{\hat\sigma}_{n_y}$ & $\boldsymbol{\hat\rho}_x$ & $\boldsymbol{\hat\rho}_y$ \\
\hline
0.078 & 183.06 s& \tiny $\begin{pmatrix}0.726\\0.033\\0.003\\0.960\\0.137\\0.235\\1.209\end{pmatrix}$& \tiny $\begin{pmatrix}0.424 \\0.279\\0.337\\0.870\\0.673\\0.210\\0.520\end{pmatrix}$& \tiny $\begin{pmatrix}0.104\\-0.154\\0.987\\0.985\\0.587 \\0.358\\-0.317\end{pmatrix}$& \tiny $\begin{pmatrix}-0.300\\-0.968\\-0.968\\-0.450\\0.107\\-0.191\\-0.169\end{pmatrix}$ \\
\hline
\multicolumn{6}{c}{Estimator} \\
\hline
\multicolumn{6}{l}{\tiny$\hat{f}(x,y)=[0.038+1.981\ln(x+0.1)-3.319\ln(y+0.1)](x+0.1)^{1.316+0.363\ln(x+0.1)}(y+0.1)^{1.813+0.212\ln(y+0.1)}$}\\
\multicolumn{6}{l}{\tiny$+[1.484-0.001\ln(x+0.1)-0.070\ln(y+0.1)](x+0.1)^{1.152+0.017\ln(x+0.1)}(y+0.1)^{1.470+0.139\ln(y+0.1)}$}\\
\multicolumn{6}{l}{\tiny$+[0.643+0.001\ln(x+0.1)-0.091\ln(y+0.1)](x+0.1)^{2.146+0.001\ln(x+0.1)}(y+0.1)^{1.316+0.168\ln(y+0.1)}$}\\
\multicolumn{6}{l}{\tiny$+[-1.311+0.442\ln(x+0.1)-0.183\ln(y+0.1)](x+0.1)^{2.220+0.480\ln(x+0.1)}(y+0.1)^{1.302+0.435\ln(y+0.1)}$}\\
\multicolumn{6}{l}{\tiny$+[-0.461+0.673\ln(x+0.1)+0.604\ln(y+0.1)](x+0.1)^{2.817+0.069\ln(x+0.1)}(y+0.1)^{8.167+0.336\ln(y+0.1)}$}\\
\multicolumn{6}{l}{\tiny$+[0.732+0.001\ln(x+0.1)-0.0004\ln(y+0.1)](x+0.1)^{1.404+0.118\ln(x+0.1)}(y+0.1)^{6.254+0.105\ln(y+0.1)}$}\\
\multicolumn{6}{l}{\tiny$+[-0.020-0.017\ln(x+0.1)-0.004\ln(y+0.1)](x+0.1)^{7.540+0.605\ln(x+0.1)}(y+0.1)^{13.707+0.26\ln(y+0.1)}$}\\
\hline
\end{tabularx}
\caption{Case 1: $f(x,y)=e^{-x^2+y}$, observed over the unit square $[0,1]^2$%
, estimated over $[0,1.2]^2$, and set $x_0=y_0=-0.05$, $\protect\sigma=0.5$
and $M_{\mbox{max}}=6$.\newline
Case 2: $f(x,y)=x^3y-y^2e^x+3xy$, over the unit square $[0,1]^2$, estimated
over $[0,1.2]^2$, and set $x_0=y_0=-0.1, \protect\sigma=0.05$, and $M_{
\mbox{max}}=8$.}
\label{t5.3}
\end{table}

For the first case, let $f(x,y)=e^{-x^2+y}$, observed over the unit square $%
[0,1]^2$, and set $x_0=y_0=-0.05$, $\sigma=0.5$ and $M_{\mbox{max}}=6$. We
will assess the performance of the MTPE over the rectangle $[0,1.2]^2$.
Results are shown in Table \ref{t5.3} (top), and in Figure \ref{f5.3.1}
(left).

\begin{figure}[tbp]
\includegraphics[width=0.45\textwidth]{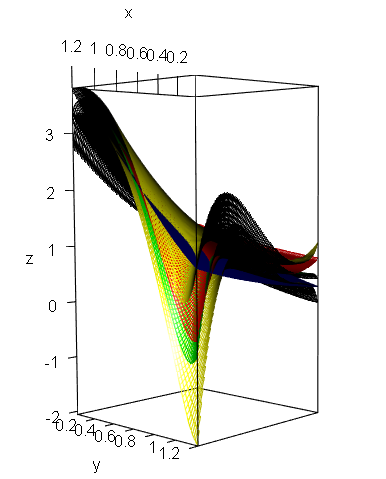} %
\includegraphics[width=0.45\textwidth]{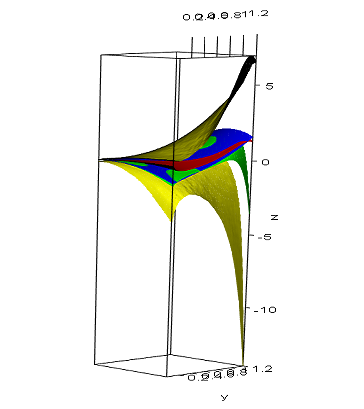}
\caption{ Left Plot: $f(x,y)=e^{-x^2+y}$. Right Plot: $%
f(x,y)=x^3y-y^2e^x+3xy $. We have $(x,y)\in [0,1.2]^2$ in both cases.
Displaying the estimators (red), the means of point process realizations
(green), and the true surfaces (blue), along with the 95\% envelope surfaces
(yellow), for a sample size $K=500$.}
\label{f5.3.1}
\end{figure}

The second case we consider is the function $f(x,y)=x^3y-y^2e^x+3xy$, over
the unit square $[0,1]^2$, and set $x_0=y_0=-0.1, \sigma=0.05$, and $M_{%
\mbox{max}}=8$. We assess the performance of the MTPE over the set $%
[0,1.2]^2 $. Results are shown in Table \ref{t5.3} (bottom), and in Figure %
\ref{f5.3.1} (right).

In both cases, the MTPE performs well, with integrated distances $0.032$ and
$0.078$, respectively, over the set $[0,1.2]^2$. Although it is hard to see
all the surfaces in both Figures, we can clearly see how the MTPE surface
(red) is almost identical to the true surface (blue). We can further see how
the envelope surfaces (yellow) contain the true surface.

We assess the performance of the TPE and MTPE against other commonly used
methods, in the following.

\textbf{Comparison to Other Methods}: In order to further assess the
performance of the proposed method, we conducted comparisons of the TPE and
MTPE against other commonly used function approximation methods, including
kernel quantile regression (KQR), Gaussian process (GP), spline regression,
and neural network (NN). In all the simulations that follow we redraw the
observed points, apply each method to recover the estimate, calculate the
integrated distances from the true function, and then present the results.

We conducted the estimation procedures using established R packages; for the
kernel quantile regression and Gaussian process methods, we used functions
\textit{kqr} and \textit{gausspr} from the R package \textit{kernlab}; for
spline regression, we used the function \textit{bs} from the R package
\textit{splines} and for the neural network approach, we used function
\textit{dnn} from the R package \textit{cito}.

\begin{figure}[tbp]
\centering \includegraphics[width=0.3\textwidth]{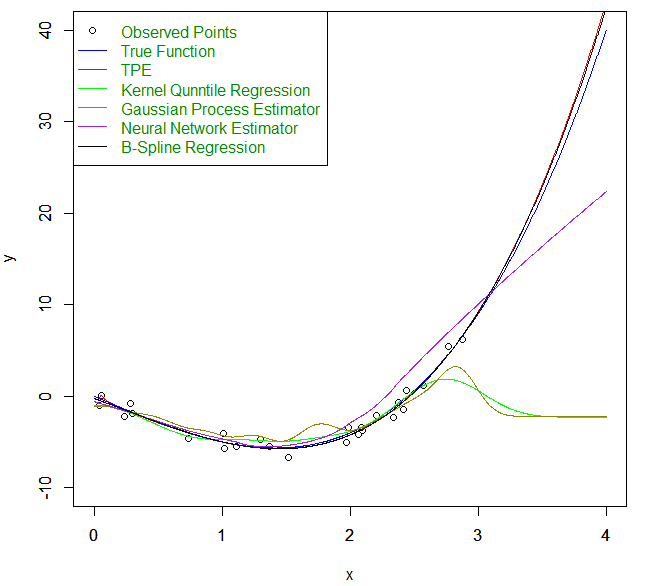} %
\includegraphics[width=0.3\textwidth]{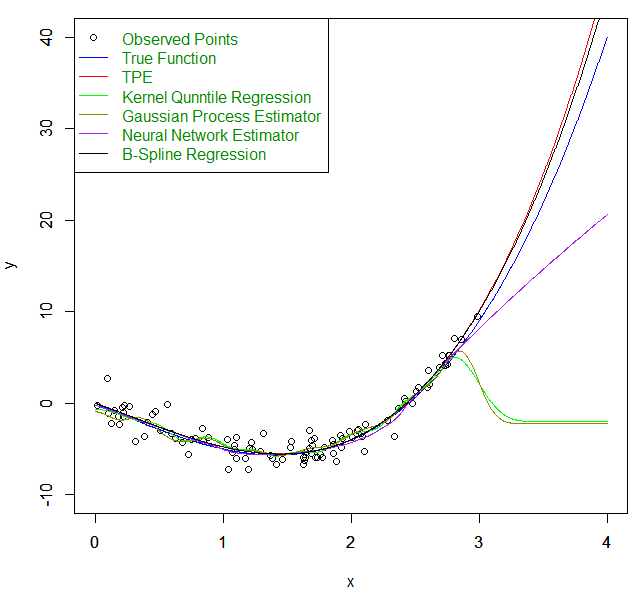} %
\includegraphics[width=0.3\textwidth]{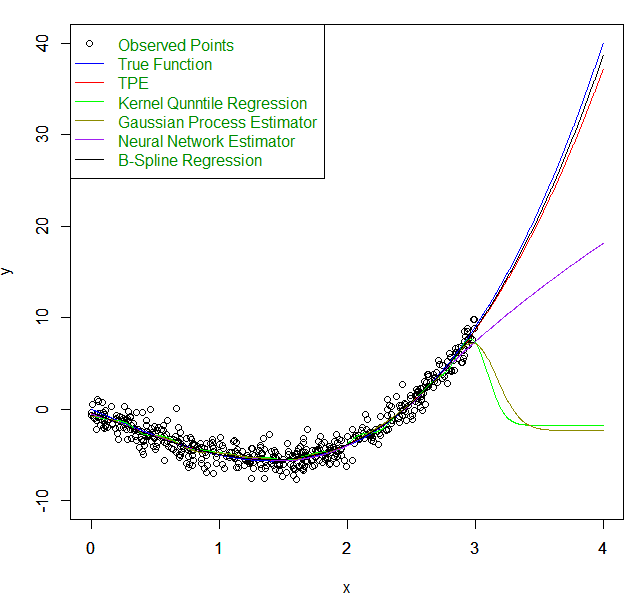}
\caption{Comparisons of the estimators from different methods versus the
true function $f(x)=x^3-6x$, $x\in (0,4]$. We used sample sizes $K=25$
(left), $K=100$ (middle), and $K=500$ (right). This is the ideal case for
spline regression, and this method performs as expected giving a near
perfect fit. The TPE is clearly performing better than any other method.}
\label{ComparisonPlot}
\end{figure}

\begin{figure}[tbp]
\centering \includegraphics[width=0.3\textwidth]{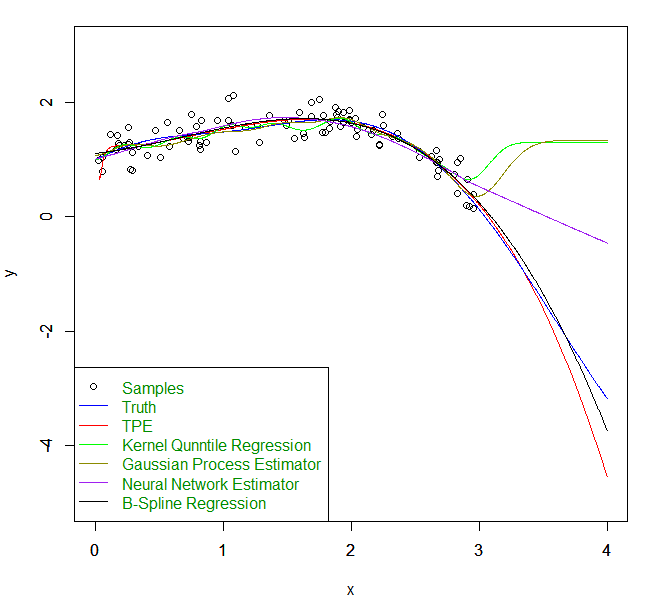} %
\includegraphics[width=0.3\textwidth]{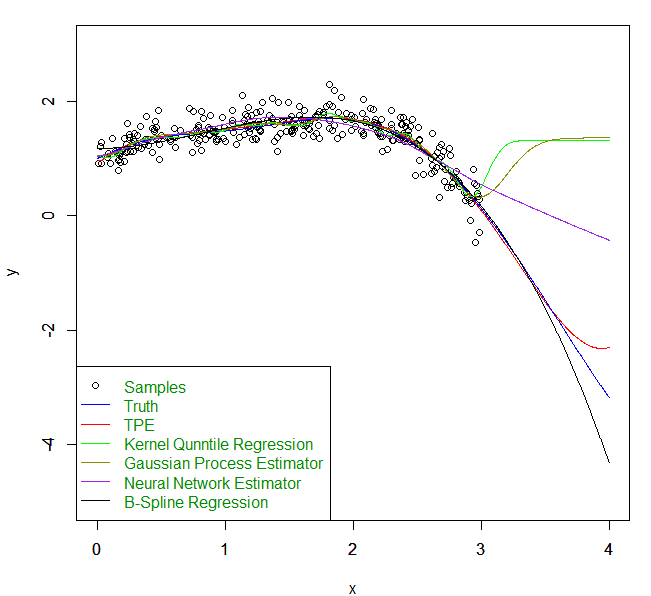} %
\includegraphics[width=0.3\textwidth]{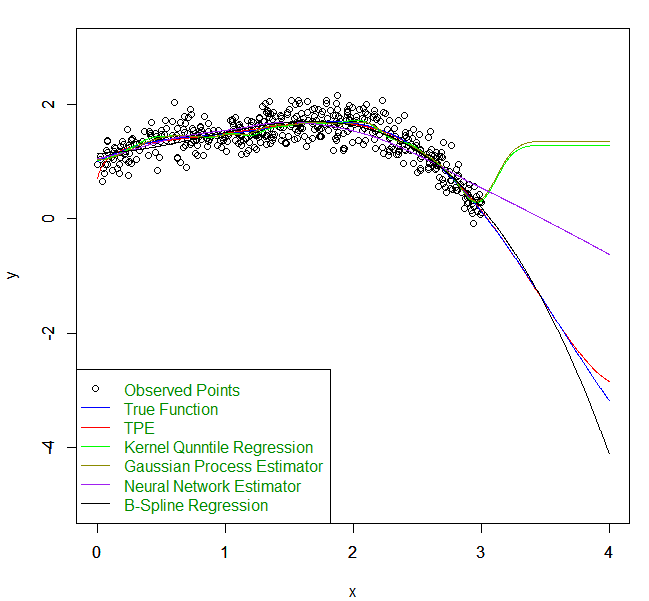}
\caption{Comparisons of the estimators from different methods versus the
true function $f(x)=x\sin(x)+e^{-x^2}+x\cos(x)/(x^2+1)$, $x\in (0,4]$. We
used sample sizes $K=100$ (left), $K=300$ (middle), and $K=500$ (right). The
TPE clearly outperforms any other method, expect for the small sample size
case, where NN is better.}
\label{ComparisonPlot2}
\end{figure}

\begin{table}[tbp]
\centering
\begin{tabularx}{\textwidth}{XXXXXX}
\hline
Methods & TPE & KQR & GP & B-Spline & NN\\
\hline
$K=25$ & 0.6363 &159.9701 &173.5798 & 0.4199 & 175.4833\\
$K=100$ & 3.7533 & 172.3583 & 175.821 & 2.1701 & 24.4164\\
$K=500$ & 0.6239 & 163.6172 & 161.7625 & 0.1908 &37.4840\\
\hline
\end{tabularx}
\caption{Integrated distances for estimators based on different methods
against the true function $f(x)=x^3-6x$, $x\in (0,4]$. This is the ideal
case for spline regression, and this method performs as expected giving a
near perfect fit, with the TPE performing better than the remaining methods.}
\label{ComparisonTable}
\end{table}

We begin with the function $f(x)=x^{3}-6x$, which is the ideal case for
spline regression (true function is a polynomial). We used sample sizes $%
K=25 $, $100$, and $500$, and the results are shown in Table \ref%
{ComparisonTable} and Figure \ref{ComparisonPlot}. We notice that all
methods work well for interpolation ($x\in (0,3]$), however, for
extrapolation ($x\in (0,4]$), there are significant differences between them.

In particular, kernel quantile regression and Gaussian process, are known to
perform poorly for extrapolation, and we observe this fact in all the
comparison examples. The neural network approach has the potential for good
prediction, but it fails to capture the truth in this first case. While the
TPE works well, under both interpolation and extrapolation scenarios, the
spline regression approach works the best in this case. This is expected
since spline regression fits polynomial models to the data, and the function
$f(x)=x^{3}-6x$, is the ideal case.

\begin{table}[tbp]
\centering
\begin{tabularx}{\textwidth}{XXXXXX}
\hline
Methods & TPE & KQR & GP & B-Spline & NN\\
\hline
$K=100$ & 0.0701 & 2.1653 & 2.0938 & 0.7108 & 0.0126\\
$K=300$ & 0.0197 & 2.1750 & 2.0936 & 0.7251 & 0.0502\\
$K=500$ & 0.0029 & 2.0922 & 2.1797 & 0.0301 &0.6472\\
\hline
\end{tabularx}
\caption{Integrated distances for estimators based on different methods
against the true function $f(x)=x\sin(x)+e^{-x^2}+x\cos(x)/(x^2+1)$, $x\in
(0,4]$. The TPE clearly outperforms any other method, expect for the small
sample size case, where NN is better.}
\label{ComparisonTable2}
\end{table}

Now we turn to a more complicated function, $f(x)=x\sin (x)+e^{-x^{2}}+x\cos
(x)/(x^{2}+1)$, $x\in (0,4]$, with sample sizes $K=100,$ $300$, and $500$.
Results are shown in Table \ref{ComparisonTable2}, and Figure \ref%
{ComparisonPlot2}. In this case, we can see that for a non-polynomial
function, the TPE outperforms the spline regression approach in every case.
Except for the small sample size case, where the NN method is slightly
better, the TPE outperforms any other method.

\begin{table}[tbp]
\centering
\begin{tabularx}{\textwidth}{XXXXXX}
\hline
Methods & TPE & KQR & GP & B-Spline & NN\\
\hline
$K=100$ & 0.0552 & 0.1666 & 0.1136 & 0.1196 & 0.0573\\
$K=300$ & 0.0230 & 0.0778 & 0.0860 & 0.1116 & 0.0420\\
$K=500$ & 0.0024 & 0.0732 & 0.0841 & 0.0343 &0.0937\\
\hline
\end{tabularx}
\caption{Integrated distances for estimators based on different methods
against the true function $f(x,y)=x^3y-y^2e^x+3xy$, $(x,y)\in (0,1.2]^2$.
The MTPE outperforms all other methods.}
\label{ComparisonTable3}
\end{table}

Furthermore, we conducted comparisons for a multivariate case, in
particular, on the function $f(x,y)=x^{3}y-y^{2}e^{x}+3xy$, $(x,y)\in
\lbrack 0,1]^{2}$, with sample size $K=100,$ $300$, and $500$. We estimate
over the set $[0,1.2]^{2}$, with results shown in Table \ref%
{ComparisonTable3}. We can clearly see that the MTPE is outperforming any
other method as the dimension increases.

\textbf{Discussion}: We ran many more examples in the univariate and
multivariate cases (omitted). We summarize our observations on obtaining the
TPE or MTPE, as well as comparisons and computational issues that arose.

First we note that as we go to higher dimensions, we immediately notice that
the time it takes to provide the MTPE is greatly increased, instead of being
just a few seconds (univariate case). This is expected since for the
univariate case, the TPE in $\Re$ with $M$ terms involves optimization over $%
5M$ parameters, whereas, the MTPE in $\Re^d$ with $M$ terms requires $%
M(3d+2) $ parameters.

Second, the envelop surfaces in the multivariate case take over a day to
calculate, instead of a few minutes or hours, as we have observed in
different univariate cases.

The comparisons performed suggest that as the sample sizes increase, all
methods provide a better fit. When it comes to extrapolation, kernel
quantile regression and Gaussian process methods are the worst, with the TPE
outperforming both the NN and spline approaches, in most cases. As dimension
increases, the MTPE outperforms all other methods.

Finally, it should be noted that in this paper our purpose was not achieving
speed of estimation, but rather accuracy of the resulting TPE or MTPE. The
code is written in R and is not optimized, where it is a well known fact
that R is extremely slow when it comes to loops, which are required to
obtain the STE and envelopes. Therefore, application of the proposed
methodology provides accurate estimators of the true function, but at the
moment it is slow as $d$ increases.

\section{Application}

In order to further exemplify the performance of the proposed methodology,
we apply our algorithm to real-life data involving international markets.

Two indices are chosen, the Dow Jones Industrial Average (DJI) index from
the United States, and the Financial Times Stock Exchange 100 Index
(FTSE100) from the United Kingdom. Both indexes are well known, and have
significant impact on global stock markets. Since they both come from
countries with similar economic systems, they are potentially highly related
to each other.

In this application, we are going to predict the FTSE100 index using both
time and the DJI index. The historical data consist of daily measurements
from January 1, 2001 to October 15, 2020 for both DJI and FTSE100 (7227
days). The values of these indices correspond to the values observed at the
close of the market for that day. We keep days when both markets were open
and this leads to a total of 4889 data points. The data were obtained from
an open source \hyperlink{https://www.investing.com/}{%
https://www.investing.com/}.

\begin{figure}[tbp]
\includegraphics[width=0.45\textwidth]{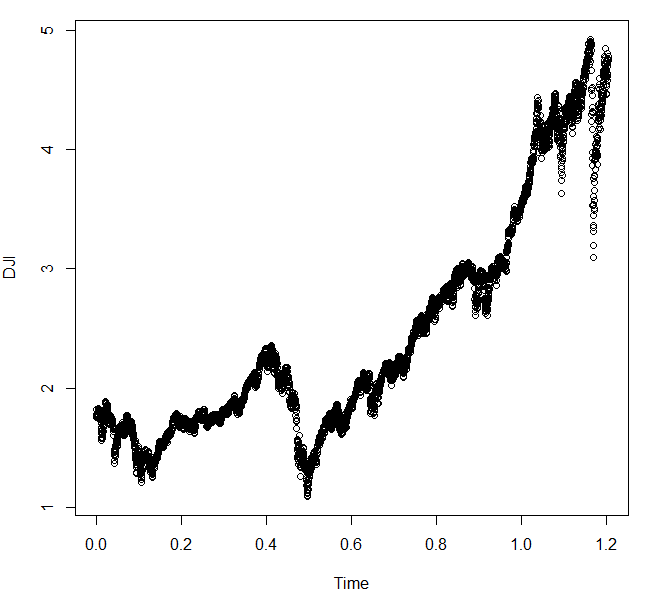} %
\includegraphics[width=0.45\textwidth]{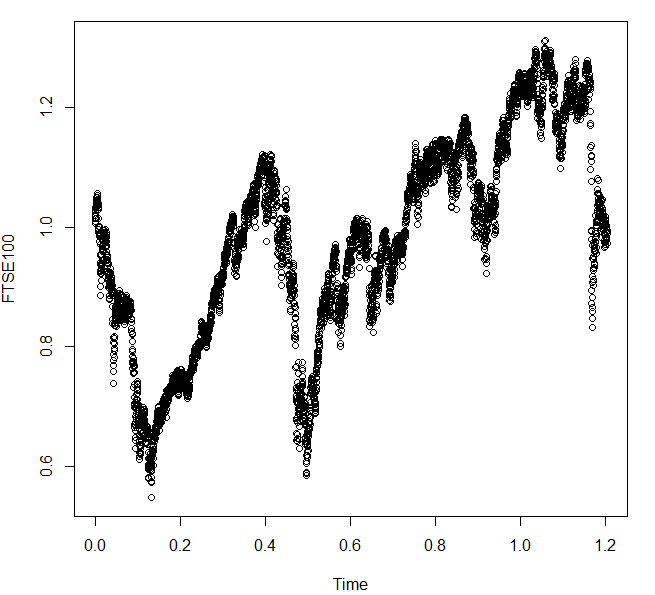}
\caption{The scatter plots of DJI (left) and FTSE100 (right) indices.}
\label{StockScatterPlot}
\end{figure}

The range of the DJI is from 6547.05 to 29551.42, and the range of the
FTSE100 is from 3287.0 to 7877.45. In order to prevent overflow problems
with the estimated parameters and the MTPE during their computation in the R
code, a re-scaling of all three variables, time, DJI, and FTSE100, is
applied, where we divided the values by 6000. After the re-scaling, the
range of the time variable (day) is from $3.33\cdot 10^{-4}$ to $1.2045$,
the range of the DJI is from $1.0912$ to $4.9252$, and the range of FTSE100
is from $0.5478$ to $1.3129$. The data is presented in Figure \ref%
{StockScatterPlot}.

Let $t$ denote time, $x$ represent the DJI index and take the response $y$
to be the FTSE100 index. Note that no other information is considered when
it comes to the statistical model, i.e., covariates such as socio-economic,
political and geo-political status and so forth. The assumption here is that
the values for the indices observed have been affected and they are the
result of such underlying, often unobserved, factors. Therefore, we will
consider modeling and forecasting of the FTSE100 based on the time stamp and
the DJI, with the aforementioned understanding in mind. Furthermore, we note
that this is not a time series model, e.g., a VAR model where the next value
is based on the current and stationarity is a required assumption to be able
to forecast. Instead, since the estimates of the parameters of the MTPE are
based on all the past data, the estimated surface provides a well informed
and reliable fit, without any stringent assumptions on the distribution of
the response or the model parameters.

As a result, we will consider the non-linear regression model
\begin{equation}
y_k(t)=\hat{f}_{FTSE100}(t,x_k(t))+\epsilon_k,
\end{equation}
where $k=1,2,\dots,K=4889$, and run Algorithm 1 to the data with $M_{max}=20$%
, $\boldsymbol{x}_0=(3.33\cdot 10^{-4}-0.1, 1.0912-0.1)=(-0.10033,0.9912)$
and $d=2$.

\begin{figure}[tbp]
\centering \includegraphics[width=0.6\textwidth]{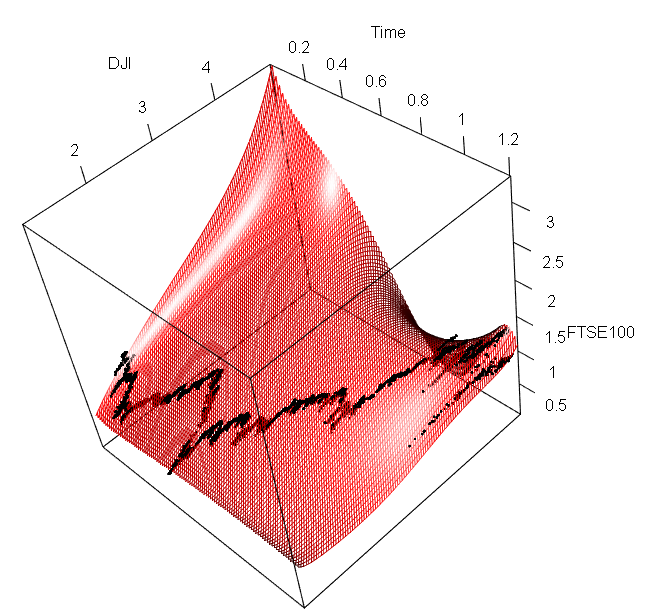}
\caption{The MTPE surface (red) along with the scatter plot of the data.}
\label{Est3D}
\end{figure}

\begin{table}[tbp]
\centering
\begin{tabularx}{\textwidth}{ccccccccc}
\hline
\raisebox{-0\height}[0.03\textwidth]{$\hat M$} & $\boldsymbol{\hat\mu}_a$ & $\boldsymbol{\hat\mu}_{n_x}$ & $\boldsymbol{\hat\mu}_{n_y}$ & $\boldsymbol{\hat\sigma}_a$ & $\boldsymbol{\hat\sigma}_{n_x}$ & $\boldsymbol{\hat\sigma}_{n_y}$ &
$\boldsymbol{\hat\rho}_x$ & $\boldsymbol{\hat\rho}_y$\\\hline
10 & \tiny$\begin{pmatrix}5.3522\cdot 10^{-1}\\2.3872\cdot 10^{-2}\\-4.4588\cdot 10^{-4}\\-4.9950\cdot 10^{-5}\\5.5904\cdot 10^{-7}\\2.7186\cdot 10^{-1}\\-5.1202\cdot 10^{-3}\\-8.1683\cdot 10^{-4}\\4.3135\cdot 10^{-5}\\-2.7525\cdot 10^{-6}\end{pmatrix}$ &\tiny $\begin{pmatrix}1.5184\\3.9855\\3.4918\\10.3921\\8.3514\\0.9958\\1.6435\\1.8004\\6.9071\\7.8058\end{pmatrix}$ & \tiny $\begin{pmatrix}0.6361\\3.4001\\6.0060\\6.7221\\8.3977\\0.5731\\3.3895\\4.7506\\6.8695\\8.6800\end{pmatrix}$ &\tiny $\begin{pmatrix}6.1832\cdot 10^2\\2.0770\\2.5377\cdot 10^{-1}\\6.2833\cdot 10^{-2}\\9.3164\cdot 10^{-3}\\2.3394\cdot 10^2\\7.8794\\1.3607\\3.9865\cdot 10^{-2}\\5.9759\cdot 10^{-3}\end{pmatrix}$ & \tiny $\begin{pmatrix}0.2628\\1.0157\\0.1917\\0.3380\\0.4868\\ 0.7076\\0.5370\\0.7344\\0.5873\\0.5096\end{pmatrix}$& \tiny $\begin{pmatrix}0.5468\\0.2480\\0.5755\\0.3165\\0.3218\\ 0.3796\\0.5225\\0.4965\\0.4909\\0.4768\end{pmatrix}$
&\tiny$\begin{pmatrix}-4.1184\cdot 10^{-3}\\-2.0508\cdot 10^{-2}\\4.4594\cdot 10^{-2}\\-3.2427\cdot 10^{4}\\2.5507\cdot 10^{-3}\\-6.9476\cdot 10^{-3}\\1.3241\cdot 10^{-2}\\-1.1238\cdot 10^{-2}\\2.2566\cdot 10^{-3}\\2.7124\cdot 10^{-3}\end{pmatrix}$& \tiny$\begin{pmatrix}-5.1690\cdot 10^{-4}\\-5.0281\cdot 10^{-3}\\-2.8144\cdot 10^{-3}\\6.2150\cdot 10^{-4}\\1.5521\cdot 10^{-4}\\-1.2874\cdot 10^{-3}\\9.6157\cdot 10^{-4}\\8.5039\cdot 10^{-4}\\5.5722\cdot 10^{-3}\\-6.5754\cdot 10^{-5}\end{pmatrix}$
\\
\hline
\multicolumn{9}{c}{Estimator} \\
\hline
\multicolumn{9}{l}{\small$\hat{f}_{FTSE100}(t,x)= \left(0.5.352-0.749\ln(t+0.1)-0.196\ln(x-0.991)\right) (t+0.1)^{1.518+0.035\ln(t+0.1)}$}\\
\multicolumn{9}{l}{\small$(x-0.991)^{0.636+0.150\ln(t+0.1)}+ \left(2.387\cdot10^{-2}-4.327\cdot10^{-2}\ln(t+0.1)- 2.590\cdot10^{-3}\ln(x-0.991)\right)$}\\
\multicolumn{9}{l}{\small$(t+0.1)^{3.986+0.516\ln(t+0.1)} (x-0.991)^{3.400+0.031\ln(t+0.1)}$}\\
\multicolumn{7}{l}{\small$+\left(-4.459\cdot10^{-4}+2.169 \cdot10^{-3}\ln(t+0.1)-4.11\cdot10^{-4}\ln(x-0.991)\right)$}\\
\multicolumn{9}{l}{\small$(t+0.1)^{3.492+0.018\ln(t+0.1)}(x-0.991)^{6.006+0.166\ln(t+0.1)}$}\\
\multicolumn{9}{l}{\small$+\left(-4.995\cdot10^{-5}-6.887\cdot10^{-6}\ln(t+0.1)-1.236\cdot10^{-5}\ln(x-0.991)\right)(t+0.1)^{10.39+0.057\ln(t+0.1)}$}\\
\multicolumn{9}{l}{\small$(x-0.991)^{6.722+0.05\ln(t+0.1)} +\left(5.89\cdot10^{-7}-1.157\cdot10^{-5}\ln(t+0.1)-4.653 \cdot10^{-7}\ln(x-0.991)\right)$}\\
\multicolumn{9}{l}{\small$(t+0.1)^{8.351+0.118\ln(t+0.1)} (x-0.991)^{8.398+0.052\ln(t+0.1)}$}\\
\multicolumn{9}{l}{\small$+\left(0.2719-1.150\ln(t+0.1)-0.114\ln(x-0.991)\right)(t+0.1)^{0.996+0.250\ln(t+0.1)}$}\\
\multicolumn{9}{l}{\small$(x-0.991)^{0.573+0.072\ln(t+0.1)}+\left(-5.120\cdot10^{-3}+5.603\cdot10^{-2}\ln(t+0.1)+3.959\cdot10^{-3}\ln(x-0.991)\right)$}\\
\multicolumn{9}{l}{\small$(t+0.1)^{1.644+0.144\ln(t+0.1)}(x-0.991)^{3.390+0.136\ln(t+0.1)}$}\\
\multicolumn{9}{l}{\small$+\left(-8.168\cdot10^{-4}-1.123\cdot10^{-2}\ln(t+0.1)+5.744\cdot10^{-4}\ln(x-0.991)\right)$}\\
\multicolumn{9}{l}{\small$(t+0.1)^{1.800+0.270\ln(t+0.1)}(x-0.991)^{4.751+0.123\ln(t+0.1)}$}\\
\multicolumn{9}{l}{\small$+\left(4.313\cdot10^{-5}+5.283\cdot10^{-5}\ln(t+0.1)+1.090\cdot10^{-4}\ln(x-0.991)\right)$}\\
\multicolumn{9}{l}{\small$(t+0.1)^{6.907+0.172\ln(t+0.1)}(x-0.991)^{6.870+0.120\ln(t+0.1)}$}\\
\multicolumn{9}{l}{\small$+\left(-2.753\cdot10^{-6}+8.260\cdot10^{-6}\ln(t+0.1)-1.873\cdot10^{-7}\ln(x-0.991)\right)$}\\
\multicolumn{9}{l}{\small$(t+0.1)^{7.806+0.130\ln(t+0.1)}(x-0.991)^{8.680+0.114\ln(t+0.1)}$}\\
\hline
\end{tabularx}
\caption{Stock Market Application: MTPE parameters for the fitted surface $%
\hat{f}_{FTSE100}(t,x_k(t))$.}
\label{ApplicationResult}
\end{table}
\newpage

The results of the estimation procedure are given in Table \ref%
{ApplicationResult}. The fitted model yields $\hat{M}=10$, with the
estimated MTPE surface of the FSTE100 given in the bottom of the table. In
Figure \ref{Est3D} we present the MTPE surface (red) along with the scatter
plot of the data. The estimated surface fits the data very well but it
cannot be visualized well in the 3d plot.

\begin{figure}[tbp]
\centering \includegraphics[width=0.45%
\textwidth]{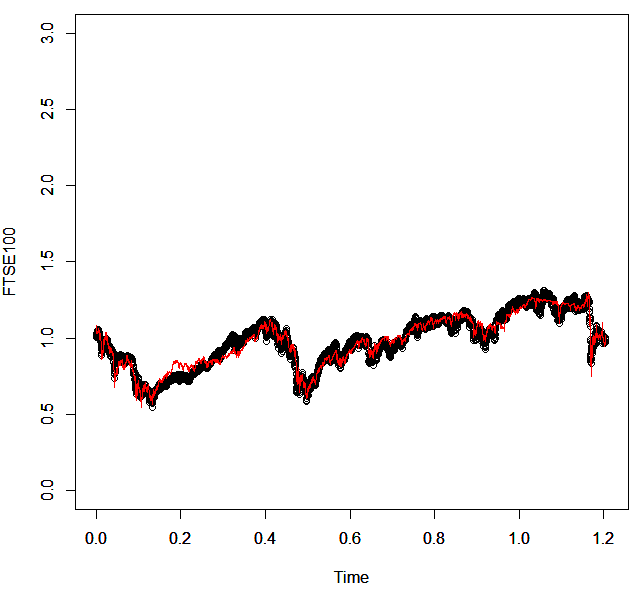} \includegraphics[width=0.45%
\textwidth]{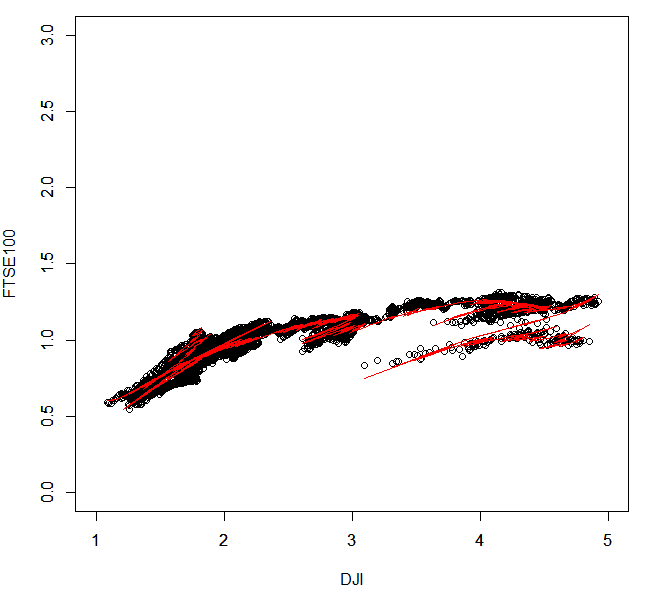}
\caption{Left Plot: We project the fitted surface onto the time axis. Red
denotes the projected fitted surface. Right plot: We project the fitted
surface onto the DJI axis. Red denotes the projected fitted surface, and
since the data is not ordered by time, we get the disjoined fits observed.}
\label{Projection}
\end{figure}

Now if we keep the value of one dimension fixed and then project the
estimated surface onto the other dimension, we obtain the plots given in
Figure \ref{Projection}. In both projections, the MTPE fits the data well
(red lines). When we project the surface onto the time axis, we could see a
good plot, since the FSTE100 is ordered (indexed) by time. In this case, for
any given time, there is only one corresponding value of the FSTE100.
However this is not the case when we project the surface onto the DJI axis.
For a given DJI, there could be more than one corresponding values of
FSTE100, and therefore there will be overlaps and disjoined lines in the
plot.

\begin{figure}[tbp]
\centering
\includegraphics[width=0.45\textwidth]
{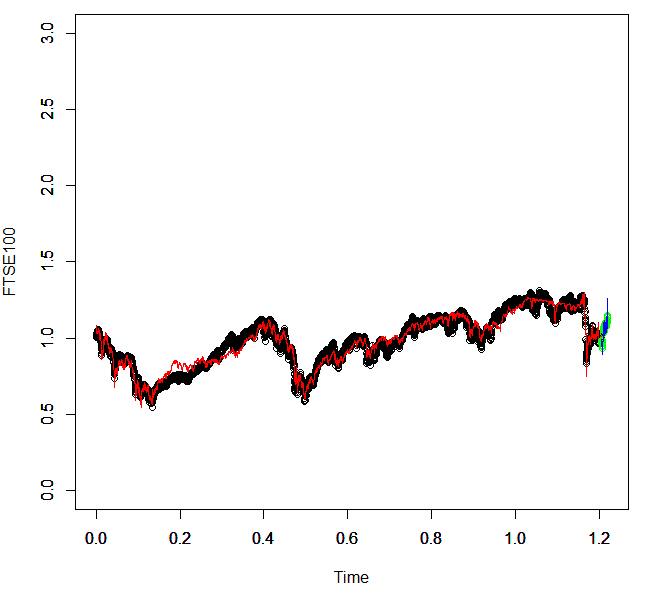} \includegraphics[width=0.45%
\textwidth] {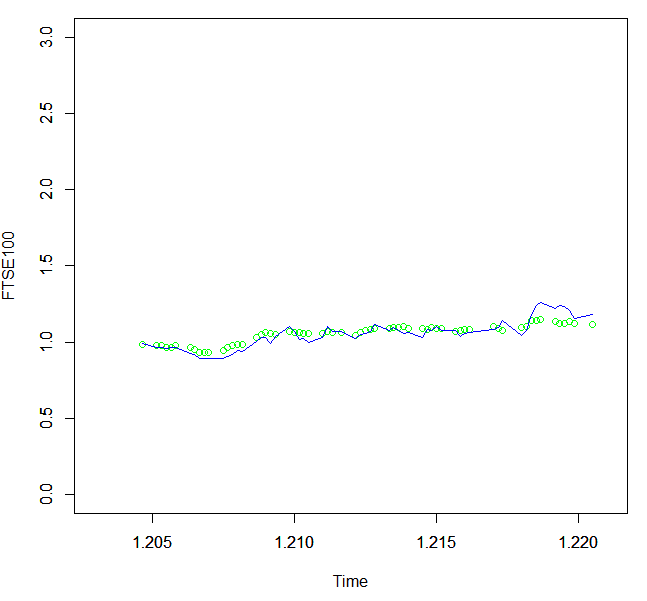}
\caption{Left Plot: Red denotes the projection of the fitted surface to the
time axis, green corresponds to the true values, and blue is the forecast
for the last 3 months. Right Plot: The green points denote the true index
values for the last 3 months, and blue denotes the projection of the surface
forecast.}
\label{ProjectionExtend}
\end{figure}

In order to further investigate the performance of our methodology, we drop
the last three months from the data and perform forecasting. More precisely,
using the original range of the data (with the last three months removed,
treating them as unobserved), we obtain the MTPE in the original range of
the data, that is, we perform forecasting for the last three months. The
results are shown in Figure \ref{ProjectionExtend} (projections on the time
axis). On the right plot, we can clearly see that the forecasted MTPE (blue)
fits the true data very well, where the true, assumed unobserved, values of
the FTSE100 (green points) for the next three months are also displayed. As
expected, the further away from the observed data, the extrapolation
performance diminishes (times from about 1.2175 to 1.220, i.e., the last
month), however, for about the first two months, the MTPE performs
exceptionally.

\section{Concluding Remarks}

We have proposed and studied a novel non-linear regression framework, where
the response depends on predictors via a general function $f$, not analytic
necessarily. We defined the Stochastic Taylor Expansion as a generalization
of Taylor's theorem, and utilized point process theory to obtain the MTPE $%
\hat{f}$ of the true function $f$. We further proved that the MTPE converges
to the true function uniformly almost surely.

Our simulations illustrated that the proposed methodology is able to recover
the true function consistently within the range of the data (interpolation).
In terms of extrapolation, as we observed in our simulation section, moving
away from the observed data the estimated function performance diminishes,
but this is a common issue when forecasting. However, if the true function
is analytic and it's Taylor expansion requires a finite number of terms, the
methodology is able to provide a near perfect fit even outside the range of
data, provided that the sample size is large. This fact indicates that the
methods proposed herein can form the fundamental framework that will allow
us to perfectly recover a function even outside the range of the observed
data. Further details on this approach will be forthcoming.

The model used for the underlying Poisson point process involved a flexible
mixture of normals intensity function. Generalizations to other point
process models, such as Gibbs and Cox point processes, can also be used in
order to introduce dependence or conditional independence, respectively,
between the coefficients and powers of the STE. From a mathematical point of
view, the MTPE can be used to create a space of functions, that was proven
to be dense in the space of continuous functions. Additional investigation
is required in order to connect the created MTPE function space with
reproducing kernel Hilbert spaces. These are subjects of great interest that
will be investigated elsewhere.

\bibliographystyle{plainnat}
\bibliography{TaylorIPPRef}

\end{document}